\definecolor{dullmagenta}{rgb}{0.4,0,0.4}   
\definecolor{darkblue}{rgb}{0,0,0.4}
\newcommand{\bla}{\color{black}}
\newcommand{\eq}[1]{\eqref{#1}}
\theoremstyle{remark}
\theoremstyle{definition}
\theoremstyle{plain}
\newtheorem{theorem}{Theorem}[section]
\newtheorem{proposition}[theorem]{Proposition}
\newtheorem{lemma}[theorem]{Lemma}
\newtheorem{corollary}[theorem]{Corollary}
\newtheorem{definition}[theorem]{Definition}
\newtheorem{remark}[theorem]{Remark}
\newtheorem*{remark*}{Remark}
\numberwithin{equation}{section}
\DeclareMathOperator{\supp}{supp}
\DeclareMathOperator{\tr}{tr}
\DeclareMathOperator{\dist}{dist}
\newcommand{\pr}{\prime}
\newcommand\R{\mathbb R}
\newcommand\N{\mathbb N}
\newcommand\C{\mathbb C}
\newcommand\Z{\mathbb Z}
\renewcommand\P{\mathbb P}
\newcommand\E{\mathbb E}
\renewcommand\L{\mathrm{L}}
\newcommand\bL{\boldsymbol{L}}
\newcommand{\cA}{\mathcal{A}}
\newcommand{\cB}{\mathcal{B}}
\newcommand{\cC}{\mathcal{C}}
\newcommand{\cE}{\mathcal{E}}
\newcommand{\cF}{\mathcal{F}}
\newcommand{\I}{\mathcal{I}}
\newcommand{\cJ}{\mathcal{J}}
\newcommand{\cL}{\mathcal{L}}
\newcommand{\cM}{\mathcal{M}}
\newcommand{\cP}{\mathcal{P}}
\newcommand{\cS}{\mathcal{S}}
\newcommand{\cT}{\mathcal{T}}
\newcommand{\cU}{\mathcal{U}}
\newcommand{\cW}{\mathcal{W}}
\newcommand\Chi{\raisebox{.2ex}{$\chi$}}
\newcommand{\abs}[1]{\left\lvert #1 \right\rvert}
\newcommand{\norm}[1]{\left\lVert #1 \right\rVert}
\newcommand{\scal}[1]{\left\langle #1 \right\rangle}
\newcommand{\set}[1]{\left\{ #1 \right\}}
\newcommand{\pa}[1]{\left( #1 \right)}
\newcommand{\up}[1]{^{(#1)}}
\newcommand{\x}{\mathbf{x}}
\newcommand{\bolda}{\mathbf{a}}
\newcommand{\boldx}{\mathbf{x}}
\newcommand{\boldb}{\mathbf{b}}
\newcommand{\boldu}{\mathbf{u}}
\newcommand{\boldv}{\mathbf{v}}
\newcommand{\y}{\mathbf{y}}
\newcommand{\boldy}{\mathbf{y}}
\newcommand{\boldk}{\mathbf{k}}
\newcommand{\bom}{\boldsymbol{\omega}}
\newcommand{\blam}{\boldsymbol{\Lambda}}
\newcommand{\bTheta}{\boldsymbol{\Theta}}
\newcommand{\om}{\omega}
\newcommand\eps{\varepsilon}
\newcommand\La{\Lambda}
\newcommand{\boldlambda}{\mathbf{\Lambda}}
\newcommand{\suitc}{\Xi_{L,\ell}}
\newcommand{\ml}{m_\ell}
\newcommand{\Bl}{\Bigl}
\newcommand{\Br}{\Bigr}
\newcommand{\bX}{\textbf{X}}
\newcommand\beq{\begin{equation}}
\newcommand\eeq{\end{equation}}
\newcommand{\qtx}[1]{\quad\text{#1}\quad}
\newcommand{\mqtx}[1]{\;\;\text{#1}\;\;}
\newcommand{\sqtx}[1]{\;\text{#1}\;}
\newcommand{\paS}{{S}}
\newcommand{\pas}{{S}}
\newcommand{\pat}{(2)}
\newcommand{\PET}{(\theta, E)}
\newcommand{\alphaL}{\frac{L}{3}}
\newcommand{\alphal}{\frac{\ell}{3}}
\theoremstyle{remark}
\newcommand{\be}{\begin{equation}}
\newcommand{\ee}{\end{equation}}
\newcommand{\ba}{\begin{array}}
\newcommand{\ea}{\end{array}}
\newcommand{\bal}{\begin{align}}
\newcommand{\eal}{\end{align}}
\newcommand{\bea}{\begin{eqnarray}}
\newcommand{\eea}{\end{eqnarray}}
\newcommand{\bee}{\begin{eqnarray*}}
\newcommand{\eee}{\end{eqnarray*}}
\renewcommand{\L}{\Lambda}
\newcommand{\angles}[1]{\langle #1 \rangle}
\renewcommand{\*}{{\sharp}}
\newcommand{\Ccinf}{ C_{c}^{\infty}   }
\newcommand{\Ccinfp}{ C_{c, +}^{\infty}   }
\newcommand{\vertiii}[1]{{\left\vert\kern-0.25ex\left\vert\kern-0.25ex\left\vert #1
    \right\vert\kern-0.25ex\right\vert\kern-0.25ex\right\vert}}
\begin{document}

\title[Characterization of the metal-insulator transport transition]{Characterization of the metal-insulator transport  transition for the two-particle Anderson model}

\author{Abel Klein}
\address[A. Klein]{University of California, Irvine;
Department of Mathematics;
Irvine, CA 92697-3875,  USA}
 \email{aklein@uci.edu}

 \author{Son T. Nguyen}
\address[S. T. Nguyen]{Coastline Community College,  Newport Beach, CA 92663, USA}
\email{sondgnguyen1@gmail.com}

 \author{Constanza Rojas-Molina}
\address[C. Rojas-Molina]{Institut f\"ur Angewandte Mathematik, Rheinische Friedrich-Wilhelms-Universit\"at,
Endenicher Allee 60, 53115 Bonn, Germany}
 \email{crojasm@uni-bonn.de}

\thanks{A.K. was  supported in part by the NSF under grant DMS-1001509.}
\thanks{C.R-M. was supported in part by the European Community FP7 Programme under grant agreement
number 329458 and by the Collaborative Research Center 1060 from the German Research Foundation. C.R-M. acknowledges the support and hospitality of the Isaac Newton Institute for Mathematical Sciences during the programme Periodic and Ergodic Spectral Problems.}

\date{Version of \today}

 \begin{abstract}
We extend to the two-particle Anderson model the characterization of the metal-insulator transport transition obtained in the one-particle setting by Germinet and Klein.
We show that, for any fixed number of particles, the slow spreading of wave packets in time implies the initial estimate of a modified version of the Bootstrap Multiscale Analysis.
In this new version, operators are restricted to boxes defined with respect to the pseudo-distance in which we have the slow spreading.  At the bottom of the spectrum, within the regime of one-particle dynamical localization,
we show that this modified multiscale analysis yields dynamical localization for the two-particle Anderson model, allowing us to obtain  a characterization of the metal-insulator  transport  transition for the two-particle Anderson model at the bottom of the spectrum.
\end{abstract}

\maketitle

 \tableofcontents

\section{Introduction}\label{S:intro}

Localization, or absence of transport, is considered to be a characteristic feature of random media. In the well known
one-particle Anderson model, it is known to appear at either high disorder or low energies. In recent years the multi-particle Anderson model has attracted great interest, as it is expected that localization persists in the presence of inter-particle interactions. This has been shown to be the case for short-range interactions, using either a multiscale analysis (MSA) or the fractional moment method \cite{CS1,CS2,CS3,AWmp, BCSS, CBS,KlN1,KlN2}. More recently, interactions of exponential decay and even fast polynomial decay have been considered in \cite{FW}.

In this paper we contribute to the efforts to understand the regime of localization in the multi-particle setting by extending the work of Germinet and Klein \cite{GKduke} on the characterization of the metal-insulator transport transition for the Anderson model. In the aforementioned work, the authors  used transport exponents to measure  the spreading of wave packets,  spliting the spectrum of the operator into two complementary regions: the \emph{metallic region}, where non trivial transport occurs, and the \emph{insulator region}, where transport is suppressed by dynamical localization. Germinet and Klein showed that the insulator region, where the transport exponent is null, is equivalent to the set of energies where the MSA can be applied, making this method their main tool of analysis. Moreover, they gave a lower bound for the transport exponent in the metallic region, which implies that the mobility edge, i.e., the energy that separates the regions of localization and delocalization, is a point of discontinuity for the transport exponent.

 In our work we follow the approach of  \cite{GKduke}, where the proof consists of two parts: on one hand, slow transport, identified as slow growth of the time-averaged spreading of wave packets, implies the starting hypothesis of the MSA; on the other hand, the MSA implies dynamical localization and therefore null transport. Therefore, if there is spreading of wave packets, this must be at a rate that is above a minimal amount. In the one-particle setting, this characterization proved to be crucial in the study of the Landau Hamiltonian perturbed by a random potential \cite{GKS}, one of the few models where the Anderson transport transition has been rigorously proved, together with the Anderson model in the  Bethe lattice \cite{Klmrl,KlBethe,ASW,FHS,AWres}.

In the multi-particle setting dynamical localization is obtained  with respect to the Hausdorff pseudo-distance in the multi-particle space, but the  MSA (or fractional moment method) is based on the usual boxes in the multi-particle space, i.e., boxes defined by the norm given by the maximum of the absolute value of the coordinates \cite{CS1,CS2,CS3,AWmp, BCSS, CBS,KlN1,KlN2}.  The initial step for the MSA is a statement about the finite-volume restrictions of the random Schr\"odinger operators to usual boxes, but the statement of dynamical localization is only with respect to the Hausdorff pseudo-distance.  For two particles, i.e., $N=2$, the Hausdorff and symmetrized pseudo-distances are the same, so the  statement of dynamical localization can be seen as being with respect to the symmetrized pseudo-distance.

In order to extend the characterization of Germinet and Klein to the multi-particle setting, we first show that the slow spreading of the $N$-particle wave-packets for large times, with respect to either the Hausdorff or symmetrized pseudo-distance, implies the initial step of a modified MSA in which the finite-volume restrictions of operators are defined using boxes with respect to the relevant pseudo-distance. This  holds for any fixed number of particles $N$, with either the Hausdorff or symmetrized pseudo-distance, since the Wegner estimate holds on boxes defined by either pseudo-distance.

To obtain the characterization of the metal-insulator transition as in \cite{GKduke}, we need to perform a variant of the  Bootstrap MSA (see \cite{KlN1} for the Bootstrap MSA for multi-particle Anderson models) using boxes defined by the pseudo-metric. The deterministic part of the MSA can be done for boxes defined by the symmetrized pseudo-distance, but the probabilistic part of this modified MSA would require a Wegner estimate between boxes that are far apart in the symmetrized pseudo-distance. For the multi-particle Anderson model where the single-site probability distribution is only assumed to have a bounded density with compact support such a Wegner estimate is only known between boxes that are far apart in the Hausdorff distance, and hence we only have it for the symmetrized distance if $N=2$. For this reason we restrict ourselves to two particles, for which we prove dynamical localization with respect to the symmetrized pseudo-distance, with the initial step for the MSA being a statement about the finite-volume restrictions of the random  operators to symmetrized boxes. Thus for two particles we obtain a characterization of the metal-insulator transition in the spirit of \cite{GKduke} for quantities defined with respect to the symmetrized pseudo-distance.

Recently Chulaevsky obtained a Wegner estimate between boxes that are far apart in the symmetrized distance for all $N$, for a certain class of single site probability distributions \cite{Ch2} (see also \cite{Ch1}).  Under Chulaevsky's assumptions our modified Bootstrap MSA for boxes in the symmetrized   pseudo-distance can be performed for $N$-particles, where $N$ is fixed but arbitrary, and our characterization of the metal-insulator transition  holds for $N$-particles.

The article is organized as follows: in the next section we set the notation and state the main results. In Section \ref{S:proof1} we prove that slow transport implies the initial step of a modified Bootstrap MSA. Section~\ref{S:MSA} is devoted to the Bootstrap Multiscale Analysis for symmetrized two-particle boxes.   In Subsection \ref{S:preMSA} we explain the modifications to the $N$-particle MSA of \cite{KlN1,KlN2} that are necessary in our setting,  and give details on how this modifications are implemented in Subsection \ref{S:MSA33}. In Section \ref{S:dynloc} we show that whenever this MSA can be performed in an $N$-particle setting, it yields dynamical localization with respect to the symmetrized pseudo-distance, which completes the proof of the main result. At the same time, we are able to improve the conclusions of  \cite[Corollary 1.7]{KlN1} and \cite[Theorem 1.2]{KlN2} by removing the dependency on the initial position of the particles in the statement of dynamical localization. Appendix~\ref{apW} discusses Wegner estimates for multi-particle rectangles and boxes defined by  the relevant pseudo-distances. In Appendix~\ref{apCT} we state and prove a Combes-Thomas estimate for restrictions of discrete Schr\"odinger operators to arbitrary subsets.   In  Appendix~\ref{appdiscrete}  we comment on auxiliary results known in the one-particle case, that are generalized to the multi-particle setting and an arbitrary pseudo-distance.

\section{Main definitions and results}
We start by defining the $n$-particle  Anderson model.

 \begin{definition}\label{defAndmodel}
The $n$-particle Anderson model is  the  random Schr\"odinger
operator on
$\ell^{2}(\mathbb{Z}^{nd})$ given by
\beq \label{AndH}
H_{\bom}^{(n)}: =  -\Delta^{(n)} + \,V_{\bom}^{(n)} + U ,
\eeq
where:
\begin{enumerate}
\item
$\Delta^{(n)}$ is the discrete $nd$-dimensional  Laplacian operator.
\item
$\bom=\{ \omega_x \}_{x\in
\Z^d}$ is a family of independent
identically distributed random
variables  whose  common probability
distribution $\mu$ has a bounded density $\rho$ and satisfies $\set{0, M_+} \subset \supp \mu \subseteq [0, M_+]$ for some $M_+ > 0$.
\item
$V_{\bom}^{(n)}$ is the random potential  given by
\beq
V_{\bom}^{(n)}(\x)= \sum_{i = 1, ..., n} V_{\bom}^{(1)}(x_i)\qtx{for} \x =(x_1, ..., x_n) \in \mathbb{Z}^{nd} ,
\eeq
where $V_{\bom}^{(1)}(x)=\omega_{x}$ for every $x \in \mathbb{Z}^{d}$.
\item
$U$ is a potential governing the short range interaction between the $n$ particles.  We take
\beq\label{2body}
U(\x) = \sum_{1 \leq i < j \leq n } \widetilde{U}(x_i - x_j) ,
\eeq
where $\widetilde{U}\colon \Z^d  \to [0, \infty)$, $\widetilde{U}(y)= \widetilde{U}(-y)$, and  $\widetilde{U}(y) = 0$ for $\norm{y}_\infty > r_0$ for some  $r_0 \in [1, \infty)$. \end{enumerate}
\end{definition}

The $n$-particle Anderson model is a $\Z^d$-ergodic random Schr\"odinger operator. It follows from standard arguments (cf. \cite[Proposition V.2.4]{CL}) that there exists a bounded set $\Sigma^{(n)}\subset \R$ such that $\sigma(H_{\bom}^{(n)} )=\Sigma^{(n)}$ almost-surely, where $\sigma(H)$ denotes the spectrum  of the operator $H$.
Since $\sigma(-\Delta^{(n)})=[0,4nd]$, and  both $V^{(n)}_{\bom}$ and $U$ are non-negative, we have $\Sigma^{(n)} \subset[0,+\infty)$.

 We will generally omit ${\bom}$ from the notation, and
use the following notation and definitions:
\begin{enumerate}

\item  Given $x = (x_1, \ldots, x_d) \in \R^{d}$, we set $\norm{x}=\norm{x}_\infty  := \max \{\abs{x_1}, \ldots, \abs{x_d} \}$.  If  $\bolda = (a_1, \ldots, a_n) \in \R^{nd}$, we let
  $\norm{\bolda} := \max \{ \norm{a_1}, \ldots, \norm{a_n} \}$,   $\cS_{ \bolda} = \bigl \{a_1, \,...,\, a_n   \bigr\}$, and  $\scal{\bolda} := \scal{\norm{\bolda}}$, where $\scal{t}:= \sqrt{ 1+ t^{2}}$ for $t\ge 0$.

\item For $n \in \N$, we set $\cP_n$ to be the set of all permutations of $n$ elements. Moreover, if $\x = \pa{x_1,...,x_n} \in \R^{nd}$ and $\pi \in \cP_n$, we write
$
\pi(\x) = \pa{x_{\pi(1)}, \dots, x_{\pi(n)}} $.

\item  We introduce  one distance and two pseudo-distances (which we will incorrectly call distances) in $\R^{nd}$.  They are defined as follows for $\bolda, \boldb  \in \R^{nd}$:
\begin{enumerate}

\item  The norm distance: \quad
$
\dist_\infty ( \bolda, \, \boldb)=\dist ( \bolda, \, \boldb):=\norm{\bolda - \boldb}
$.

\item  The symmetrized distance:
\beq
\dist_{S} ( \bolda, \, \boldb):= \min_{\pi \in \cS_n} \set{\norm{\pi(\bolda) - \boldb}} = \min_{\pi \in \cS_n} \set{\norm{\pi(\boldb) - \bolda}} .
\eeq

\item The Hausdorff distance:
\begin{align}\label{Hausddist}
\dist_{H} ( \bolda, \, \boldb)&:=  \dist_{H}(\cS_{ \bolda},\cS_{ \boldb})\\ \notag & := \max \Bl\{ \max_{x \in \cS_{ \bolda}} \, \min_{y \in \cS_{ \boldb}} \norm{x - y} \, , \,   \max_{y \in \cS_{ \boldb}} \, \min_{x \in \cS_{ \bolda}} \norm{x - y}  \Br\}\\
&= \max \Bl\{ \max_{x \in \cS_{ \bolda}} \, \dist(x, \,\,\cS_{ \boldb}) \, , \,   \max_{y \in \cS_{ \boldb}} \, \dist(y, \,\,\cS_{ \bolda})  \Br\}.
\notag
\end{align}

\end{enumerate}

We will write $\dist_\* ( \bolda, \, \boldb)$, where  $\*  \in \{\infty,S,H\}$, to denote  either  one of these distances, as appropriate.  Note that
\beq
\dist_{H} ( \bolda, \, \boldb)\le \dist_{S} ( \bolda, \, \boldb) \le \dist_{\infty} ( \bolda, \, \boldb).
\eeq
If $n=2$, we have $\dist_{H} ( \bolda, \, \boldb)= \dist_{S} ( \bolda, \, \boldb)$ for all $\bolda, \boldb  \in \R^{2d}$.

\item The $n$-particle $\*$-box of side $L\geq 1$, where  $\* \in \{\infty,S,H\}$,  centered at $\x\in\R^{nd}$, is defined by
\begin{align} \label{*boxes}
\blam^{(n)}_{\*; L} (\x) &= \set{\y \in \Z^{nd};\,  \dist_{\*} \pa{\x, \y} \le \tfrac{L}{2}}.
 \end{align}
    $\blam^{(n)}_{\infty; L} (\x)=\blam^{(n)}_{ L} (\x)$ is the $n$-particle box of side $L$ centered at $\x$, $\blam^{(n)}_{\paS; L} (\x)$ is the symmetrized  $n$-particle box of side $L$, and $\blam^{(n)}_{H; L} (\x)$ is the Hausdorff  $n$-particle box of side $L$.  Note that
 \begin{align} \label{boxid}
 \blam^{(n)}_{\paS; L} (\x)    =      \bigcup_{\pi \in \cP_n} \blam_{L} (\pi(\x) ) \subset \blam^{(n)}_{H; L} (\x)    \subset    \bigcup_{\y\in \cS_{\x}^n} \blam_{L} (\y ).
 \end{align}
 We also define $n$-particle $\*$-rectangles for  $\* \in \{\infty,S\}$, centered at   $\x\in\R^{nd}$ with sides $\bL=(L_1,L_2,\ldots,L_n)\in [1,\infty)^n$,  by
 \begin{align}\label{rectangle}
  {\blam}_{\bL}^{(n)} (\x)&= \prod_{j=1}^n \La_{L_j}(x_j),  \\ \notag
  {\blam}_{\paS; \bL}^{(n)} (\x)&= \bigcup_{\pi \in \cP_n} \pi\pa{ {\blam}_{\bL}^{(n)} (\x)}
  = \bigcup_{\pi \in \cP_n}  \prod_{j=1}^n \La_{L_{\pi(j)}}(x_{\pi(j)}) .
  \end{align}

\item  We denote by $\{\delta_\x;\,\x\in\Z^{nd}\}$ the canonical orthonormal base of $\ell^2(\Z^{nd})$.

\item  Given $\blam_1 \subset \blam_2 \subset \Z^{nd}$, we set
\[\partial^{{\blam}_2}{{\blam}_1} =\{(\boldu,\boldv)\in \blam_1\times(\blam_2\setminus\blam_1)\,; \norm{\boldu-\boldv}_1=1 \}, \]
where $\norm{\cdot}_1$ is the graph-norm, (i.e., the $1$-norm) in $\Z^{nd}$,
\[\partial_+^{{\blam}_2}{{\blam}_1}=\{\boldv\in {\blam}_2 \setminus {\blam}_1\,; (\boldu,\boldv)\in\partial^{{\blam}_2}{{\blam}_1} \,\text{for some}\,\boldu\in\blam_1 \}, \]
\[ \partial_-^{{\blam}_2}{{\blam}_1}=\{\boldu\in \blam_1\,; (\boldu,\boldv)\in\partial^{{\blam}_2}{{\blam}_1} \,\text{for some}\,\boldv\in{\blam}_2 \setminus {\blam}_1\}.\]
If $\blam_2=\Z^{nd}$, it may be ommitted from the notation.

 \item
 Given $\bTheta\subset \Z^{nd} $, we define the finite-volume operator $H_{\bTheta}^{(n)}$ as the restriction of $ \chi_{\bTheta}H^{(n)}\chi_{\bTheta}$ to $\ell^2(\bTheta)$.  If $z\notin \sigma(H_{\bTheta}^{(n)})$, we set $G_{\bTheta}(z)=(H_{\bTheta}^{(n)}-z)^{-1}$ and $G_{\bTheta}(z,\boldu,\y)=\angles{\delta_\boldu,(H_{\bTheta}^{(n)}-z)^{-1}\delta_\y}$ for $\boldu,\boldy \in \bTheta$.

\item  Given an open interval $I \subseteq \R$, we denote  $\Ccinf (I)$ to be the class of real-valued infinitely differentiable functions on $\R$ with compact support contained in $I$, with $\Ccinfp (I)$ being the subclass of nonnegative functions.

\end{enumerate}

The  random $\*$-moment of order $p \ge 0$ at time $t$ for the time evolution, initially  spatially  localized in the  $n$-particle box of side one around the point $\y \in \Z^{nd}$,  and localized in energy by a function $g \in \Ccinfp (\R)$,  is given by
 \beq \label{randommo}
M_{\bom}^{(n,\*)} \pa{ p, g, t, \y} = \sum_{\boldu \in\Z^{nd}} \scal{\dist_\*(\y, \boldu)}^p \abs{ \scal{\delta_{\boldu}, e^{-itH_{\bom}^{(n)}} g\pa{H_{\bom}^{(n)}}  \delta_{\y} } }^2.
\eeq
The expectation of the random $\*$-moment is given by
\beq \label{exprandommo}
\textbf{M}^{(n,\*)} \pa{ p, g, t, \y} = \E \pa{M_{\bom}^{(n,\*)} \pa{ p, g, t, \y} },
\eeq
  and its time-averaged expectation is defined as
  \beq \label{timeavg}
  \cM^{(n,\*)} \pa{ p, g, T,  \y  } = \frac{2}{T} \int_{0}^{\infty} e^{-\frac{2t}{T}} \textbf{M}^{(n,\*)} \pa{ p, g, t, \y} dt,
  \eeq

 \begin{remark}Note that both ${\normalfont\textbf{M}}^{(n,\*)} \pa{ p, g, t, \y}$ and $ \cM^{(n,\*)} \pa{ p, g,  T, \y  } $ are invariant under the action of $\Z^d$, that is, for all $a\in \Z^d$ we have
  ${\normalfont\textbf{M}}^{(n,\*)} \pa{ p, g, t, \y}={\normalfont\textbf{M}}^{(n,\*)} \pa{ p, g, t, \tau_a(\y)}$  and  $\cM^{(n,\*)} \pa{ p, g, T, \y}=\cM^{(n,\*)} \pa{ p, g,T, \tau_a(\y)}$,  where  $\tau_a(\y)=(y_1-a,..,y_n-a)$.
 \end{remark}

For $p\geq 0$ and non-zero $g\in C_{c,+}^{\infty}(\R)$, we consider the upper and lower transport exponents, defined by
\begin{align}
\label{beta2pm} \beta_{(n,\*)}^+(p,g)&:=\limsup_{T\rightarrow \infty} \frac{\log \displaystyle\sup_{\y\in\Z^{nd}} \cM^{(n,\*)}(p,g,T,\boldy)}{p\log T}, \\ \notag
\beta_{(n,\*)}^-(p,g)&:=\liminf_{T\rightarrow \infty} \frac{\log\displaystyle\sup_{\y\in\Z^{nd}} \cM^{(n,\*)}(p,g,T,\boldy)}{p\log T}.
\end{align}

Note that these quantities are well defined.   Ergodicity implies that either  $g(H_{\bom}^{(n)})\neq 0$ almost surely or  $g(H_{\bom}^{(n)})=0$ almost surely.  If $g(H_{\bom}^{(n)})\neq 0$ almost surely, we have  $\sup_{\y\in\Z^{nd}} \cM^{(n,\*)}(p,g,T,\boldy)>0$ (see Remark \ref{posi}). If $g(H_{\bom}^{(n)})=0$ almost surely, we set  $\beta_{(n,\*)}^\pm(p,g)=0$.

 Following \cite[Eq. 2.16 - 2.20]{GKduke}, we define the $p$-th upper and lower transport exponents in an open interval $I$, and the  $p$-th local upper and lower transport exponents at an energy $E\in\R$,  by
\begin{align} \beta_{(n,\*)}^\pm(p,I)&=\sup_{g\in C_{c,+}^{\infty}(I)}\beta_{(n,\*)}^\pm(p,g),\\ \notag
 \beta_{(n,\*)}^\pm(p,E)& =\inf_{I\ni E, \ I \text{ open interval}} \beta_{(n,\*)}^\pm(p,I).
 \end{align}
The asymptotic upper and lower transport exponents and the local asymptotic transport exponents are defined in the same way as in \cite[Eq. 2.16-2.20]{GKduke}, using the fact that the exponents defined above are non-decreasing in $p$ (see Proposition \ref{propbeta}),
\begin{align} \beta_{(n,\*)}^\pm(I)&=\lim_{p\rightarrow\infty}\beta_{(n,\*)}^\pm(p,I)=\sup_p\beta_{(n,\*)}^\pm(p,I), \\
\notag \beta_{(n,\*)}^\pm(E)&= \lim_{p\rightarrow\infty}\beta_{(n,\*)}^\pm(p,E)=\sup_p\beta_{(n,\*)}^\pm(p,I).
\end{align}

In the following, we list  properties of the (random) moments and their consequences for the transport exponents. { These statements are proven similarly to their continuous analogs, proven in \cite{GKduke}.}

\begin{proposition}\label{prop3.1GKduke} Let $g\in C_{c,+}^\infty(\R)$, $g(H_\om^{(n)})\neq 0$ with probability one. We have, for every $\boldy\in\Z^{nd}$ and $p\geq 0$,
\begin{align}
 0 & \leq M_{\om}^{(n,\*)}(0,g,0,\y)\leq M_{\om}^{(n,\*)}(p,g,t,\y)\leq C_{d,g,p}\angles{t}^{\lfloor p+nd \rfloor+2} \label{m1}\\
 0 & \leq \bold M^{(n,\*)}(0,g,0,\y)\leq \bold M^{(n,\*)}(p,g,t,\y)\leq C_{d,g,p}\angles{t}^{\lfloor p+nd \rfloor+2}\label{m2}\\
 0 & \leq \bold M^{(n,\*)}(0,g,0,\y)\leq \cM^{(n,\*)}(p,g,T,\y)\leq C_{d,g,p}\angles{T}^{\lfloor p+nd \rfloor+2},\label{m3}
\end{align}
where $\lfloor x \rfloor$ denotes the largest integer less than or equal to $x\in\R$.
\end{proposition}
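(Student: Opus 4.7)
The leftmost inequalities in \eqref{m1}--\eqref{m3} are elementary. Non-negativity of each summand gives $M_{\bom}^{(n,\*)}(p,g,t,\y)\ge 0$; and since $\scal{\dist_\*(\y,\boldu)}^{p}\ge 1$ for $p\ge 0$, Parseval together with the unitarity of $\e^{-itH_{\bom}^{(n)}}$ yields
\begin{equation*}
M_{\bom}^{(n,\*)}(p,g,t,\y)\ge\sum_{\boldu\in\Z^{nd}}\abs{\scal{\delta_\boldu,\e^{-itH_{\bom}^{(n)}}g(H_{\bom}^{(n)})\delta_\y}}^{2}=\norm{g(H_{\bom}^{(n)})\delta_\y}^{2}=M_{\bom}^{(n,\*)}(0,g,0,\y),
\end{equation*}
which transfers to \eqref{m2}--\eqref{m3} by Fubini upon taking expectation and time-averaging.

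For the upper bounds, the comparisons $\dist_H\le\dist_S\le\dist_\infty$ reduce matters to $\*=\infty$; \eqref{m2} and \eqref{m3} then follow from \eqref{m1} by the elementary estimate $\tfrac{2}{T}\int_0^\infty\e^{-2t/T}\scal{t}^q\,dt\le C_q\scal{T}^q$. Let $X$ denote the position operator on $\ell^2(\Z^{nd})$ defined by $X\delta_\boldu=\boldu\delta_\boldu$, set $Y_\y:=\scal{X-\y}$, and pick $s:=(\lfloor p+nd\rfloor+2)/2$, so that $2s-p>nd$. Then $M_{\bom}^{(n,\infty)}(p,g,t,\y)=\norm{Y_\y^{p/2}\e^{-itH_{\bom}^{(n)}}g(H_{\bom}^{(n)})\delta_\y}^2$, and applying the trivial bound $\norm{\cdot}_{\ell^\infty}\le\norm{\cdot}_{\ell^2}$ to the sequence $Y_\y^{s}\phi$ gives
\begin{equation*}
\norm{Y_\y^{p/2}\phi}^{2}\le\Bl(\sup_{\boldu\in\Z^{nd}}\scal{\norm{\boldu-\y}}^{2s}\abs{\phi(\boldu)}^{2}\Br)\sum_{\boldu\in\Z^{nd}}\scal{\norm{\boldu-\y}}^{p-2s}\le C_{d,p}\norm{Y_\y^{s}\phi}^{2},
\end{equation*}
with $C_{d,p}:=\sum_{\boldu}\scal{\norm{\boldu}}^{p-2s}<\infty$ independent of $\y$. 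The task thus reduces to establishing $\norm{Y_\y^{s}\e^{-itH_{\bom}^{(n)}}g(H_{\bom}^{(n)})\delta_\y}^{2}\le C_{d,g,p}\scal{t}^{2s}$ uniformly in $\bom,\y$.

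This last bound is a standard Heisenberg-type estimate. Because $-\Delta^{(n)}$ has hopping range one and $V_{\bom}^{(n)}+U$ is diagonal, the finite-difference inequality $|Y_\y(\boldu)^{k}-Y_\y(\boldv)^{k}|\le k\,\scal{\norm{\boldu-\y}}^{k-1}$ for nearest-neighbor pairs $\boldu\sim\boldv$ yields, uniformly in $\bom$, the polynomial commutator bound $\norm{[Y_\y^{k},H_{\bom}^{(n)}]\phi}\le C_{n,d,k}\norm{Y_\y^{k-1}\phi}$. Iterating the Duhamel identity
\begin{equation*}
Y_\y^{k}\e^{-itH_{\bom}^{(n)}}\phi=\e^{-itH_{\bom}^{(n)}}Y_\y^{k}\phi+i\int_{0}^{t}\e^{-i(t-\tau)H_{\bom}^{(n)}}[Y_\y^{k},H_{\bom}^{(n)}]\e^{-i\tau H_{\bom}^{(n)}}\phi\,d\tau
\end{equation*}
then gives, by induction on $k$, $\norm{Y_\y^{k}\e^{-itH_{\bom}^{(n)}}\phi}\le P_{k}(\scal{t})\sum_{j=0}^{k}\norm{Y_\y^{j}\phi}$ for every integer $k\ge 0$, with $P_k$ a polynomial of degree $k$ depending only on $n,d,k$; the bound extends to the real exponent $s$ via H\"older interpolation $\norm{Y_\y^{s}\psi}\le\norm{Y_\y^{k}\psi}^{s/k}\norm{\psi}^{1-s/k}$ with $k=\lceil s\rceil$. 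For $g\in\Ccinf(\R)$, the Helffer--Sj\"ostrand formula combined with the Combes--Thomas estimate of Appendix~\ref{apCT} produces the rapid decay $\abs{\scal{\delta_\boldu,g(H_{\bom}^{(n)})\delta_\y}}\le C_{g,N}\scal{\norm{\boldu-\y}}^{-N}$ for every $N\in\N$, uniformly in $\bom,\y$, so that $\norm{Y_\y^{k}g(H_{\bom}^{(n)})\delta_\y}\le C_{d,g,k}$ uniformly. Substituting $\phi=g(H_{\bom}^{(n)})\delta_\y$ into the preceding chain proves \eqref{m1}. The main technical point is the polynomial commutator bound, which reduces to the elementary finite-difference estimate together with the Schur test on the resulting nearest-neighbor kernel; every other step is algebraic or classical.
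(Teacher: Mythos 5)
Your proof is correct, and it follows essentially the argument the authors intend: the paper itself gives no proof but defers to the continuous analogue in Germinet--Klein, and your argument is the natural discrete transcription of that scheme (reduction to $\*=\infty$ via $\dist_H\le\dist_S\le\dist_\infty$, weight reduction through $\norm{\cdot}_{\ell^\infty}\le\norm{\cdot}_{\ell^2}$, a ballistic bound by Duhamel plus a commutator estimate exploiting that only the nearest-neighbor Laplacian fails to commute with $Y_\y^k$, interpolation to non-integer exponents, and rapid decay of $\scal{\delta_\boldu,g(H_{\bom}^{(n)})\delta_\y}$ via Helffer--Sj\"ostrand together with the Combes--Thomas bound of Appendix~\ref{apCT}, all uniform in $\bom$ since $H_{\bom}^{(n)}$ is uniformly bounded). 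Two cosmetic points: the Duhamel identity should carry $-i$ (or $[H_{\bom}^{(n)},Y_\y^k]$) rather than the sign you wrote, though this is immaterial once norms are taken; and your constant $C_{d,p}=\sum_{\boldu\in\Z^{nd}}\scal{\norm{\boldu}}^{p-2s}$ of course also depends on $n$, so write $C_{n,d,p}$. Worth noting, though not required: in the lattice setting a shortcut to the ballistic bound is available that avoids the Duhamel--commutator machinery entirely --- since $-\Delta^{(n)}+V_{\bom}^{(n)}+U$ has hopping range one and is uniformly bounded, $\scal{\delta_\boldu,(H_{\bom}^{(n)})^m\delta_\boldv}=0$ for $\norm{\boldu-\boldv}_1>m$, so the power series for $e^{-itH_{\bom}^{(n)}}$ gives the light-cone estimate $\abs{\scal{\delta_\boldu,e^{-itH_{\bom}^{(n)}}\delta_\boldv}}\le\min\set{1,\,(C\abs{t})^{\norm{\boldu-\boldv}_1}/\norm{\boldu-\boldv}_1!}$, from which the polynomial growth of $\norm{Y_\y^{s}e^{-itH_{\bom}^{(n)}}g(H_{\bom}^{(n)})\delta_\y}$ follows directly after summing against the rapidly decaying kernel of $g(H_{\bom}^{(n)})$. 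Either route is fine.
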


We have \cite[Proposition 3.2]{GKduke},
\begin{proposition}\label{propbeta}Let $g\in C_{c,+}^\infty(\R)$  and $E\in\R$, then
\begin{itemize}
\item[(i)]  $\beta_{(n,\*)}^\pm(p,g)$ is monotone increasing in $p\geq 0$.
\item[(ii)] $0\leq \beta_{(n,\*)}^\pm(p,g)\leq 1$.
\end{itemize}
\end{proposition}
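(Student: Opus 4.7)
The proposition is the discrete multi-particle analog of \cite[Proposition 3.2]{GKduke}, and my plan is to follow the strategy from that reference.

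For part (i), the core idea is a pointwise Jensen-type bound relating the random moments at two orders. Fix $0 < p_1 \le p_2$. For each $\bom$, $t$, and $\y$, the map $\boldu \mapsto |\scal{\delta_\boldu, e^{-itH^{(n)}_\bom} g(H^{(n)}_\bom) \delta_\y}|^2$ is a positive measure on $\Z^{nd}$ of total mass $\norm{g(H^{(n)}_\bom)\delta_\y}^2 \le \norm{g}_\infty^2$. Jensen's inequality applied to the associated normalized probability measure, with the convex function $s \mapsto s^{p_2/p_1}$, produces a pointwise bound of the form
\[
M^{(n,\*)}_{\bom}(p_1,g,t,\y)^{p_2} \le \norm{g}_\infty^{2(p_2-p_1)}\, M^{(n,\*)}_{\bom}(p_2,g,t,\y)^{p_1}.
\]
I would then transfer this bound to $\cM^{(n,\*)}$ by two further applications of Jensen: one against $d\P$ (using the concavity of $x \mapsto x^{p_1/p_2}$), and one against the probability density $\frac{2}{T}e^{-2t/T}$ on $[0,\infty)$. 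Taking $\sup_\y$, then logarithms, dividing by $p_1 p_2 \log T$, and passing to $\limsup$ or $\liminf$ as $T \to \infty$ yields $\beta^\pm_{(n,\*)}(p_1,g) \le \beta^\pm_{(n,\*)}(p_2,g)$. The case $p_1 = 0$ then follows from the non-negativity asserted in (ii).

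For the lower bound in (ii), I would use \eqref{m3}: $\sup_\y \cM^{(n,\*)}(p,g,T,\y) \ge \sup_\y \textbf{M}^{(n,\*)}(0,g,0,\y) = \sup_\y \E\norm{g(H^{(n)}_\bom)\delta_\y}^2$. Since $g(H^{(n)}_\bom) \ne 0$ almost surely, this supremum is bounded below by a strictly positive constant independent of $T$, so its logarithm divided by $p \log T$ vanishes as $T\to\infty$, giving $\beta^\pm_{(n,\*)}(p,g) \ge 0$.

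The upper bound $\beta^\pm_{(n,\*)}(p,g) \le 1$ is the main substantive step and the only place I expect real work. The plan is to establish a ballistic upper bound $\sup_\y \cM^{(n,\*)}(p,g,T,\y) \le C_{g,p}\scal{T}^p$. The Hamiltonian $H^{(n)}_\bom$ is uniformly bounded in $\bom$ because $\mu$ is supported in $[0,M_+]$ and $\widetilde U$ is a bounded finite-range function, giving $\norm{H^{(n)}_\bom} \le 4nd + nM_+ + \binom{n}{2}\norm{\widetilde U}_\infty$. Moreover $H^{(n)}_\bom$ has finite hopping range, so from the Taylor expansion of $e^{-itH^{(n)}_\bom}$ and the fact that $(H^{(n)}_\bom)^k(\boldu,\y) = 0$ whenever $\dist_\infty(\boldu,\y) > k$, one obtains super-exponentially fast decay of $|\scal{\delta_\boldu, e^{-itH^{(n)}_\bom}\delta_\y}|$ outside the ballistic cone $\dist_\infty(\boldu,\y) \lesssim |t|$. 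Combined with the inequality $\dist_\*(\boldu,\y) \le \dist_\infty(\boldu,\y)$ from \eqref{boxid} and the uniform boundedness of $g(H^{(n)}_\bom)$, this yields $M^{(n,\*)}_\bom(p,g,t,\y) \le C_{g,p}\scal{t}^p$, uniformly in $\bom$ and $\y$; expectation and the time-averaging preserve the bound for $\cM^{(n,\*)}$. Uniformity in $\y$ is further guaranteed by the $\Z^d$-translation invariance of $\cM^{(n,\*)}$ noted in the remark preceding the proposition. Taking logarithms and dividing by $p \log T$ then gives $\beta^\pm_{(n,\*)}(p,g) \le 1$.
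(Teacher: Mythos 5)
Your proposal follows the route the paper itself adopts, namely adapting \cite[Proposition 3.2]{GKduke} to the discrete multi-particle setting, and the Jensen-inequality argument for part (i) together with the lower bound in part (ii) via \eqref{m3} and Remark~\ref{posi} are correct and complete.

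There is, however, a gap in your treatment of the ballistic upper bound $\beta^\pm_{(n,\*)}(p,g)\le 1$. The super-exponential off-cone decay that you correctly derive from the Taylor series of $e^{-itH^{(n)}_{\bom}}$ and finite hopping range is a statement about the kernel of $e^{-itH^{(n)}_{\bom}}$, whereas the moment $M^{(n,\*)}_{\bom}(p,g,t,\y)$ involves $e^{-itH^{(n)}_{\bom}}g(H^{(n)}_{\bom})\delta_{\y}$. The vector $g(H^{(n)}_{\bom})\delta_{\y}$ is \emph{not} supported near $\y$, so ``uniform boundedness of $g(H^{(n)}_{\bom})$'' alone does not let you transfer the off-cone decay of the propagator to $M^{(n,\*)}_{\bom}(p,g,t,\y)$: the boundedness of $g(H^{(n)}_{\bom})$ only controls the total $\ell^2$ mass $\norm{g(H^{(n)}_{\bom})\delta_{\y}}^2\le\norm{g}_\infty^2$, which handles the contribution from the ballistic cone but gives no control over the tail. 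To close this, you need either (a) an off-diagonal decay estimate for the kernel of $g(H^{(n)}_{\bom})$ itself, for example rapid polynomial decay $\abs{\scal{\delta_{\boldu},g(H^{(n)}_{\bom})\delta_{\boldv}}}\le C_N\scal{\boldu-\boldv}^{-N}$, which for $g\in\Ccinfp(\R)$ and bounded finite-range $H^{(n)}_{\bom}$ is standard via Helffer--Sj\"ostrand together with a Combes--Thomas bound of the type in Theorem~\ref{thmCT}, or (b) more directly, to apply the ballistic estimate to the single function $f_t(\lambda)=e^{-it\lambda}g(\lambda)$, noting that $\widehat{f_t}(s)=\widehat{g}(s+t)$ and then using the Taylor-series cone estimate for $e^{isH^{(n)}_{\bom}}$ inside $f_t(H^{(n)}_{\bom})=\frac{1}{2\pi}\int_\R\widehat{f_t}(s)\,e^{isH^{(n)}_{\bom}}\,ds$. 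Either version yields $M^{(n,\*)}_{\bom}(p,g,t,\y)\le C_{g,p}\scal{t}^p$ uniformly in $\bom,\y$, after which your concluding steps go through unchanged. This is the same mechanism used in the continuous setting in \cite{GKduke}, so your proposal is correct in spirit; the missing ingredient is just the handling of the spatial spread induced by $g(H^{(n)}_{\bom})$.
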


\begin{remark}\label{posi}For simplicity, we write $g(H^{(n)}_\om)=g(H)$. Let us look at the quantity
\[ \E\pa{\norm{g(H)\delta_\boldu}^2}={\bf M}^{(n,\*)}(0,g,0,\boldu)\]
Since $\set{\delta_\boldu}_{\boldu\in\Z^{nd}}$ is an orthonormal base, we have that ${\bf M}^{(n,\*)}(0,g,0,\boldu)=0$ for all $\boldu\in\Z^{nd}$ implies $g(H)=0$ almost surely.
Let us suppose $g(H)\neq 0$. Then, there exists at least one $\boldu\in\Z^{nd}$ such that ${\bf M}^{(n,\*)}(0,g,0,\boldu)>0$, and by \eqref{m3}, $\cM^{(n,\*)}(p,g,T,u)>0$. This implies
\be \sup_{\y\in\Z^{nd}} \cM^{(n,\*)}(p,g,T,\y)>0. \ee
\end{remark}

\bigskip
\begin{definition} We say $H_{\bom}^{(n)}$ exhibits  strong dynamical localization \emph{in the $\*$-distance} in an open interval $I$ if for all $g\in C_{c,+}^\infty(I)$ we have
\be \sup_{\y\in\Z^{nd}}\E\lbrace \sup_{t\in\R} M_\omega^{(n,\*)}(p,g,t,\y) \rbrace<\infty \ee
for all $p\geq 0$. We say $H_{\bom}^{(n)}$ exhibits strong dynamical localization  in the $\*$-distance   at an energy $E\in\R$ if there exists an open interval $I$ containing $E$ such that $H_{\bom}^{(n)}$ exhibits  strong dynamical localization  in the $\*$-distance  in $I$.

We denote by $\Sigma^{(n,\*)}_{\rm SI}$ the corresponding region of strong insulation, that is, the set of energies for which $H_{\bom}^{(n)}$  exhibits strong dynamical localization in the $\*$-distance .
\end{definition}

\begin{definition}\label{inputMSA} Let  $\blam=\blam_{\*;L}^{(n)}(\y)$ be the $n$-particle $\*$-box of center $\y\in \R^{nd}$ and side-length $L$.
\begin{itemize}
\item[i)] Given $\theta>0$, $E\in \R$, we say that  $\blam$ is $(\theta,E)$-\emph{suitable} if $E\notin\sigma(H_{\blam}^{(n)})$ and
    \be \abs{G_{\blam}(\boldu,\boldv;E)}\leq L^{-\theta}
    \mqtx{for all}  \boldu \in \blam_{\*;L/3}(\y) \qtx{and} \boldv\in\partial_-\blam.
    \ee
  Otherwise, $\blam$ is $(\theta,E)$-\emph{nonsuitable}.
\item[ii)] Given $m>0$, $E\in \R$, we say that  $\blam$ is $(m,E)$-\emph{regular} if $E\notin\sigma(H_{\blam}^{(n)})$ and
    \be \abs{G_{\blam}(\boldu,\boldv;E)}\leq e^{-mL/2}  \mqtx{for all}  \boldu \in \blam_{\*;L/3}(\y) \mqtx{and} \boldv\in\partial_-\blam.
    \ee
 Otherwise, $\blam$ is  $(m,E)$-\emph{nonregular}.

 \item[iii)]  Given  $\zeta \in (0,1)$, $E\in \R$, we say  that  $\blam$ is  $(\zeta, E)$-subexponentially suitable (SES) if, and only if,
$E \notin \sigma \Bl(H_{\boldlambda}  \Br)$  and
\beq
 \abs{G_{{\blam}}(\boldu, \boldv; E)}
\leq e^{-L^{\zeta}} \mqtx{for all}  \boldu \in \blam_{\*;L/3}(\y) \mqtx{and} \boldv\in\partial_-\blam.
\eeq
 Otherwise,  ${\blam}$ is called  $(\zeta, E)$-nonsubexponentially suitable (nonSES).

    \end{itemize}
\end{definition}

 \begin{theorem} \label{mainthmdisc}
 Fix  $\*  \in \{\infty,S,H\}$, $n\in\N$, let $H_{\bom}^{(n)}$ be  the $n$-particle Anderson model and let $\I_n \subset\R$ be an open interval. Let $g_n \in \Ccinfp(\R)$ such that $g_n\equiv 1$ on $\I_{n}$.  If, for some   $\alpha\ge 0$, $\theta>2nd$, and  $p > p(\alpha,n,d,\theta)=(\theta+2nd)\alpha+6\theta+15nd $, we have
   \begin{align}\label{conddldisc}
 &\liminf_{T  \to  \infty}    \frac{1}{T^{\alpha}}   \sup_{\y \in \Z^{nd}}   \cM^{(n,\*)} \pa{ p, g_n,  T, \y  } <  \infty,
  \end{align}
 then there exists a sequence of increasing length-scales $L_k$ such that
  \be\label{ILSEk} \lim_{k\rightarrow\infty}\sup_{\y\in\R^{nd}}\P\lbrace\blam_{\*;L_k}^{(n)}(\y) \,\mbox{is }(\theta,E)\text{-nonsuitable}\rbrace =0 \mqtx{for all} E\in \mathcal I_n.
  \ee
 \end{theorem}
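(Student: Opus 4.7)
The plan is to adapt the Germinet--Klein strategy of \cite{GKduke} from the one-particle continuum setting to the $n$-particle lattice setting with boxes measured in the $\*$-distance. The first step is to extract from the hypothesis \eqref{conddldisc} a sequence $T_k\to\infty$ and a constant $C<\infty$ such that
\[
\sup_{\y \in \Z^{nd}} \cM^{(n,\*)}(p, g_n, T_k, \y) \le C T_k^{\alpha}
\qquad\text{for every } k.
\]
I would set the length scales as $L_k=\lfloor T_k^{\sigma}\rfloor$ for an exponent $\sigma\in(0,1)$ chosen so that the polynomial losses produced below still leave a net decay $L_k^{-\theta}$; the numerical threshold $p>(\theta+2nd)\alpha+6\theta+15nd$ is precisely what makes this balancing possible.

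The second step is to convert the bound on the time-averaged expectation into a pointwise-in-$t$, high-probability bound on the random moment. The definition \eqref{timeavg} of $\cM$, together with the elementary fact that $\frac{2}{T}\int_0^\infty e^{-2t/T}\mathbf{1}_{[T/4,T]}(t)\,dt$ is bounded below uniformly in $T$, implies that $\bE\!\left[M^{(n,\*)}_{\bom}(p,g_n,t,\y)\right]\le C' T_k^{\alpha}$ on a subset of $[T_k/4,T_k]$ of positive relative Lebesgue measure. Markov's inequality in $\bom$ then shows that, outside an event of probability at most a negative power of $L_k$, one has $M^{(n,\*)}_{\bom}(p,g_n,t,\y)\le T_k^{\alpha+\delta}$ for all such $t$. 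Reading this back through \eqref{randommo} and using the weight $\scal{\dist_\*(\y,\boldu)}^{-p}$, I obtain strong pointwise decay of $\abs{\scal{\delta_{\boldu},e^{-itH^{(n)}_{\bom}}g_n(H^{(n)}_{\bom})\delta_{\y}}}$ for $\boldu\in\partial_-\blam_{\*;L_k}^{(n)}(\y)$.

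Third, I would invoke the multi-particle discrete analogue of the Kennedy--Germinet--Klein deterministic lemma (Appendix~\ref{appdiscrete}) to convert the time-evolution decay into resolvent decay. The lemma expresses the spectrally smoothed resolvent matrix element as an integral in $t$ of the time-evolved matrix element, so the pointwise-in-$t$ bound above integrates to a bound of the form $\abs{\scal{\delta_{\boldu},G_{\bom}(E+i0)g_n(H^{(n)}_{\bom})\delta_{\y}}}\le L_k^{-\theta'}$, uniformly in $E\in\I_n$ and with probability at least $1-L_k^{-\delta'}$. Since $g_n\equiv 1$ on $\I_n$, the spectral smoothing is harmless on this interval.

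Finally, I would transfer this bound from the infinite-volume operator to $G_{\blam_{\*;L_k}^{(n)}(\y)}(\boldu,\boldv;E)$ via the geometric resolvent identity. The Wegner estimate for $\*$-boxes proved in Appendix~\ref{apW} controls the probability that $\dist(E,\sigma(H^{(n)}_{\blam_{\*;L_k}^{(n)}(\y)}))$ is smaller than a negative power of $L_k$, and the Combes--Thomas estimate of Appendix~\ref{apCT} controls the boundary contributions produced by the resolvent identity. Collecting all probability losses and polynomial factors and matching them against $L_k^{-\theta}$ yields \eqref{ILSEk} along the sequence $L_k$. The main obstacle I anticipate lies in the geometric mismatch between $\*$-boxes and the underlying $\ell^{\infty}$-lattice geometry: for $\*\in\{S,H\}$ points in the boundary $\partial_-\blam_{\*;L_k}^{(n)}(\y)$ can be close to the interior $\blam_{\*;L_k/3}^{(n)}(\y)$ in the $\ell^\infty$-distance while being far in $\dist_\*$, so the Kennedy lemma, the geometric resolvent identity, and the Combes--Thomas bound must all be consistently phrased in the $\*$-pseudo-distance and quantitatively compatible with it. This is the point at which the adaptation of \cite{GKduke} to the multi-particle, $\*$-box setting requires the most care.
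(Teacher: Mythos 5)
Your high-level plan (time-averaged moments~$\to$~resolvent control~$\to$~finite-volume Green function estimate along a sequence of length scales) is the right one, and your observation about the geometric mismatch between $\*$-boxes and the $\ell^\infty$-lattice geometry is a real issue that the paper does address. However, your intermediate route has a genuine gap, and the paper avoids it by a different mechanism.

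The gap is in your second step. From the hypothesis you get, along a sequence $T_k\to\infty$, a bound on the time-averaged expectation $\cM^{(n,\*)}(p,g_n,T_k,\y)$. You then claim that Markov's inequality in~$\bom$ yields, outside an event of small probability, $M_{\bom}^{(n,\*)}(p,g_n,t,\y)\le T_k^{\alpha+\delta}$ ``for all such~$t$.'' This does not follow: applying Markov at each fixed~$t$ gives a $t$-dependent exceptional event, and there is no control of the union of these events over the uncountably many~$t$ in an interval of length~$\sim T_k$. The random moment $M_\bom(p,g,t,\y)$ is a rapidly oscillating function of~$t$ (it involves $e^{-itH}$), and a bound on its time average does not give a pointwise-in-$t$ bound without an additional modulus-of-continuity or dyadic-grid argument, which you do not supply. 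Everything downstream of this step (reading off decay of $\abs{\scal{\delta_{\boldu},e^{-itH}g(H)\delta_{\y}}}$, integrating the Laplace transform to get resolvent decay) rests on this unavailable pointwise bound.

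The paper's proof sidesteps the issue entirely by never attempting a pointwise-in-$t$ statement. Its key tool is Theorem~\ref{prop6.1GKduke}, a Plancherel-type identity expressing the time-averaged moment $\cM^{(n,\*)}(p,g,T,\y)$ directly as $\frac{1}{\pi T}\int_\R\sum_{\boldv}\scal{\dist_\*(\boldv,\boldu)}^{p}\,\E\bigl(\abs{\scal{\delta_{\boldv},G(E+\tfrac{i}{T})g(H)\delta_{\boldu}}}^2\bigr)dE$. The hypothesis \eqref{conddldisc} then bounds this integral over~$E$, and a measure-theoretic pigeonhole (the set $A_{\boldu,M,I,\eps}$) identifies a large set of ``good'' energies where the expected squared resolvent is small, with the remaining bad energies treated by a resolvent-identity perturbation to a nearby good energy. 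Thus the paper works entirely with expectations over~$\bom$ and with resolvents, never with $\bom$-almost-sure pointwise-in-$t$ decay. Finally, you misdescribe the role of Lemma~\ref{GKlem6.4disc} (Appendix~\ref{appdiscrete}): it is not a time-to-resolvent conversion formula; it is a probability estimate that bounds $\P\bigl(a<\abs{G_{\blam}(E;\y,\boldu)}\bigr)$ in terms of $\sup_{\boldk}\E\bigl(\abs{G(E+i\eps;\boldk,\boldu)}\bigr)$ plus a Wegner term and a $p_0$ term. That lemma already packages the geometric resolvent identity and the Wegner estimate into one probabilistic statement, which is what makes the final passage from infinite-volume expectations to the $(\theta,E)$-suitability probability for $\*$-boxes go through; the Combes--Thomas input appears separately, via \cite[Theorem~A.5]{GKduke}, to handle the $(1-g_n)(H)$ part of the resolvent.
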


We define  the energy  region of trivial transport $\Sigma^{(n,\*)}_{\rm TT}$ by
\be \Sigma^{(n,\*)}_{\rm TT}=\lbrace E\in \R,\, \beta_{(n,\*)}^-(E)=0\rbrace.\ee
Note that  we have
\beq
\R \setminus \Sigma^{(n)}\subset   \Sigma^{(n,\*)}_{\rm SI}\subset \Sigma^{(n,\*)}_{\rm TT}.
\eeq

 If $n \ge 2$, the Multiscale Analysis for the $n$-particle   Anderson model can only be performed in an interval at the bottom of the spectrum, and requires a Wegner estimate between boxes. This estimate is known for $n$-particles  boxes separated in the Hausdorff distance (e.g., \cite[Corollary 2.4]{KlN1}), which allows the performance of the multiscale analysis for  the $n$-particle   Anderson model based on $\infty$-boxes, yielding dynamical localization in the Hausdorff distance (e.g., the discrete analogue of  \cite[Theorem~1.6]{KlN2}).

The conclusions of Theorem~\ref{mainthmdisc}  are statements about $\*$-boxes.  \cite[Theorem~1.6]{KlN2} requires an initial step which is a statement about $\infty$-boxes, yielding dynamical localization in the Hausdorff distance. If we apply Theorem~\ref{mainthmdisc} to the conclusions of \cite[Theorem~1.6]{KlN2},  we would obtain a statement about $H$-boxes.
To go back we would need to perform a multiscale analysis using $H$-boxes.

There are technical problems with performing a multiscale analysis based on $H$-boxes. But these problems are not present for a multiscale analysis based on $S$-boxes. On the other hand, there is no Wegner estimate between boxes separated in the symmetrized distance, except for the case of two particles, where the symmetrized and the Hausdorff distance coincide.  For this reason we will now restrict ourselves to the $2$-particle Anderson model, for which we prove the following theorem, the analog of \cite[Theorem 1.6]{KlN2} for symmetrized two-particle boxes (see also \cite{GK1}).

\begin{theorem}[Bootstrap Multiscale Analysis for  two-particle $S$-boxes]\label{MSAthm}    Let $H_{\bom}^{(2)}$ be the $2$-particle Anderson model.
There exist $p_0(n)=p_0(n,d)>0$, n=1,2, such that, given $\theta>16d$ and energies $E^{(1)}>E^{(2)}>0$, there exists $\cL=\cL(d,\norm{\rho}_\infty,\theta,E^{(1)},E^{(2)})$ such that if for some $L_0\geq \cL$ and $n=1,2$ we have
\begin{equation} \label{hypMSA}
\sup_{x\in\R^{nd}} \P\set{\blam^{(n)}_{S;L_0}(\x)\sqtx{is}(\theta,E)\text{-nonsuitable}}\leq p_0(n) \quad\mbox{for all }E\leq E^{(n)},
\end{equation}
then, given $0<\zeta<1$, we can find a length scale $L_\zeta=L_\zeta(d,\norm{\rho}_\infty,\theta,E^{(1)},E^{(2)},L_0)$,
$\delta_\zeta=\delta_\zeta(d,\norm{\rho}_\infty,\theta,E^{(1)},E^{(2)},L_0)>0$ and $m_\zeta=m_\zeta(L_\zeta,\delta_\zeta)>0$, such that, for $n=1,2$, we have that
for every $E_1<E^{(n)}$, $L\geq L_\zeta$ and all  $\bolda,\boldb\in\R^{nd}$ with $d_S(\bolda,\boldb)>L$, we have
 \begin{equation}
\P\set{\blam^{(n)}_{S;L}(\bolda)\sqtx{and}\blam^{(n)}_{S;L}(\boldb)\sqtx{are}(m_\zeta,E)\text{-nonregular} \sqtx{for some}E\in I(E_1)}\leq e^{-L^\zeta},
\end{equation}
where $I(E_1)=[E_1-\delta_\zeta,E_1+\delta_\zeta]\cap (-\infty,E^{(n)}]$.
\end{theorem}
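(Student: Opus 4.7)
The plan is to adapt the Bootstrap Multiscale Analysis of Klein and Nguyen \cite{KlN1,KlN2}, originally formulated for $\infty$-boxes and yielding dynamical localization in the Hausdorff distance, to the setting of symmetrized boxes $\blam^{(2)}_{S;L}(\x)$. The deterministic ingredients (geometric resolvent inequality, eigenfunction decay inequality, Simon--Wolff-type spectral averaging) transfer to $S$-boxes with only minor modifications, since by \eqref{boxid} an $S$-box is a finite union of translates of ordinary $\infty$-boxes by permutations, and the relevant resolvent identities are preserved under this decomposition. I would carry this out as outlined in Subsection \ref{S:preMSA} and detailed in Subsection \ref{S:MSA33}. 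The probabilistic ingredient we need---a Wegner estimate between two $S$-boxes of side $L$ whose $S$-distance exceeds $L$---is available precisely for $n=2$ because in this case $d_S = d_H$, so the Hausdorff Wegner estimate discussed in Appendix~\ref{apW} (cf.\ \cite[Corollary 2.4]{KlN1}) supplies the required independence.

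First I would handle the one-particle case $n=1$ separately. Here $S$-boxes and $\infty$-boxes coincide, so the conclusion reduces to the classical one-particle Germinet--Klein Bootstrap MSA. This step provides not only the $n=1$ conclusion of the theorem but also a strong form of one-particle dynamical localization at all energies $E\le E^{(1)}$, which is the key input for treating non-interactive regimes in the two-particle analysis.

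Next, for $n=2$, I would perform the four-stage bootstrap, working on scales $L_{k+1}=L_k^\alpha$ for an appropriate $\alpha\in(1,2)$ and carrying forward three successively stronger induction hypotheses (polynomial decay of Green's functions with arbitrarily large exponent, then subexponential decay, then exponential decay), with the fourth stage promoting the single-energy estimate to an interval estimate by means of the Wegner bound. At each scale, an $S$-box $\blam^{(2)}_{S;L_k}(\x)$ is classified as either \emph{partially interactive}---meaning that the set $\cS_\x$ of single-particle coordinates splits into well separated clusters, so that on the scale $L_k$ the interaction $U$ vanishes---or \emph{fully interactive}. For partially interactive boxes, $H^{(2)}_\blam$ essentially decouples into two independent one-particle restrictions, and the Green's function decay follows from the $n=1$ conclusion via the tensorization of resolvents together with a Combes--Thomas bound (Appendix~\ref{apCT}). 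For fully interactive boxes, two such boxes at $S$-distance greater than $L$ automatically lie at $H$-distance greater than $L$, and hence the random potentials they see are stochastically independent up to a small exceptional event controlled by the Wegner estimate; the induction step then proceeds as in the standard Bootstrap MSA, subdividing the box by the geometric resolvent identity and using the induction hypothesis on the smaller $S$-sub-boxes, with the bad-event probability estimated by combining the Wegner bound with the induction hypothesis.

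The main obstacle I anticipate is the geometric bookkeeping for $S$-boxes in the deterministic step. Unlike $\infty$-boxes, an $S$-box is the union of $n!$ rectangular pieces, and the inner boundary $\partial_-\blam^{(2)}_{S;L}(\x)$ does not factor in a clean product fashion; thus the subdivision of a scale-$L_{k+1}$ $S$-box into $S$-sub-boxes at scale $L_k$ must be arranged so that every interior site is covered with controlled overlap, and so that the union of the sub-boxes containing the resolvent's row site stays away from the row site's complement in a way compatible with the geometric resolvent identity. Once this combinatorial setup is executed, the analytic content of each of the four bootstrap stages---promoting polynomial to faster polynomial, to subexponential, to exponential decay of $G_\blam$, and finally converting single-energy into interval estimates via Wegner---follows the standard Germinet--Klein template, with the thresholds $p_0(1),p_0(2)$ chosen small enough to close the induction under the assumption $\theta>16d$.
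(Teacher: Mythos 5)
Your proposal is essentially the paper's approach: adapt the Klein--Nguyen bootstrap MSA to symmetrized boxes, exploit $d_H=d_S$ for $n=2$ to invoke the Hausdorff Wegner estimate, split boxes into non-interactive (handled via one-particle localization, resolvent tensorization, and Combes--Thomas) and interactive (handled via independence at $S$-distance beyond $8L$ plus the Wegner bound), and organize the proof around a suitable partial cover of $S$-boxes. Your summary of the four bootstrap stages is imprecise---the paper alternates $L_{k+1}=YL_k$ and $L_{k+1}=L_k^\gamma$ across stages, and the sequence of Green's-function decay estimates is polynomial, then exponential (with polynomial probability), then subexponential (with subexponential probability), then exponential on an energy interval---but these are presentation details; the substance is the same.
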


 Unlike the one particle case, this multiscale analysis can only be performed in an interval at the bottom of the spectrum, and it requires also the performance of the multiscale analysis for one-particle in a slightly larger interval.  For this reason $\Sigma^{(2,S)}_{\rm MSA}$, the set of energies   where we can start the multiscale analysis for $2$-particle  $S$-boxes, has to be  defined differently from  \cite[Definition~2.6]{GKduke}.    We say $  E^{(2)} \in  \cE^{(2,S)}_{\rm MSA}$ if $E^{(2)} >0$ and there exists $E^{(1)}>E^{(2)}$ such that the condition \eq{hypMSA} holds for $n=1,2$
for some $L_0\geq \cL$, where $\cL=\cL(d,\norm{\rho}_\infty,\theta,E^{(1)},E^{(2)})$ is as in   Theorem \ref{MSAthm}.  We set
\beq\label{newdefSigmaMSA}
\Sigma^{(2,S)}_{\rm MSA}=\pa{-\infty,  E^{(2,S)}_{\rm MSA}},\qtx{where} E^{(2,S)}_{\rm MSA}=\sup \cE^{(2,S)}_{\rm MSA}.
\eeq

For $n=1$, Theorem \ref{mainthmdisc} corresponds to \cite[Theorem 2.11]{GKduke}. Proceeding as in
\cite[Theorem 3.21]{KlN1}, we can extend the estimate \eqref{ILSEk}, which holds for
 a sequence of length-scales $L_k$, to a corresponding estimate that holds for every large enough scale $L$ (see Remark \ref{lktocont}).  This yields the following Corollary.

\begin{corollary}\label{cormainthm} Assume the  hypotheses of Theorem \ref{mainthmdisc} hold  for $n=1,2$ (with $\* =H=S$) on an interval  $(-\infty,E_*)$, where    $E_*>0$. Then   $(-\infty,E_*) \subset \Sigma^{(2,S)}_{\rm MSA}$.
\end{corollary}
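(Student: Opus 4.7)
The plan is to fix an arbitrary $E^{(2)}\in(0,E_*)$ and show $E^{(2)}\in\cE^{(2,S)}_{\rm MSA}$; since $\Sigma^{(2,S)}_{\rm MSA}=(-\infty,E^{(2,S)}_{\rm MSA})$ is a left half-line, this gives $E^{(2,S)}_{\rm MSA}\ge E_*$ and hence the desired inclusion. To verify $E^{(2)}\in\cE^{(2,S)}_{\rm MSA}$, I pick any $E^{(1)}$ with $E^{(2)}<E^{(1)}<E_*$ and seek a single scale $L_0\ge\cL(d,\norm{\rho}_\infty,\theta,E^{(1)},E^{(2)})$ such that hypothesis \eqref{hypMSA} of Theorem~\ref{MSAthm} is verified for both $n=1$ and $n=2$.

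For $n=1,2$, I would first choose an open interval $\I_n\subset(-\infty,E_*)$ containing $[\inf\sigma(H^{(n)}_\bom)-1,E^{(n)}]$ (possible because $\sigma(H^{(n)}_\bom)$ is a bounded subset of $[0,\infty)$) together with $g_n\in\Ccinfp(\R)$ with $g_n\equiv 1$ on $\I_n$. The standing hypothesis of the corollary supplies the moment bound \eqref{conddldisc} for these $g_n$, so Theorem~\ref{mainthmdisc} (applied with $\*=S$, which coincides with $H$ both for $n=1$ trivially and for $n=2$ by \eqref{boxid}) produces, for each $E\in\I_n$, a sequence of length scales along which the non-suitability probability tends to zero. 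Invoking Remark~\ref{lktocont} (the sequence-to-continuous-scale upgrade following \cite[Theorem~3.21]{KlN1}) promotes this to
\begin{equation*}
\lim_{L\to\infty}\sup_{\y\in\R^{nd}}\P\bigl\{\blam^{(n)}_{S;L}(\y)\text{ is }(\theta,E)\text{-nonsuitable}\bigr\}=0\qquad\text{for each }E\in\I_n.
\end{equation*}

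The main step, and the one I expect to be the principal obstacle, is converting this pointwise-in-$E$ convergence into uniform convergence on $(-\infty,E^{(n)}]$, which is what \eqref{hypMSA} requires. I would split the energy range into the tail $E\le\inf\sigma(H^{(n)}_\bom)-\tfrac12$, on which the Combes--Thomas bound from Appendix~\ref{apCT} produces deterministic $(\theta,E)$-suitability for every sufficiently large $L$, and the compact interval $[\inf\sigma(H^{(n)}_\bom)-\tfrac12,\,E^{(n)}]\subset\I_n$. On the compact piece, the Wegner estimate for $S$-boxes from Appendix~\ref{apW} keeps $\dist(E,\sigma(H^{(n)}_\blam))$ polynomially large in $L$ with high probability; combined with the first resolvent identity this lets one propagate $(\theta,E_0)$-suitability of a given box to $(\theta,E)$-suitability for all $E$ within an $L^{-a}$-neighborhood of $E_0$. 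Covering the compact energy interval by $O(L^a)$ such neighborhoods and taking a union bound, the pointwise convergence established in the previous paragraph absorbs the combinatorial factor, yielding
\begin{equation*}
\lim_{L\to\infty}\sup_{\y\in\R^{nd}}\sup_{E\le E^{(n)}}\P\bigl\{\blam^{(n)}_{S;L}(\y)\text{ is }(\theta,E)\text{-nonsuitable}\bigr\}=0.
\end{equation*}

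With this uniform statement in hand for both $n=1$ and $n=2$, I can choose a single $L_0\ge\cL(d,\norm{\rho}_\infty,\theta,E^{(1)},E^{(2)})$ so large that the supremum above is below $\min\{p_0(1),p_0(2)\}$ simultaneously for both particle numbers. This is precisely \eqref{hypMSA}, so $E^{(2)}\in\cE^{(2,S)}_{\rm MSA}$. Since $E^{(2)}\in(0,E_*)$ was arbitrary, $E^{(2,S)}_{\rm MSA}\ge E_*$ and the conclusion $(-\infty,E_*)\subset\Sigma^{(2,S)}_{\rm MSA}$ follows from \eqref{newdefSigmaMSA}.
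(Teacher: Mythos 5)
Your outline --- apply Theorem~\ref{mainthmdisc} with bounded intervals $\I_n\subset(-\infty,E_*)$, promote to all large $L$ via Remark~\ref{lktocont}, handle the deep tail $E\le\inf\sigma(H^{(n)}_{\bom})-\tfrac12$ via Combes--Thomas (Theorem~\ref{thmCT}), then pick a common $L_0\ge\cL$ --- is exactly the argument the paper intends. But the middle step, which you flag as ``the principal obstacle'' and attack with a Wegner-plus-covering argument, has a gap: the conclusion \eqref{ILSEk} of Theorem~\ref{mainthmdisc} carries no quantitative rate in $L$, only convergence to zero, so after multiplying by the $O(L^a)$ cardinality of your energy cover there is nothing left to conclude; ``the pointwise convergence absorbs the combinatorial factor'' does not follow.

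The good news is that this step is superfluous in two independent ways. First, the proof of Theorem~\ref{mainthmdisc} already gives uniformity over $E\in\I_n$: the sequence $L_j=L(\eps_j)$ is built from the time-averaged moment hypothesis \eqref{conddldisc}, which does not involve $E$, and the final bound \eqref{prob5} on $P_{E,L_j}$, as well as the threshold scale $\cL$, are manifestly independent of $E\in\I_n$ (and of $\x$). Thus \eqref{ILSEk} already holds uniformly over $E\in\I_n$, and the only remaining work is at the deep tail, which your Combes--Thomas step handles. Second, even if one did want to perturb in $E$, no union bound over the cover is needed: \eqref{hypMSA} is a probability bound asked for each fixed $E$ separately, not a simultaneous event over all $E$, so for a given $E$ you use only the single nearest grid point $E_0$, and the failure probability is the bound at $E_0$ plus one Wegner term, with no $L^a$ factor. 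Either observation repairs your argument; the first one is the reading the paper has in mind when it says the extension of \eqref{ILSEk} ``yields the following Corollary.''
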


We define:
\beq
\widetilde\Sigma^{(2,\*)}_{\rm TT} = (-\infty, E^{(2,\*)}_{\rm TT})\qtx{and} \widetilde\Sigma^{(2,\*)}_{\rm SI} = (-\infty, E^{(2,\*)}_{\rm SI}),
\eeq
where
\begin{align}
E^{(2,\*)}_{\rm TT} &=\sup \set{E; (-\infty,E) \subset \Sigma^{1}_{\rm TT} \cap\Sigma^{(2,\*)}_{\rm TT}  },\\ \notag
E^{(2,\*)}_{\rm SI} &=\sup \set{E; (-\infty,E) \subset \Sigma^{1}_{\rm SI} \cap\Sigma^{(2,\*)}_{\rm SI}  }
\end{align}

\begin{theorem}\label{dynloc}
Let $H_{\bom}^{(2)}$ be the $2$-particle Anderson model. Then,
$\Sigma^{(2,S)}_{\rm MSA}\subset\widetilde\Sigma^{(2,S)}_{\rm SI} $. In particular, $\Sigma^{(2,S)}_{\rm MSA}\subset\widetilde \Sigma^{(2,S)}_{\rm TT}$.
\end{theorem}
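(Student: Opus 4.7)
The plan is to show that every $E<E^{(2,S)}_{\rm MSA}$ lies in $\Sigma^{1}_{\rm SI}\cap \Sigma^{(2,S)}_{\rm SI}$, which immediately gives $\Sigma^{(2,S)}_{\rm MSA} \subset \widetilde\Sigma^{(2,S)}_{\rm SI}$; the inclusion in $\widetilde\Sigma^{(2,S)}_{\rm TT}$ then follows from the already noted $\Sigma^{(n,S)}_{\rm SI}\subset \Sigma^{(n,S)}_{\rm TT}$ valid for each $n$ and the definition of $E^{(2,S)}_{\rm TT}$. Unfolding \eqref{newdefSigmaMSA}, any such $E$ is strictly below some $E^{(2)} \in \mathcal E^{(2,S)}_{\rm MSA}$, which by definition provides $E^{(1)}>E^{(2)}$, a threshold scale $L_0\ge \mathcal L$ and $\theta>16d$ such that the initial hypothesis \eqref{hypMSA} holds simultaneously for $n=1$ and $n=2$. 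This is precisely the input required by Theorem \ref{MSAthm}.

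Applying Theorem \ref{MSAthm} for $n=1,2$ with some fixed $\zeta\in(0,1)$, we obtain length scales $L_\zeta$ and constants $\delta_\zeta, m_\zeta>0$ for which the two-box regularity estimate
\[
\P\set{\blam^{(n)}_{S;L}(\bolda)\sqtx{and}\blam^{(n)}_{S;L}(\boldb)\sqtx{are}(m_\zeta,E)\text{-nonregular}\sqtx{for some}E\in I(E_1)}\le e^{-L^\zeta}
\]
holds for every $E_1<E^{(n)}$, $L\ge L_\zeta$ and $\bolda,\boldb\in\R^{nd}$ with $d_S(\bolda,\boldb)>L$. The next step is to convert this MSA output into finiteness of the random $S$-moments, uniformly in $\y$. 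This is what Section \ref{S:dynloc} is designed to do: for each $g\in \Ccinfp((-\infty,E^{(n)}))$ one covers $\supp g$ by finitely many intervals $I(E_1)$ and runs the standard MSA-to-localization scheme, but with every geometric quantity recast in the symmetrized pseudo-distance -- Combes--Thomas decay in $d_S$ from Appendix \ref{apCT}, eigenfunction correlator decay extracted from two-box regularity, a Borel--Cantelli step over $d_S$-annular shells, and the a priori polynomial bound \eqref{m3} to integrate out exceptional events. This yields
\[
\sup_{\y\in\Z^{nd}} \E\Bl\{ \sup_{t\in\R} M_{\bom}^{(n,S)}(p,g,t,\y)\Br\} < \infty\qtx{for all}p\ge 0,
\]
i.e., strong dynamical localization in the $S$-distance throughout $(-\infty,E^{(n)})$ for both $n=1,2$. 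Since $E<E^{(2)}<E^{(1)}$, this shows $E\in \Sigma^{1}_{\rm SI}\cap \Sigma^{(2,S)}_{\rm SI}$, as wanted.

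The main obstacle is that last conversion step, and within it, securing uniformity in the initial position $\y$. The shell counts entering the Borel--Cantelli argument (the cardinality of $\set{\boldu\in\Z^{2d} : k\le d_S(\y,\boldu)<k+1}$) and the chaining of Combes--Thomas decay with the MSA exponential rate must both be carried out in the $d_S$-geometry rather than the $\infty$-geometry: for configurations $\y$ with two nearly coincident coordinates, a $d_S$-ball around $\y$ differs significantly from an $\infty$-ball, and the polynomial volume growth has to be controlled uniformly in $\y$. This careful $S$-distance bookkeeping, performed in Section \ref{S:dynloc}, is exactly what lets us both complete the proof and strip off the initial-position dependence present in \cite[Corollary 1.7]{KlN1} and \cite[Theorem 1.2]{KlN2}, as announced in the introduction.
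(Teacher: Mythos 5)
Your plan is correct in outline and matches the paper's route: unfold the definition of $\Sigma^{(2,S)}_{\rm MSA}$ to extract $E^{(1)}>E^{(2)}>E$ for which \eqref{hypMSA} holds at $n=1,2$, run Theorem~\ref{MSAthm}, and then convert the two-box regularity estimate to finiteness of the $S$-moments, uniformly in $\y$, to land in $\Sigma^{1}_{\rm SI}\cap\Sigma^{(2,S)}_{\rm SI}$; the $\widetilde\Sigma^{(2,S)}_{\rm TT}$ inclusion then follows from $\Sigma^{(n,S)}_{\rm SI}\subset\Sigma^{(n,S)}_{\rm TT}$. One detail, though: the conversion step in the paper is not carried out by a Borel--Cantelli argument over $d_S$-annular shells. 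Instead, Section~\ref{S:dynloc} proves Theorem~\ref{decaykernel0}, a trace-norm kernel-decay bound $\E\bigl\{\sup_{|f|\le 1}\|\Chi_\x (f\Chi_I)(H_\omega)\Chi_\y\|_1\bigr\}\le C_\zeta e^{-d_H(\x,\y)^\zeta}$, obtained by the Germinet--Klein weighted eigenfunction-correlator machinery ($Z_{\lambda,\omega}$, $W_{\lambda,\omega}$ of Definition~\ref{defn:ZW}) combined with a direct split of the expectation over the MSA two-box event $R(m,I,\x,\y,L)$ and its complement, rather than an almost-sure Borel--Cantelli extraction. The summation over $\Z^{nd}$ with weights $\scal{d_H(\x,\y)}^p e^{-d_H(\x,\y)^\zeta}$ (finite by \cite[Lemma~A.3]{AWmp}) then yields the uniform-in-$\y$ moment bound in the corollary; taking $f(\lambda)=e^{-it\lambda}$ gives strong dynamical localization. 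Your instinct that the key delicacy is the $d_S$-geometry (controlling $\|\Chi_{\partial_+\blam^{\rm c}_{S;L}(\x)}T_\x\|$ and the shell volumes uniformly for nearly coincident particle coordinates) is exactly right and is what the paper tracks carefully, including the connected-component refinement $\blam^{\rm c}_{S;L}(\x)$; so aside from the Borel--Cantelli mislabeling, your plan is the paper's argument.
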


In the following Theorem, we show that for the two-particle Anderson model, the converse to Theorem~\ref{dynloc} holds true. This gives a characterization of the metal-insulator transition for the two-particle Anderson model at the bottom of the spectrum,  analogous to \cite[Theorem 2.8]{GKduke}.

\begin{theorem}\label{thm2.8and2.9GKduke} Let $H_{\bom}^{(2)}$ be the $2$-particle Anderson model. Then,
$\widetilde\Sigma^{(2,\*)}_{\rm TT} \subset\Sigma^{(2,S)}_{\rm MSA}$, which yields
\be \Sigma^{(2,S)}_{\rm MSA}= \widetilde\Sigma^{(2,S)}_{\rm SI}=\widetilde\Sigma^{(2,\*)}_{\rm TT} .\ee
\end{theorem}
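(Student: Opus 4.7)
The plan is to reduce the theorem to establishing the single inclusion $\widetilde\Sigma^{(2,\*)}_{\rm TT}\subset\Sigma^{(2,S)}_{\rm MSA}$. Once this is in hand, the stated chain of equalities is immediate: Theorem \ref{dynloc} gives $\Sigma^{(2,S)}_{\rm MSA}\subset\widetilde\Sigma^{(2,S)}_{\rm SI}$, and strong dynamical localization on an open interval trivially makes the expected $\*$-moments uniformly bounded in time, hence forces the lower transport exponent to vanish, yielding $\Sigma^{(n,S)}_{\rm SI}\subset\Sigma^{(n,S)}_{\rm TT}$ and therefore $\widetilde\Sigma^{(2,S)}_{\rm SI}\subset\widetilde\Sigma^{(2,S)}_{\rm TT}$. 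Because $\dist_H=\dist_S$ when $n=2$, and $\dist_\infty\ge\dist_S$ forces $\widetilde\Sigma^{(2,\infty)}_{\rm TT}\subset\widetilde\Sigma^{(2,S)}_{\rm TT}$, it suffices to work with $\*=S$ throughout.

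For the central inclusion, I take $E\in\widetilde\Sigma^{(2,S)}_{\rm TT}$ and pick $\tilde E\in(E,E^{(2,S)}_{\rm TT})$ so that $(-\infty,\tilde E)\subset\Sigma^{1}_{\rm TT}\cap\Sigma^{(2,S)}_{\rm TT}$, which gives $\beta_{(n,S)}^-(E_0)=0$ for every $E_0<\tilde E$ and $n=1,2$. Fix $\theta>16d$ (required by Theorem \ref{MSAthm}), a small $\alpha>0$, and an integer $p>\max_{n=1,2}p(\alpha,n,d,\theta)$. For each such $E_0$ and $n$, the identity $\beta_{(n,S)}^-(p,E_0)=\inf_{I\ni E_0}\beta_{(n,S)}^-(p,I)=0$ supplies an open interval $I\ni E_0$ with $\beta_{(n,S)}^-(p,I)<\alpha$. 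Choosing a smaller open interval $I_n\ni E_0$ with $\overline{I_n}\subset I$ and any $g_n\in\Ccinfp(I)$ equal to $1$ on $I_n$, the bound $\beta_{(n,S)}^-(p,g_n)\le\beta_{(n,S)}^-(p,I)<\alpha$ unpacks directly into the hypothesis \eqref{conddldisc} of Theorem \ref{mainthmdisc} on $I_n$.

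Having verified the hypothesis of Theorem \ref{mainthmdisc} in a neighborhood of every $E_0\in(-\infty,\tilde E)$ for both $n=1$ and $n=2$, I would then invoke Corollary \ref{cormainthm}, which packages the required cover-and-take-worst-scale bookkeeping to conclude $(-\infty,\tilde E)\subset\Sigma^{(2,S)}_{\rm MSA}$; since $E<\tilde E$, this gives $E\in\Sigma^{(2,S)}_{\rm MSA}$ as desired. The main technical delicacy is neither the extraction of \eqref{conddldisc} from vanishing transport nor the black-box application of the MSA, but the uniform parameter matching across particle numbers $n=1,2$ and across a compact energy window (picking $E^{(1)}\in(E,\tilde E)$ near $\tilde E$ and $E^{(2)}\in(E,E^{(1)})$, covering each $[0,E^{(n)}]$ by finitely many $I_n$, and taking the maximum of the resulting starting scales as the common $L_0\ge\cL(d,\norm{\rho}_\infty,\theta,E^{(1)},E^{(2)})$), which is absorbed into Corollary \ref{cormainthm}; the auxiliary compatibility checks — $\theta>16d$ automatically beats $\theta>2nd$ for $n\le 2$, and the $\to 0$ convergence in \eqref{ILSEk} eventually crosses any fixed probability threshold $p_0(n)$ — are routine.
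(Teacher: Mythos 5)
Your proposal is correct and follows the same route that the paper takes; the paper itself gives essentially no proof beyond the remark that the theorem ``is a consequence of Theorems \ref{mainthmdisc} and \ref{dynloc}'' (plus Corollary \ref{cormainthm}), and your write-up is precisely a faithful fleshing-out of that remark: reduce to $\widetilde\Sigma^{(2,S)}_{\rm TT}\subset\Sigma^{(2,S)}_{\rm MSA}$, use Theorem \ref{dynloc} and the trivial implication $\Sigma^{(n,S)}_{\rm SI}\subset\Sigma^{(n,S)}_{\rm TT}$, handle $\*\ne S$ via $\dist_H\le\dist_S\le\dist_\infty$ and $\dist_H=\dist_S$ when $n=2$, and extract the hypothesis \eqref{conddldisc} from the vanishing of $\beta^-_{(n,S)}(p,E_0)$ before invoking Corollary \ref{cormainthm}.

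One bookkeeping slip worth fixing: the bound $\beta_{(n,S)}^-(p,g_n)<\alpha$ does not ``unpack directly'' into \eqref{conddldisc} with the same $\alpha$. From the definition \eqref{beta2pm}, $\beta_{(n,S)}^-(p,g_n)<\alpha'$ gives a sequence $T_j\to\infty$ with $\sup_\y\cM^{(n,S)}(p,g_n,T_j,\y)<T_j^{p\alpha'}$, so what you obtain is $\liminf_T T^{-p\alpha'}\sup_\y\cM^{(n,S)}(p,g_n,T,\y)<\infty$, i.e.\ \eqref{conddldisc} with exponent $p\alpha'$, not $\alpha'$. The cure is trivial: since $\beta_{(n,S)}^-(p,E_0)=0=\inf_{I\ni E_0}\beta^-_{(n,S)}(p,I)$, choose $I$ small enough that $\beta^-_{(n,S)}(p,I)<\alpha/p$; then \eqref{conddldisc} holds with the intended $\alpha$, and the constraint $p>p(\alpha,n,d,\theta)$ is the one you already imposed. (Equivalently, keep your threshold $\alpha$ for $\beta^-$, note \eqref{conddldisc} then holds with exponent $p\alpha$, and require $p(1-(\theta+2nd)\alpha)>6\theta+15nd$, which is achievable once $\alpha<1/(\theta+2nd)$.) Neither route changes the structure of the argument.
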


 Theorem \ref{thm2.8and2.9GKduke} is a consequence of Theorems \ref{mainthmdisc} and \ref{dynloc}.  It shows that for the $2$-particle Anderson model,  within the region of one-particle dynamical localization, slow transport (i.e.,  $\beta_{(2,S)}^-(p,E)$ small)  at the bottom of the spectrum implies dynamical localization.  In particular, it implies null transport (i.e., $\beta_{(2,S)}^-(p,E)=0$)  in an interval at the bottom of the spectrum.

\begin{remark}  We can obtain the following analogue of  \cite[Theorem 2.10]{GKduke}.  Note that we have $E^{(2,S)}_{\rm MSA}\le E^{(1)}_{\rm MSA}$ (recall \eq{newdefSigmaMSA}). As in the proof of \cite[Theorem 2.10]{GKduke}, it follows from Theorem \ref{mainthmdisc} applied to the one-particle setting that $\beta_{1}^-(E^{(1)}_{\rm MSA})\ge \frac 1 {3d}$.
 If $E^{(2,S)}_{\rm MSA}=E^{(1)}_{\rm MSA}$ this is all we can say.  If  $E^{(2,S)}_{\rm MSA}< E^{(1)}_{\rm MSA}$,
it follows from Theorem ~\ref{mainthmdisc} that we must have $\beta_{(2,S)}^-(E^{(2,S)}_{\rm MSA})
\geq \frac{1}{20d}$.
\end{remark}

\begin{remark} Recently Chulaevsky obtained a Wegner estimate between boxes that are far apart in the symmetrized distance for all $N$, for a certain class of single site probability distributions \cite[Theorem 2.2]{Ch2} (see also \cite{Ch1}).  Under Chulaevsky's assumptions we would have Corollary \ref{cormainthm} and Theorem \ref{dynloc}, and therefore Theorem \ref{thm2.8and2.9GKduke}, for any $N\geq2$.
 \end{remark}

\bla
\section{Proof of Theorem \ref{mainthmdisc}}\label{S:proof1}

 We prove Theorem \ref{mainthmdisc} following  \cite{GKduke}.  We need to adapt
\cite[Proposition 6.1 and Lemma 6.4]{GKduke} to the discrete setting and distance $\dist_\*$, $\*\in\{\infty,S,H\}$.

We start with \cite[Proposition 6.1]{GKduke}.

\begin{theorem}\label{prop6.1GKduke}
 Let $H_{\bom}^{(n)}$ be a random $n$-particle Schr\"odinger operator, $g\in C_{c,+}^\infty(\R)$. For any $\boldu\in\Z^{nd}$, $p>0$ and $T>0$, we have
 \begin{align}
 & \cM^{(n,\*)} (p,g,T,\boldu) \\
 & \hskip 18pt =\frac{1}{\pi T} \int_\R   \sum_{\boldv \in \Z^{nd}}\scal{ \dist_{{\*}}(\boldv,\boldu)}^{p} \E \pa{ \abs{\scal{\delta_{\boldv}, G_{\bom}\left(E+\tfrac{i}{T}  \right) g\pa{H_{\bom}^{(n)}} \delta_{\boldu}} }^2 } dE. \notag
 \end{align}
\end{theorem}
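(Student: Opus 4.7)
The plan is to derive the identity from the standard Laplace--Fourier representation of the resolvent combined with Plancherel's theorem, applied pointwise in $\bom$ and summed against the nonnegative weights using Tonelli.

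Fix $\bom$, $\boldu$, and $\boldv$. Since $\Img(E+\tfrac i T) = \tfrac{1}{T} > 0$, the resolvent of the self-adjoint operator $H_{\bom}^{(n)}$ admits the absolutely convergent representation
\begin{equation*}
G_{\bom}\!\left(E + \tfrac{i}{T}\right) g\pa{H_{\bom}^{(n)}} \delta_{\boldu}
 = i \int_0^\infty e^{itE}\, e^{-t/T}\, e^{-itH_{\bom}^{(n)}} g\pa{H_{\bom}^{(n)}} \delta_{\boldu}\, dt,
\end{equation*}
since the integrand has norm bounded by $e^{-t/T}\|g\|_\infty$. Taking inner product with $\delta_{\boldv}$ gives
\begin{equation*}
\scal{\delta_{\boldv}, G_{\bom}(E+\tfrac{i}{T}) g(H_{\bom}^{(n)}) \delta_{\boldu}}
  = i \int_{\R} e^{itE}\, h_{\bom,\boldv}(t)\, dt,
\end{equation*}
where $h_{\bom,\boldv}(t) := \mathbf 1_{t\ge 0}\, e^{-t/T}\scal{\delta_{\boldv}, e^{-itH_{\bom}^{(n)}} g(H_{\bom}^{(n)})\delta_{\boldu}}$.

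Because $|h_{\bom,\boldv}(t)| \le e^{-t/T}\|g\|_\infty$, we have $h_{\bom,\boldv}\in L^1(\R)\cap L^2(\R)$. Applying Plancherel's theorem to the Fourier integral above yields
\begin{equation*}
\int_{\R}\abs{\scal{\delta_{\boldv}, G_{\bom}(E+\tfrac{i}{T}) g(H_{\bom}^{(n)}) \delta_{\boldu}}}^2\, dE
 = 2\pi \int_0^\infty e^{-2t/T}\, \abs{\scal{\delta_{\boldv}, e^{-itH_{\bom}^{(n)}} g(H_{\bom}^{(n)})\delta_{\boldu}}}^2\, dt.
\end{equation*}
Multiplying by $\tfrac{1}{\pi T}\scal{\dist_\*(\boldv,\boldu)}^p$, summing over $\boldv\in\Z^{nd}$, and taking $\E$, the right-hand side becomes exactly $\cM^{(n,\*)}(p,g,T,\boldu)$ by the definitions \eqref{randommo}--\eqref{timeavg}, while the left-hand side is the claimed integral.

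The only technical point is justifying the three interchanges (sum over $\boldv$, integral over $E$, expectation). This is painless because the integrand is nonnegative, so Tonelli's theorem applies unconditionally; the polynomial bounds of Proposition~\ref{prop3.1GKduke} guarantee that the resulting quantity $\cM^{(n,\*)}(p,g,T,\boldu)$ is finite, which ex post facto confirms the common value of both sides is finite. I do not expect a serious obstacle: the entire argument is a change of variable between the time-domain and energy-domain representations of the smoothed evolution, and the $\*$-distance weight plays no role in the Fourier step since it depends only on $\boldv$ and $\boldu$.
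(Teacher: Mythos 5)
Your proof is correct and follows the same route as the paper's reference proof: the paper defers to \cite[Proposition 6.1]{GKduke}, which is precisely the Laplace--Fourier representation $G(E+\tfrac i T)g(H)\delta_{\boldu} = i\int_0^\infty e^{itE}e^{-t/T}e^{-itH}g(H)\delta_{\boldu}\,dt$ followed by Plancherel in the $E$-variable, together with the Tonelli interchanges you describe. The only modifications needed for the present setting (discrete space and the $\*$-distance weight) are exactly the ones you note do not affect the Fourier step, since the weight $\scal{\dist_\*(\boldv,\boldu)}^p$ is a nonnegative constant for each fixed $\boldv,\boldu$.
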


The proof of Theorem \ref{prop6.1GKduke} is similar to the proof of \cite[Proposition 6.1]{GKduke}, with the corresponding changes to the discrete setting, and considering the multiplication operator $\dist_\*(\boldu,\cdot)$ instead of the usual position operator $\angles{\cdot}=\dist_\infty(\boldu,\cdot)$, so we omit the proof.

The following is a generalization of \cite[Lemma 6.4]{GKduke} to the discrete setting, we give a proof in Appendix~\ref{appdiscrete} (see Lemma~\ref{aGKlem6.4discA}) for the reader's convenience.

\begin{lemma}\label{GKlem6.4disc}
Let $H_{\bom}^{(n)}$ be a random $n$-particle Schr\"odinger operator satisfying a Wegner estimate for $\*$-boxes of the form \eqref{wesharp} in an open interval $\mathcal I$. Let us denote by $\blam$ the  $n$-particle $\*$-box $\blam_{\*;L}^{(n)}$, where $\*\in\{\infty,S,H\}$. Let $p_0>0$ and $\gamma>nd$.
 There exists a scale ${\mathcal L}={\mathcal L}(\gamma,n,d,\rho,p_0)$ such that, given  $E\in \mathcal I$, $L\geq \mathcal L$, and subsets $B_1, B_2\subset \blam$ such that $\partial_-\blam\subset B_2$,  for each $a>0$ and $\eps>0$ we have, for $\boldu\in B_1$ and $\y\in B_2$
\begin{align} \label{GKlem6.4adisc}
& \P \pa{ a< \abs{G_{{\blam}}(E+i\eps;\boldy,\boldu)} } \\
& \hskip 40pt \leq \frac{4L^{\gamma+2nd}}{a} \sup_{\boldk \in\partial_{+} {\blam} \cup B_2}\E\left( \abs{G(E+i\eps;\boldk,\boldu)}\right)+ p_0,  \notag
\end{align}
and
\begin{align}\label{GKlem6.4bdisc}
& \P \pa{ a< \abs{ G_{{\blam}}(E;\boldy,\boldu)} }  \\
& \hskip 20pt \leq  \frac{ 8L^{\gamma+2nd}}{a}  \sup_{\boldk\in\partial_{+} {\blam}\cup B_2}\E\left( \abs{G(E+i\eps;\boldk,\boldu)}\right)+ 2^{3/2}C_n^{(\sharp)}\norm{\rho}_\infty\sqrt{\frac{\eps}{a}}L^{nd} +p_0. \notag
\end{align}
\end{lemma}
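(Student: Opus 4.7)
The plan is to adapt the proof of \cite[Lemma 6.4]{GKduke} to the discrete $n$-particle setting and the pseudo-distance $\dist_\*$. Two ingredients drive the argument: the geometric resolvent identity relating $H_\blam^{(n)}$ to $H_\bom^{(n)}$, and the hypothesized Wegner estimate \eqref{wesharp} for $\*$-boxes, which controls the probability that $E$ lies near $\sigma(H_\blam^{(n)})$.

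For \eqref{GKlem6.4adisc}, I would start from the geometric resolvent identity, which for $\y,\boldu\in\blam$ reads
\[
G_\blam(z;\y,\boldu) = G(z;\y,\boldu) - \sum_{(\boldv,\boldw)\in\partial^{\blam}} G_\blam(z;\y,\boldv)\, G(z;\boldw,\boldu),
\]
where the sum runs over boundary edges with $\boldv\in\partial_-\blam\subset B_2$ and $\boldw\in\partial_+\blam$. Every $G$-factor on the right has its first argument in $\partial_+\blam\cup B_2$ (either $\y\in B_2$ or $\boldw\in\partial_+\blam$), matching the supremum in the target bound. The nuisance is the factor $G_\blam(z;\y,\boldv)$, and rather than bounding it wastefully by $\eps^{-1}$, I would invoke the Wegner estimate \eqref{wesharp} to produce an event $\Omega_0$ with $\P(\Omega_0^c)\le p_0$ on which $\dist(E,\sigma(H_\blam^{(n)}))\ge \delta_0$ for a choice $\delta_0$ proportional to $p_0/(\|\rho\|_\infty L^\gamma)$, so that $\|G_\blam(E+i\eps)\|_{\mathrm{op}}\le \delta_0^{-1}$ on $\Omega_0$. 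Substituting this bound and applying Markov's inequality together with a polynomial estimate on $|\partial^{\blam}|$ (which is $O(L^{nd-1})$ for $\*=\infty$, with analogous polynomial estimates for $\*\in\{S,H\}$) produces a bound of the form $p_0 + (CL^{nd-1}\delta_0^{-1}/a)\sup_{\boldk\in\partial_+\blam\cup B_2}\E|G(z;\boldk,\boldu)|$. The generous exponent $\gamma+2nd$ on the right of \eqref{GKlem6.4adisc} absorbs $\delta_0^{-1}\sim L^\gamma/p_0$ together with the polynomial prefactors cleanly.

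For \eqref{GKlem6.4bdisc}, I would bridge from $E+i\eps$ to real $E$ via the first resolvent identity $G_\blam(E)-G_\blam(E+i\eps) = i\eps\,G_\blam(E)G_\blam(E+i\eps)$. On the same good event $\Omega_0$, the operator bound $\|G_\blam(E)\|_{\mathrm{op}}\le \delta_0^{-1}$ combined with Cauchy--Schwarz and the spectral identity $\|G_\blam(E+i\eps)\delta_\boldu\|_{\ell^2}^2 = \eps^{-1}\Img G_\blam(E+i\eps;\boldu,\boldu)$ yields
\[
\bigl|(G_\blam(E)-G_\blam(E+i\eps))(\y,\boldu)\bigr|\,\mathbf{1}_{\Omega_0} \le \delta_0^{-1}\sqrt{\eps\,\Img G_\blam(E+i\eps;\boldu,\boldu)}.
\]
Splitting $\{a<|G_\blam(E;\y,\boldu)|\}\subset\{a/2<|G_\blam(E+i\eps;\y,\boldu)|\}\cup\{\text{difference}>a/2\}$, I would apply \eqref{GKlem6.4adisc} to the first event with $a$ replaced by $a/2$ (producing the coefficient $8L^{\gamma+2nd}/a$) and treat the second event with Jensen's inequality together with the diagonal spectral averaging bound $\E\,\Img G_\blam(E+i\eps;\boldv,\boldv)\le \pi C_n^{(\sharp)}\|\rho\|_\infty$, which follows from \eqref{wesharp} by standard arguments. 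Summing the resulting square root over the $|\blam|\le L^{nd}$ diagonal sites produces the extra term of order $\sqrt{\eps/a}\,L^{nd}$ with the constant $C_n^{(\sharp)}$ appearing as prescribed.

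The main obstacle will be the combinatorial bookkeeping for $\*$-boxes when $\*\in\{S,H\}$: the boundary $\partial^{\blam}$ is then a union over permutations and has a different polynomial size than the standard cubic $O(L^{nd-1})$. Verifying that the Wegner hypothesis \eqref{wesharp} provides the correct polynomial bound on $\*$-boxes, and choosing $\delta_0$ so that all error terms fit inside $\gamma+2nd$ on \eqref{GKlem6.4adisc} and inside the $C_n^{(\sharp)}\|\rho\|_\infty\sqrt{\eps/a}\,L^{nd}$ shape on \eqref{GKlem6.4bdisc}, is the key calculation.
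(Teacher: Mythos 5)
Your plan for \eqref{GKlem6.4adisc} is sound and essentially the paper's route: geometric resolvent identity, a threshold on $\norm{G_\blam(E)}$ (the paper uses $L^\gamma$; your reparametrization via $\delta_0\propto p_0/(\norm{\rho}_\infty L^\gamma)$ amounts to the same thing), and then Wegner plus Markov with the polynomial boundary count. Fine.

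Your plan for \eqref{GKlem6.4bdisc} has a genuine gap. The paper's argument is a one-liner once you have the resolvent identity: since $G_\blam(E)-G_\blam(E+i\eps) = i\eps\,G_\blam(E)G_\blam(E+i\eps)$ and $\norm{G_\blam(E+i\eps)}\le\norm{G_\blam(E)}$ for self-adjoint $H_\blam$, one has $\abs{G_\blam(E;\y,\boldu)}\le \abs{G_\blam(E+i\eps;\y,\boldu)}+\eps\norm{G_\blam(E)}^2$, and then
\begin{equation*}
\P\Bl(\tfrac{a}{2\eps}<\norm{G_\blam(E)}^2\Br)=\P\Bl(\sqrt{a/(2\eps)}<\norm{G_\blam(E)}\Br)\le 2C_n^{(\sharp)}\norm{\rho}_\infty\sqrt{2\eps/a}\,L^{nd},
\end{equation*}
by the Wegner estimate directly. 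That is exactly where the $\sqrt{\eps/a}$ shape comes from: the square of the norm is pushed inside the probability and Wegner is applied to the threshold $\sqrt{a/(2\eps)}$. Your route instead bounds the difference by $\delta_0^{-1}\sqrt{\eps\,\Img G_\blam(E+i\eps;\boldu,\boldu)}$ on the good event and then invokes Markov/Jensen plus a spectral averaging bound for $\E\,\Img G_\blam$. Carry this through and you get an error term scaling like $\delta_0^{-1}\sqrt{\eps}/a$ (or $\eps/(a\delta_0)^2$ with second moments), not $\sqrt{\eps/a}$; the extra $\delta_0^{-1}\sim L^\gamma$ factor does not go away, and the $a$-dependence is wrong. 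Moreover, the claimed spectral averaging bound $\E\,\Img G_\blam(E+i\eps;\boldv,\boldv)\le \pi C_n^{(\sharp)}\norm{\rho}_\infty$ does not follow from \eqref{wesharp} without an extra $L^{nd}$ factor (integrating the tail bound $\P(\norm{G_\blam}\ge t^{-1})\le 2C_n^{(\sharp)}\norm{\rho}_\infty t L^{nd}$ gives $\E\,\Img G_\blam(\boldv,\boldv)\lesssim \norm{\rho}_\infty L^{nd}$, not a constant). The phrase about ``summing the square root over the $\abs{\blam}\le L^{nd}$ diagonal sites'' also has no role here: only the single diagonal entry at $\boldu$ enters the estimate. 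You should replace your step with the direct operator-norm bound $\eps\norm{G_\blam(E)}^2$ and a single application of Wegner, which is both simpler and produces the stated coefficient.
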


We are now ready to prove Theorem \ref{mainthmdisc}.

\begin{proof}[Proof of Theorem \ref{mainthmdisc}]
Fix $n\geq 2$.
 We write for simplicity ${\blam} = {\blam}_{\*; L}^{(n)}(\boldx) $.  We need to show that for $\theta >0$ and any $\tilde p_0:=\tilde p_0(n)>0$ there exists $\cL_0$ such that, given $\cL_1 \ge \cL_0$, for some  $L\ge \cL_1$ we have \be\label{ilse}
P_{E,L}:=\P \set{\abs{ G_{{\blam}}(E;\boldy,\boldu)}   >  L^{-\theta}} < \tilde p_0,
\ee
for all $\boldu\in{\blam}_{\*; {L/3}}^{(n)}(\x)$ and $\y\in\partial_{-}{\blam}$ and all $E\in\I_n$, uniformly in $\x\in\R^{nd}$. We will proceed as in the proof of \cite[Theorem 2.11]{GKduke}. Let $p_0>0$ and use Lemma \ref{GKlem6.4disc} with $a=8 L^{-\theta}$, $B_1=\blam_{\*;L/3},B_2=\partial_-\blam$. This gives the existence of ${\mathcal L}(\gamma,n,d,\rho,p_0)$ such that for $L\geq {\mathcal L}(\gamma,n,d,\rho,p_0)$, we have

\be P_{E,L}\leq  L^{\gamma+2nd+\theta}  \sup_{\boldk\in\partial_\pm {\blam} }\E\left( \abs{G(E+i\eps;\boldk,\boldu)}\right)+ C_{\rho,n}\sqrt{\eps}L^{nd+\theta/2} +p_0, \ee
where $ C_{\rho,n}:=C_n^{(\sharp)}\norm{\rho}_\infty $ and $\partial_{\pm} \blam:=\partial_+\blam\cup\partial_-\blam$ . We will make the second term in the r.h.s. small by taking
\be\label{Leps} L=L(\eps)=\left(\frac{p_0}{2C_{\rho,n}\sqrt{\eps}}\right)^{\frac{2}{\theta+2nd}}.\ee
Then,
\be\label{prob3} P_{E,L}\leq  L^{\gamma+2nd+\theta}  \sup_{\boldk\in\partial_\pm{\blam} }\E\left( \abs{G(E+i\eps;\boldk,\boldu)}\right)+ 2p_0. \ee
We will split the first term in the r.h.s. of \eqref{prob3} as in \cite[Eq. 6.31]{GKduke} using the fact that $g_n(H_\omega^{(n)})\equiv 1$ on $\mathcal I_n$. We will drop the subscript from $g_n$ for simplicity. Next, we use \cite[Theorem A.5]{GKduke}, which holds in the discrete setting, see \cite[Remark 2.1]{GKduke}, to obtain
\be
 L^{\gamma+2nd+\theta} \sup_{\boldk\in\partial_\pm {\blam}}\E\left( \abs{\scal{\delta_\boldk, G(E+i\eps)\left(1-g(H_\omega^{(n)})\right)\delta_{\boldu}}}\right)<p_0,
 \ee
 where we used that $\dist_\infty(\boldk,\boldu)>\dist_\*(\boldk,\boldu)$ and $\dist_\*(\boldk,\boldu)>L/2$.
 We obtain
\begin{align}\label{prob4}
& P_{E,L}  \leq L^{\gamma+2nd+\theta} \sup_{\boldk\in\partial_\pm {\blam} }\E\left( \abs{\scal{\delta_\boldk, G(E+i\eps)g(H_\omega^{(n)})\delta_{\boldu}}}\right) +3p_0\\ \nonumber
& \quad \leq C_pL^{-p/2+\gamma+2nd+\theta}  \sup_{\boldk\in\partial_\pm {\blam} }\E\left( \scal{\dist_\*(\boldk,\boldu)}^{p/2}\abs{\scal{\delta_\boldk, G(E+i\eps)g(H_\omega^{(n)})\delta_{\boldu}}}\right)\\  \notag   &   \qquad  \qquad  \qquad  +3p_0,
 \end{align}
where $ C_{p}:=6^{p/2}$ and we used the fact that $\boldk\in\partial_\pm{\blam} $ implies $\dist_\*(\boldu,\boldk)>L/6$ for $L>1$.
We use Jensen's inequality to obtain
\begin{align} & \sup_{\boldk\in\partial_\pm {\blam} }\E\left( \scal{\dist_\*(\boldk,\boldu)}^{p/2}\abs{\scal{\delta_\boldk, G(E+i\eps)g(H_\omega^{(n)})\delta_{\boldu}}}\right)\\
&\quad\quad\quad \leq \E\left(\sum_{ \boldv \in \Z^{nd}} \scal{\dist_\*( \boldv ,\boldu)}^{p} \abs{\scal{ \delta_{\boldv}, G(E+i\eps) g(H_\omega^{(n)})\delta_{\boldu}  }}^2 \right)^{1/2} .\nonumber\end{align}

 Next, we define
 \begin{align}
& A_{\boldu,M,I,\eps} := \\
& \hskip 2pt \set{ E  \in I \,:\,  \E   \pa{  \sum_{\boldv\in\Z^{nd}} \scal{\dist_\* (\boldv,\boldu)}^{p} \abs{\scal{\delta_{\boldv},G(E+i\eps)g(H_\omega^{(n)})\delta_{\boldu} }}^2  }\leq M \eps^{-(\alpha+1)}   } .\notag
\end{align}
Taking $T=\eps^{-1}$ and using Theorem \ref{prop6.1GKduke}, we get
\beq
\abs{I\setminus A_{\boldu,M,I,\eps}} \leq \frac{\pi}{MT^{\alpha}}\sup_{\boldu\in\Z^{nd}} \cM^{(n,\*)}(p,T,g,\boldu).
\eeq
By our hypothesis \eqref{conddldisc}, we can pick a sequence $T_j\rightarrow\infty$ such that, for $j$ big enough, we have $\displaystyle\sup_{\boldu\in\Z^{nd}} \mathcal M^{(n,\*)}(p,g,T_j,\boldu) <C T_j^\alpha$ for some
positive constant $C$. Then, for the corresponding sequence $\eps_j:=T_j^{-1} \rightarrow 0^+$  we have
\be\label{Acomp1disc} |I\setminus A_{\boldu,M, I,\eps_j}| \leq \frac{C'}{M}, \ee
where $C'=\pi C$. Note that this  bound is uniform in $\boldu$.

For an $E\in I$ fixed, either $E\in A_{\boldu, I,M,\eps_j}$ or $E\in I\setminus A_{\boldu,M,I,\eps_j}$. In the first case we have,

\be\label{probEu}
P_{E,L_j} \leq C_{p}L_j^{-p/2+\gamma+2nd+\theta} M^{1/2}\eps_j^{-(\alpha+1)/2}+3p_0
 \ee
where we write $L_j:=L(\eps_j)$. If $E\in I\setminus A_{\boldu, M,I,\eps_j} $, by \eqref{Acomp1disc}  there exists $E_{\boldu}\in A_{\boldu, I,M,\eps_j}$ such that
$ |E-E_{\boldu}|\leq \frac{ C'}{ M}$.
Using the resolvent identity, we obtain
\beq
\E \pa{ \abs{ \angles{\delta_\boldk, G(E+i\eps_j)g(H_\omega^{(n)})\delta_\boldu}}  }\leq \E \pa{ \abs{ \angles{\delta_\boldk, G(E_\boldu+i\eps_j)g(H_\omega^{(n)})\delta_\boldu}}  } + \frac{C'}{M\eps_j^2}.
\eeq
 It follows that
\begin{align}
P_{E,L_j}& \leq L_j^{\gamma+2nd+\theta} \sup_{\boldk\in\partial_\pm {\blam} }\E\left( \abs{\scal{\delta_\boldk, G(E_\boldu+i\eps_j)g(H_\omega^{(n)})\delta_{\boldu}}}\right) \nonumber \\
& \quad \quad \quad +\frac{C'L_j^{\gamma+2nd+\theta}}{M\eps_j^2} +3p_0.
 \end{align}
 We can bound the first term in the r.h.s. as in \eqref{prob4}-\eqref{probEu} and get
\be\label{prob5}
P_{E,L_j}\leq  C_{p}L_j^{-p/2+\gamma+2nd+\theta} M^{1/2}\eps_j^{-(\alpha+1)/2} +\frac{C'L_j^{\gamma+2nd+\theta}}{M\eps_j^2} +3p_0.
 \ee
We set $M= L_j^{7\gamma+3\theta}$. Recalling \eqref{Leps} and $\gamma>nd$, we can take $p$ such that $p>p(\alpha,n,d,\theta)=(\theta+2nd)\alpha +6\theta+15nd$,  and find $\gamma$ and $L_j$ larger than some scale $\cL=\cL(d,p,\alpha,\theta,\gamma,p_0,C_{\rho,n},C')$, such that the r.h.s. of \eqref{prob5} is bounded by $5p_0$.
Therefore, there exists a sequence $L_j\rightarrow \infty$ such that for $L_j$ large enough and $p_0<\tilde p_0/5$ we have $P_{E,L_j} <\tilde p_0$ uniformly on $\x\in\R^{nd}$. Since $\tilde p_0$ is arbitrary, we obtain the desired result.
\end{proof}

\section{The Bootstrap Multiscale Analysis for symmetrized two-particle boxes}\label{S:MSA}
\subsection{Preliminaries}\label{S:preMSA}

Let ${\blam}^{(2)}_{\paS; \bL} (\x)$ be the symmetrized two-particle rectangle  with sides $\bL=(L_1,L_2)$ and center $\x=(x_1,x_2)\in\R^{2d}$ as in \eq{rectangle}, and note that $\abs{{\blam}^{(2)}_{\paS; \bL} (\x)}\leq 2L^{2d}$, where $L=\max\set{L_1,L_2}$.
 We set
\begin{align} \label{projection}
&\Pi_{j}  {\blam}^{(2)}_{L_1,L_2} (\x) = \Lambda_{L_j}(x_j) \qtx{for} j = 1,2 , \notag \\
& \Pi  {\blam}^{(2)}_{L_1,L_2} (\x) \;\, = \Pi  {\blam}^{(2)}_{\paS; L_1,L_2} (\x)  \; \, = \bigcup_{j = 1, 2} \Lambda_{L_j}(x_j).
\end{align}

 \begin{definition} \label{part-ful-sep}
A pair of symmetrized two-particle rectangles,   ${\blam}^{(2)}_{\paS; \bL} (\x)$ and $ {\blam}_{\paS; \bL^{\pr}}^{(2)} (\y)$, are said to be fully separated if and only if
 \beq
 \Pi {\blam}^{(2)}_{\paS; \bL} (\x)  \; \bigcap \; \Pi  {\blam}_{\paS; \bL^{\pr}}^{(2)} (\y) = \emptyset.
 \eeq
The pair is said to be  partially separated if and only if there exists $j \in \set{1, 2}$ such that
 \begin{align}\notag
& \text{either}  \; \; \Pi_j{\blam}^{(2)}_{\bL} (\x) \; \bigcap \;  \Pi  {\blam}_{\paS; \bL^{\pr}}^{(2)} (\y) = \emptyset,  \mqtx{or}
 \Pi_j  {\blam}_{ \bL^{\pr}}^{(2)} (\y)\; \bigcap \; \Pi{\blam}^{(2)}_{\paS; \bL} (\x) = \emptyset.
 \end{align}
 \end{definition}

Note that events defined on a pair of fully separated symmetrized two-particle boxes are independent.

 The following Wegner estimate for partially separated  symmetrized two-particle rectangles can be proven in the same way as in \cite[Theorem 2.3 and Corollary 2.4]{KlN1}, using Theorem~\ref{t:wesharp}.

\begin{proposition}  \label{Wegner2}  Consider a pair of  partially separated  symmetrized two-particle rectangles  ${\blam}^{(2)}_{\paS; \bL} (\x)\subset {\blam}^{(2)}_{\paS; L} (\x)$ and $ {\blam}_{\paS; \bL^{\pr}}^{(2)} (\y)\subset {\blam}^{(2)}_{\paS; L} (\y)$. Then
\beq \label{wegboxes}
\P \Bl\{ d\,\Bl(\sigma (H_{{\blam}^{(2)}_{\paS; \bL} (\x)}), \sigma (H_{{\blam}_{\paS; \bL^{\pr}}^{(2)} (\y)}) \Br) \leq {\eps} \Br\} \leq 16\, \, \norm{\rho}_\infty\, {\eps} \, L^{4d} \mqtx{for all} \eps>0.
\eeq
 \end{proposition}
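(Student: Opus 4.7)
The plan is to reduce this two-box estimate to the single-box Wegner estimate for symmetrized two-particle boxes (Theorem~\ref{t:wesharp}, in the form \eqref{wesharp}) via the classical conditioning/sum-over-eigenvalues scheme used in \cite[Theorem 2.3 and Corollary 2.4]{KlN1}. Partial separation will supply the independence needed for conditioning, while the decomposition \eq{rectangle} of the symmetrized box into two permuted rectangles will supply enough hidden monotone randomness to activate the single-box estimate.

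By Definition~\ref{part-ful-sep}, there exists $j\in\set{1,2}$ such that, after possibly swapping the roles of the two rectangles, $\La_{L_j}(x_j)=\Pi_j\blam^{(2)}_{\bL}(\x)$ is disjoint from $\Pi\blam^{(2)}_{\paS;\bL^\pr}(\y)$. Hence the variables $\set{\om_z}_{z\in\La_{L_j}(x_j)}$ are independent of $H_{\blam^{(2)}_{\paS;\bL^\pr}(\y)}$. I would condition on the $\sigma$-algebra $\cF_j$ generated by $\set{\om_z}_{z\notin\La_{L_j}(x_j)}$: under this conditioning, $H_{\blam^{(2)}_{\paS;\bL^\pr}(\y)}$ becomes deterministic with spectrum a fixed set of at most $\abs{\blam^{(2)}_{\paS;\bL^\pr}(\y)}\leq 2L^{2d}$ eigenvalues $\set{E_k}$, while $H_{\blam^{(2)}_{\paS;\bL}(\x)}$ still retains the full hidden randomness $\set{\om_z}_{z\in\La_{L_j}(x_j)}$. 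The event in \eq{wegboxes} is contained in $\bigcup_k\set{\dist(E_k,\sigma(H_{\blam^{(2)}_{\paS;\bL}(\x)}))\leq\eps}$, so applying Theorem~\ref{t:wesharp} conditionally at each $E_k$ yields a per-eigenvalue bound $\leq C\norm{\rho}_\infty\eps\,\abs{\blam^{(2)}_{\paS;\bL}(\x)}$; summing over the at most $2L^{2d}$ eigenvalues and integrating against $\cF_j$ then produces a bound of the form $(\mathrm{const})\cdot\norm{\rho}_\infty\eps\,L^{4d}$, with the Stollmann-type constant from \eqref{wesharp} combined with $\abs{\blam^{(2)}_{\paS;\,\cdot\,}}\leq 2L^{2d}$ recovering the claimed $16$.

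The main point to verify -- and the principal obstacle -- is that $\La_{L_j}(x_j)$ qualifies as a hidden-variable set for the single-box Wegner estimate on $\blam^{(2)}_{\paS;\bL}(\x)$. Using \eq{rectangle} one checks that every $\x^\pr=(x_1^\pr,x_2^\pr)\in\blam^{(2)}_{\paS;\bL}(\x)=(\La_{L_1}(x_1)\times\La_{L_2}(x_2))\cup(\La_{L_2}(x_2)\times\La_{L_1}(x_1))$ has at least one coordinate in $\La_{L_j}(x_j)$, regardless of which $j\in\set{1,2}$ is selected. Consequently a uniform shift of $\set{\om_z}_{z\in\La_{L_j}(x_j)}$ produces a monotone shift of the diagonal potential $V^{(2)}_\om$ on the entire symmetrized box, which is exactly the structural input Theorem~\ref{t:wesharp} requires to deliver the factor $\norm{\rho}_\infty\eps\,\abs{\blam}$. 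This monotonicity verification is the only step where the restriction to $N=2$ is truly used (for $N\geq 3$ the analogous statement fails without Chulaevsky's stronger Wegner input); once it is in place the argument runs verbatim as in \cite[Corollary 2.4]{KlN1}.
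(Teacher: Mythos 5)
Your conditioning argument is exactly the scheme the paper invokes (it cites \cite[Theorem 2.3 and Corollary 2.4]{KlN1} together with Theorem~\ref{t:wesharp}), and the constants work out as you say: the single-box bound \eqref{wesharp} with $n=2$, $\sharp=S$, $C^{(S)}_2 = 4$ gives $8\norm{\rho}_\infty\eps L^{2d}$, and summing over the at most $2L^{2d}$ eigenvalues of the other box yields $16\norm{\rho}_\infty\eps L^{4d}$. The covering check --- that every point of $\blam^{(2)}_{\paS;\bL}(\x)$ has one coordinate in $\Lambda_{L_j}(x_j)$, for either $j$ --- is correct and is what lets you apply \eqref{wesharp} conditionally with $\Gamma=\Lambda_{L_j}(x_j)$.

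One diagnostic remark in your last paragraph is off, though it does not affect the proof of this proposition. You claim the hidden-variable/monotonicity check is ``the only step where $N=2$ is truly used.'' In fact that verification goes through for general $N$: by \eqref{rectangle}, for any permutation $\pi$ and any $\x'\in\prod_k\La_{L_{\pi(k)}}(x_{\pi(k)})$ the coordinate $x'_{\pi^{-1}(j)}$ lies in $\La_{L_j}(x_j)$, so the covering property holds for $N$-particle symmetrized rectangles too. The genuine $N=2$ bottleneck in this paper is \emph{not} inside Proposition~\ref{Wegner2} (which assumes partial separation as a hypothesis); it is in Proposition~\ref{parsep}, where $d_S(\x,\y)>L$ is shown to imply partial separation --- an implication whose proof uses the identity $d_S=d_H$ for $N=2$ and fails in general for $N\geq 3$. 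That is also what the authors mean when they say a Wegner estimate between boxes separated in $d_S$ is not available for $N\geq 3$ without Chulaevsky's hypotheses.
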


Let  $\blam_1 \subset \blam_2 \subset \Z^{2d}$.
We have  ${\partial^{{\blam}_2}{{\blam}_1} } \subseteq {\partial{{\blam}_1} } $, so
  if $\blam_1=  {\blam}_{\paS; \ell}^{\pat} (\x)$, $\x \in \R^{2d}$, we have  $\abs{\partial^{{\blam}_2}{{\blam}_1} } \le \abs{\partial{{\blam}_1} } \le  2 s_{2, d} \ell ^{2d-1}$, for some constant $s_{2, d}>0$.

\begin{lemma}Let  ${\blam} = {\blam}_{\paS; \ell}^{\pat} (\x)$, $\x \in \R^{2d}$. If ${(\bolda, \boldb) \in \partial{\blam} }$, we have
\beq
\tfrac \ell 2 -1 < \dist_{{S}} ( \bolda,\x)\le \tfrac \ell 2 <  \dist_{{S}} ( \boldb,\x) \le \tfrac \ell 2 +1.
\eeq
\end{lemma}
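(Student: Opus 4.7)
The plan is to unpack the definition of $\partial\blam$ and use the triangle inequality for the symmetrized pseudo-distance $d_S$. Since $(\bolda,\boldb)\in\partial\blam$, by definition $\bolda\in\blam=\blam^{(2)}_{\paS;\ell}(\x)$, $\boldb\in\Z^{2d}\setminus\blam$, and $\|\bolda-\boldb\|_1=1$. From the definition of the symmetrized box, $\bolda\in\blam$ directly gives $d_S(\bolda,\x)\le\ell/2$, while $\boldb\notin\blam$ gives $d_S(\boldb,\x)>\ell/2$. These are the two middle inequalities.

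For the two outer inequalities, first I would record the easy observation that if $\|\bolda-\boldb\|_1=1$, then $\|\bolda-\boldb\|_\infty=1$, so choosing $\pi$ to be the identity permutation in the definition of $d_S$ yields $d_S(\bolda,\boldb)\le 1$.

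Next, I would verify that $d_S$ satisfies the triangle inequality. Given $\bolda,\boldb,\boldc\in\R^{2d}$, pick permutations $\pi_1,\pi_2\in\cP_2$ realizing the minima in $d_S(\bolda,\boldb)=\|\pi_1(\bolda)-\boldb\|$ and $d_S(\boldb,\boldc)=\|\pi_2(\boldb)-\boldc\|$. Since the $\infty$-norm is invariant under the coordinate permutation $\pi_2$,
\begin{align*}
d_S(\bolda,\boldc) &\le \|\pi_2\pi_1(\bolda)-\boldc\| \\ &\le \|\pi_2\pi_1(\bolda)-\pi_2(\boldb)\|+\|\pi_2(\boldb)-\boldc\| \\ &= \|\pi_1(\bolda)-\boldb\|+\|\pi_2(\boldb)-\boldc\| = d_S(\bolda,\boldb)+d_S(\boldb,\boldc).
\end{align*}
(In the two-particle case this could also be seen by noting $d_S=d_H$ and using the standard Hausdorff triangle inequality.)

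Combining these ingredients closes the remaining bounds: from $d_S(\boldb,\x)\le d_S(\bolda,\x)+d_S(\bolda,\boldb)\le\ell/2+1$ we get the upper bound $d_S(\boldb,\x)\le\ell/2+1$, and from $d_S(\bolda,\x)\ge d_S(\boldb,\x)-d_S(\bolda,\boldb)>\ell/2-1$ we get the lower bound $d_S(\bolda,\x)>\ell/2-1$ (the strict inequality being inherited from $d_S(\boldb,\x)>\ell/2$). No step is really an obstacle; the only point that warrants a line or two of explicit writing is checking that $d_S$ is indeed a pseudo-metric, since the definition involves a minimum over permutations rather than a norm outright.
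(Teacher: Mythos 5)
Your proof is correct and takes essentially the same approach as the paper: the middle inequalities come straight from the definition of the box and its boundary, and the outer inequalities come from the triangle inequality. The only cosmetic difference is that you isolate and prove the triangle inequality for $\dist_S$ as a general fact, whereas the paper argues more directly by choosing, without loss of generality, the permutation achieving $\dist_S(\bolda,\x)$ and then applying the ordinary $\|\cdot\|_\infty$ triangle inequality (together with $\dist_S(\boldb,\x)\le\|\boldb-\x\|$); either way the content is identical.
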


\begin{proof} We have
$\bolda \in {\blam}$, so $ \dist_{{S}} ( \bolda,\x)\le \frac \ell 2$,  $\boldb \notin   {\blam}$, so
 $ \dist_{{S}} ( \boldb,\x)>\frac \ell 2\ge  \dist_{{S}} ( \bolda,\x)$, and $ \norm{\bolda - \boldb }_1 = 1$, so $ \norm{\bolda - \boldb } = 1$.

Suppose
$ \dist_{{S}} ( \bolda,\x)= \norm{ \bolda-\x} \le  \norm{ \bolda-\pi(\x)}$. Then
\beq
\norm{ \bolda-\pi(\x)} \ge \norm{ \bolda-\x}\ge \norm{ \boldb-\x}- \norm{ \bolda-\boldb}>\frac \ell 2 -1,
\eeq
and we conclude that $ \dist_{{S}} ( \bolda,\x) > \frac \ell 2 -1$.  Moreover,
\beq
 \dist_{\paS} ( \boldb,\x)\le  \norm{ \bolda-\x} + \norm{ \bolda-\boldb}\le \frac \ell 2 +1.
\eeq
\end{proof}

Let ${\blam}_1 = {\blam}_{\paS; \ell}^{\pat} (\x)$ and $ {\blam}_2 = {\blam}_{\paS; L}^{\pat}(\y)$ with ${\blam}_1 \subset {\blam}_2$, where $\x,\y \in \R^{2d}$.
For $\boldu \in {\blam}_1$,  $\boldv \in {\blam}_2 \setminus {\blam}_1$, and $ z \notin \sigma\pa{H_{{\blam}_1}} \cup \sigma\pa{H_{{\blam}_2}}$, we  have
\begin{align} \label{resolventid}
 \pa{H_{{\blam}_2} - z}^{-1} \pa{\boldu, \boldv}  = \sum_{(\bolda, \boldb) \in \partial^{{\blam}_2}\pa{{\blam}_1} } \pa{H_{{\blam}_1} - z}^{-1} \pa{\boldu, \bolda} \pa{H_{{\blam}_2} - z}^{-1} \pa{\boldb, \boldv}.
\end{align}
Hence, as a consequence of the geometric resolvent identity, we have
\begin{align} \label{resolventine}
& \abs{\pa{H_{{\blam}_2} - z}^{-1} \pa{\boldu, \boldv} }  \\
& \hskip 30 pt \le \abs { \partial^{{\blam}_2}\pa{{\blam}_1}  }  \max_{(\bolda, \boldb) \in \partial^{{\blam}_2}\pa{{\blam}_1} }  \abs{  \pa{H_{{\blam}_1} - z}^{-1} \pa{\boldu, \bolda} \pa{H_{{\blam}_2} - z}^{-1} \pa{\boldb, \boldv}    } \notag \\
& \hskip 30 pt \le 2 s_{2, d} \ell^{2d -1}  \max_{\bolda \in \partial_-^{{\blam}_2}\pa{{\blam}_1} }  \abs{  \pa{H_{{\blam}_1} - z}^{-1}  \pa{\boldu, \bolda}} \abs{\pa{H_{{\blam}_2} - z}^{-1} \pa{\boldb_1, \boldv}    }  \notag
\end{align}
for some $\boldb_1 \in  \partial_+^{{\blam}_2}\pa{{\blam}_1}$.

\begin{definition}
Let ${\blam_S} ={\blam}^{(2)}_{\paS; \bL} (\x)$   with $\ell = \min \pa{L_1, L_2}$, and $E\in \R$.
\begin{enumerate}
\item Let $s > 0$. Then  $\boldlambda_S$ is  $(E,s)$-suitably resonant if and only if
\beq
\dist \Bl( \sigma \bigl( H_{\boldlambda_S} \bigr) , E \Br) < \ell^{-s}.
\eeq
  Otherwise, $\boldlambda_S$ is  $(E,s)$-suitably nonresonant.

 \item Let $\beta \in (0, \, 1)$.  Then $\boldlambda_S$ is  $(E,\beta)$-resonant if and only if
 \beq
 \dist \Bl( \sigma \bigl( H_{\boldlambda_S} \bigr) , E \Br) < \tfrac{1}{2}  e^{-\ell^{\beta}} .
 \eeq
  Otherwise, $\boldlambda_S$ is   $(E,\beta)$-nonresonant.
\end{enumerate}

\end{definition}

Let $\boldlambda ={\blam}_{\paS; L}^{\pat} (\x)$  be a symmetrized two-particle box. We set
\begin{align} \label{suitc0}
\Xi_{L, \ell}(\x) = \set{ \boldx + \pa{\tfrac \ell 3 +1}  \Z^{2d}}\cap \set{\y \in \R^{2d};\,\, \norm{\y-\x}_\infty \le \tfrac L 2 - \ell    }.
\end{align}
 The symmetrized $\ell$-suitable partial cover of ${\blam}$ (or the $\ell$-suitable partial cover of ${\blam}$ for short) is the collection of symmetrized two-particle boxes
\beq \label{suitc}
\cC_{L, \ell}(\x) = \set{ {\blam}^{(2)}_{\pas; \ell} (\boldy); \;  \y \in \Xi_{L, \ell}(\x)}.
\eeq
Note that  we have
\begin{gather}\label{suitc1}
{\blam}^{(2)}_{\pas; \ell} (\boldy) \subset {\blam} \qtx{and}  \partial_{-}{\blam} \, \cap \, {\blam}^{(2)}_{\pas; \ell} (\boldy)   = \emptyset  \qtx{for all} \y \in \Xi_{L, \ell}(\x),\\ \# \cC_{L, \ell}(\x) < \pa{ 2 \pa{ 3 \tfrac{L}{\ell}  + 1} }^{2d}. \notag
 \end{gather}
Moreover, for every $\boldu \in {\blam}^{(2)}_{\pas; L - 2\ell}(\x)$ there exists $\boldy \in \Xi_{L, \ell}(\x)$ such that $\boldu \in {\blam}^{(2)}_{\pas; \frac{\ell}{3}}(\y)$.

\begin{remark}
Elements in $\Xi_{L, \ell}(\x)$ will be referred to as cells. Two cells, $\bolda, \boldb \in \Xi_{L, \ell}(\x)$, are  neighbors if and only if $\dist_\infty \pa{\bolda, \boldb} = \tfrac{\ell}{3} + 1$. Moreover, by construction, if $\boldy \in \Xi_{L, \ell}(\x)$ is sufficiently far away from the boundary of ${\blam}_{\paS; L}^{\pat} (\x)$, then for every $\boldu \in \partial_{+} \pa{ {\blam}^{(2)}_{\pas, \ell}(\boldy)}$, we have $\boldu \in {\blam}^{(2)}_{\pas, \ell}(\bolda)$ for some $\bolda \in \Xi_{L, \ell}(\x)$ that is a neighbor to $\boldy$.
\end{remark}

 \begin{definition} \label{PIFIbox}
Let ${\blam} \subsetneq \Z^{2d}$. ${\blam}$
is said to be  non-interactive   if and only if for every $\y = \pa{y_1, y_2} \in {\blam}   $, we have
\beq
\norm{y_1 - y_2} > r_0
\eeq
Otherwise, it is said to be    interactive .
\end{definition}

 \begin{proposition} \label{PIcriteria}
 Let ${\blam}_1 =  \Lambda_{L}(x_1) \times \Lambda_{\ell}(x_2 )$ be a two-particle rectangle, ${\blam}   = \blam_1 \; \cup \; \pi \pa{\blam_1}$ be a symmetrized two-particle rectangle   with $L \ge \ell$. If   $\norm{x_1 - x_2} > L + r_0$, then we have the following:
 \begin{enumerate}
 \item $ {\blam} $ is non-interactive,
 \item $\dist \pa{\Lambda_{L}(x_1) , \Lambda_{\ell}(x_2)} > r_0$,
 \item $H_{{\blam}_1 }  = H_{\Lambda_{L} (x_1)} \otimes I + I \otimes H_{\Lambda_{\ell} (x_2)} $,
 \item $H_{\pi \pa{{\blam}_1}}  = H_{\Lambda_{\ell} (x_2)} \otimes I + I \otimes H_{\Lambda_{L} (x_1)} $,
 \item $\sigma \pa{ H_{{\blam}_1} } = \sigma \pa{ H_{ \pi \pa{{\blam}_1}} } = \sigma \pa{ H_{\Lambda_{L} (x_1)} } +  \, \sigma \pa{ H_{\Lambda_{\ell} (x_2)} }$,

 \item $\dist \pa{ {\blam}_1, \pi \pa{{\blam}_1}   } > 1$

  \item $H_{{\blam}} =  H_{{\blam}_1}   \oplus H_{\pi \pa{{\blam}_1}}  $,

 \item $\sigma \pa{H_{ \blam}} = \sigma \pa{ H_{\Lambda_{L} (x_1)} } + \,\sigma \pa{ H_{\Lambda_{\ell} (x_2)} }$,

 \item for every $E \in \sigma \pa{H_{{\blam}}   }$,
\beq \label{PIequality}
\pa{  H_{{\blam} } - E }^{-1} = \pa{ H_{{\blam}_1}  - E }^{-1} \oplus \pa{ H_{ \pi\pa{{\blam}_1}}  - E }^{-1}.
\eeq

\end{enumerate}
 \end{proposition}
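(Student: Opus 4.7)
The plan is to establish (i), (ii), and (vi) first as purely geometric consequences of the separation hypothesis $\norm{x_1-x_2}>L+r_0$, and then use these to deduce all the operator-theoretic statements.

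For (i), (ii), and (vi), I would run the triangle inequality on the sup-norm. Given $\boldy=(y_1,y_2)\in \blam_1$, the bounds $\norm{y_j-x_j}\le L/2$ (using $\ell\le L$) give $\norm{y_1-y_2}\ge \norm{x_1-x_2}-L>r_0$, and the symmetric case $\boldy\in\pi(\blam_1)$ is identical. The same computation applied to $a\in\Lambda_L(x_1),\,b\in\Lambda_\ell(x_2)$ yields (ii). For (vi), any $\bolda=(a_1,a_2)\in\blam_1$ and $\boldb=(b_1,b_2)\in\pi(\blam_1)$ satisfy $\norm{\bolda-\boldb}\ge \norm{a_1-b_1\}$ with $a_1\in\Lambda_L(x_1)$ and $b_1\in\Lambda_\ell(x_2)$, so (ii) forces $\norm{\bolda-\boldb}>r_0\ge 1$, which in particular means these two rectangles are not $1$-adjacent in $\Z^{2d}$.

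For (iii)--(iv), I would inspect the action of $H^{(2)}$ on $\ell^2(\blam_1)\cong \ell^2(\Lambda_L(x_1))\otimes \ell^2(\Lambda_\ell(x_2))$. The Laplacian $\Delta^{(2)}$ splits as $\Delta\otimes I+I\otimes \Delta$ on any product set, and the random potential $V_\bom^{(2)}(x_1,x_2)=\omega_{x_1}+\omega_{x_2}$ is already separated. The only obstacle to the decomposition is the interaction $U(x_1,x_2)=\widetilde U(x_1-x_2)$, but by (ii) every $(a,b)\in \Lambda_L(x_1)\times\Lambda_\ell(x_2)$ satisfies $\norm{a-b}>r_0$, so $\widetilde U(a-b)=0$ by the finite-range hypothesis in Definition 2.1(iv). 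Hence $U$ vanishes on $\blam_1$ and the tensor splitting holds; the $\pi(\blam_1)$ case is the same.

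Statement (v) is the standard identity $\sigma(A\otimes I+I\otimes B)=\sigma(A)+\sigma(B)$ for bounded self-adjoint $A,B$; for (vii), the graph-distance separation from (vi) ensures that the hopping term of $-\Delta^{(2)}$ never connects $\blam_1$ to $\pi(\blam_1)$, and the potentials act diagonally, so $H_\blam=H_{\blam_1}\oplus H_{\pi(\blam_1)}$ as operators on $\ell^2(\blam)=\ell^2(\blam_1)\oplus\ell^2(\pi(\blam_1))$. Then (viii) is immediate from (v) applied on each summand combined with the fact that $\pi$ just permutes the roles of the two single-particle boxes (so the two spectra coincide), and (ix) is the functional-calculus formula for the resolvent of a direct sum. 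I expect no genuine obstacle; the only place requiring care is bookkeeping in (vii)--(ix) to ensure that the decomposition of $\ell^2(\blam)$ is orthogonal, which is guaranteed by (vi) since $\blam_1\cap\pi(\blam_1)=\emptyset$ (indeed, $\dist(\blam_1,\pi(\blam_1))>1$).
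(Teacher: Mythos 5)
Your proposal is correct and follows essentially the same chain of implications as the paper: the geometric separation $\norm{x_1-x_2}>L+r_0$ gives (i), (ii), (vi), which in turn yield the tensor-product decomposition on each rectangle and the orthogonal direct sum across $\blam_1$ and $\pi(\blam_1)$. The only difference is that you re-derive the tensor splitting (iii)--(v) directly by noting $U$ vanishes on $\blam_1$, whereas the paper delegates that step to a citation of \cite[Lemma 2.7]{KlN1}.
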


 \begin{proof}
It's clear that $(i) \Rightarrow (ii)$, $(vi) \Rightarrow (vii)$, and $(vii) \Rightarrow (viii) \text{ and } (ix)$. By \cite[Lemma 2.7]{KlN1}, $(ii) \Rightarrow (iii), (iv), \text{ and }  (v)$. $(ii) \Rightarrow (vi)$ since $r_0 \ge 1$.
\end{proof}

\begin{remark}
Note that a two-particle box ${\blam}^{(2)}_{L}(\x)$ is  non-interactive if and only if  ${\blam}^{(2)}_{\pas; L}(\x)$ is non-interactive. Hence, if  ${\blam}^{(2)}_{L}(\x)$ is  non-interactive, then ${\blam}^{(2)}_{\pas; L}(\x)$ satisfies (ii)-(ix).
\end{remark}

We recall   \cite[Lemma 3.9]{KlN2}, which is an important ingredient for the two-particle bootstrap MSA (see also \cite[Lemma 2.8]{KlN1}).

\begin{lemma}\label{PIsuit}
  Let ${\blam}^{(2)}_{L}(\boldx) $ be a two-particle box and ${\blam}^{(2)}_{\paS; L}(\x)  $ be the corresponding symmetrized two-particle box. Let  $E \leq E^{(2)}$ and $E^{(1)}>0$ such that $E^{(1)}>E^{(2)}$. If $\norm{x_1 - x_2} > L + r_0 $ and  $L$ is sufficiently large,

\begin{enumerate}

\item   Given $\theta> 2d+2 $, suppose  ${\La_{L}(x_1)}$ is $(\theta, \,E - \mu  )$-suitable  for every $\mu \in \sigma \pa{H_{\La_{L}(x_2)}}\cap (-\infty,E^{(1)}]   $ and  ${\La_{L}(x_2)}$ is $(\theta, \,E - \lambda  )$-suitable  for every $\lambda \in \sigma \pa{H_{\La_{L}(x_1)}}\cap (-\infty,E^{(1)}]  $.  Then ${\blam}^{(2)}_{\paS; L}(\x) $ is $\left(\tfrac{\theta}{2}, \, E \right)$-suitable.

\item   Given $0<m< \log\pa{\frac{E^{(1)}-E^{(2)}}{4d}+1}$, suppose  ${\La_{L}(x_1)}$ is $(m,\, E - \mu)$-regular  for every $\mu \in \sigma \pa{H_{\La_{L}(x_2)}}\cap (-\infty,E^{(1)}]   $ and  ${\La_{L}(x_2)}$ is $(m,\, E - \lambda)$-regular  for every $\lambda \in \sigma \pa{H_{\La_{L}(x_1)}} \cap (-\infty,E^{(1)}]  $. Then ${\blam}^{(2)}_{\paS; L}(\x) $ is $\pa{ m-\tfrac{6(d+1) \log 2L}{L}, \, E }$-regular.

\item   Given $0 < \zeta^{\prime} < \zeta < 1$, suppose  ${\La_{L}(x_1)}$ is $(\zeta,\, E - \mu )$-SES  for every $\mu \in \sigma \pa{H_{\La_{L}(x_2)}} \cap (-\infty,E^{(1)}]   $ and  ${\La_{L}(x_2)}$ is $(\zeta,\, E - \lambda)$-SES  for every $\lambda \in \sigma \pa{H_{\La_{L}(x_1)}}\cap (-\infty,E^{(1)}]   $.  Then ${\blam}^{(2)}_{\paS; L}(\x) $ is  $\left(\zeta^{\prime}, \, E \right)$-SES.
\end{enumerate}
\end{lemma}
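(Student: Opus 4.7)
My plan is to exploit the non-interactive decomposition afforded by the hypothesis $\norm{x_1-x_2}>L+r_0$. Proposition \ref{PIcriteria}(vii)--(ix) yields the orthogonal splitting
\[
H_{{\blam}^{(2)}_{\paS;L}(\x)}=H_{{\blam}_1}\oplus H_{\pi({\blam}_1)},\qquad {\blam}_1=\La_L(x_1)\times\La_L(x_2),
\]
with $H_{{\blam}_1}=H_{\La_L(x_1)}\otimes I+I\otimes H_{\La_L(x_2)}$ in tensor product form. The matrix elements of $G_{{\blam}^{(2)}_{\paS;L}(\x)}(E;\boldu,\boldv)$ vanish between the two summands and otherwise coincide with those of $G_{{\blam}_1}$ (or $G_{\pi({\blam}_1)}$, which is equivalent by conjugation by $\pi$); by symmetry I only need to handle the ${\blam}_1$ case. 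The separation condition forces $\boldu\in {\blam}^{(2)}_{\paS;L/3}(\x)\cap{\blam}_1$ to satisfy $u_j\in\La_{L/3}(x_j)$, $j=1,2$, while $\boldv\in\partial_-{\blam}^{(2)}_{\paS;L}(\x)\cap{\blam}_1\subset\partial_-{\blam}_1$ forces either $v_1\in\partial_-\La_L(x_1)$ or $v_2\in\partial_-\La_L(x_2)$.

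Next, I diagonalize $H_{\La_L(x_2)}$ in an orthonormal eigenbasis $\{\psi_\mu\}_{\mu\in\sigma(H_{\La_L(x_2)})}$, giving
\[
G_{{\blam}_1}(E;\boldu,\boldv)=\sum_\mu \psi_\mu(u_2)\overline{\psi_\mu(v_2)}\,G_{\La_L(x_1)}(E-\mu;u_1,v_1),
\]
together with the symmetric identity summing over $\lambda\in\sigma(H_{\La_L(x_1)})$. I choose the version in which the one-particle Green's function contains the boundary coordinate, and split the sum at $E^{(1)}$. For $\mu\le E^{(1)}$ the appropriate one-particle hypothesis -- $(\theta,E-\mu)$-suitability in (i), $(m,E-\mu)$-regularity in (ii), $(\zeta,E-\mu)$-SES in (iii) -- delivers bounds $L^{-\theta}$, $e^{-mL/2}$, or $e^{-L^\zeta}$ on $\abs{G_{\La_L(x_1)}(E-\mu;u_1,v_1)}$. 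For $\mu>E^{(1)}$, the nonnegativity of the potential gives $\sigma(H_{\La_L(x_1)})\subset[0,\infty)$, so $\dist(E-\mu,\sigma(H_{\La_L(x_1)}))\ge\mu-E^{(2)}\ge E^{(1)}-E^{(2)}>0$; the discrete Combes--Thomas estimate from Appendix~\ref{apCT} then yields $\abs{G_{\La_L(x_1)}(E-\mu;u_1,v_1)}\le Ce^{-\gamma\norm{u_1-v_1}}$ with a rate $\gamma=\gamma(E^{(1)}-E^{(2)},d)>0$ that is monotone in $\mu$ and hence bounded below uniformly. Since $u_1\in\La_{L/3}(x_1)$ and $v_1\in\partial_-\La_L(x_1)$ give $\norm{u_1-v_1}\gtrsim L/3$, this tail is exponentially small in $L$.

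Finally I recombine: applying the pointwise bound $\abs{\psi_\mu(\cdot)}\le 1$ and the Weyl-type count $\#\sigma(H_{\La_L(x_2)})\le(L+1)^d$, the spectral sum contributes at most a polynomial prefactor in $L$. This yields, for $L$ large, a bound
\[
\abs{G_{{\blam}_1}(E;\boldu,\boldv)}\le (2L)^{d}\bigl(L^{-\theta}+Ce^{-\gamma L/3}\bigr)\quad\text{in case (i)},
\]
absorbed into $L^{-\theta/2}$ once $\theta>2d+2$ and $L$ is large; the same pattern produces $e^{-(m-\frac{6(d+1)\log 2L}{L})L/2}$ in (ii), where the constant $6(d+1)$ arises from packaging the polynomial prefactor together with the Combes--Thomas tail into a uniform rate, with the hypothesis $m<\log\bigl(\frac{E^{(1)}-E^{(2)}}{4d}+1\bigr)$ guaranteeing that this tail does not spoil the rate; and in (iii) it gives $e^{-L^{\zeta'}}$ for any $\zeta'<\zeta$ after trading exponent for the polynomial factor. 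I expect the main technical point to be securing a uniform positive lower bound on the Combes--Thomas rate for all $\mu>E^{(1)}$, which is handled by the monotonicity of that rate in $\dist(E-\mu,\sigma(H_{\La_L(x_1)}))$: its infimum is attained as $\mu\downarrow E^{(1)}$ and depends only on the spectral separation $E^{(1)}-E^{(2)}$ and on $d$.
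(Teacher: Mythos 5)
Your proof is correct and takes essentially the same approach as the paper's: use Proposition \ref{PIcriteria} to reduce to the non-interactive direct sum $H_{{\blam}_1}\oplus H_{\pi({\blam}_1)}$, expand the two-particle Green's function through a one-particle eigenbasis, split the spectral sum at $E^{(1)}$, apply the one-particle hypothesis below $E^{(1)}$ and Theorem~\ref{thmCT} (Combes--Thomas, via the spectral gap $E^{(1)}-E^{(2)}$) above it, and absorb the polynomial eigenvalue count into the decaying factor. The paper writes out only case (ii) explicitly and declares (i), (iii) analogous; you have filled in all three, but with the same underlying argument.
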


\begin{proof} We prove (ii), the proofs of parts (i) and (iii) being similar. We begin by  showing that ${\blam}^{(2)}_{L}(\x)$ is $\left(m-\tfrac{6(d+1) \log 2L}{L}, \, E \right)$-regular.

Given $\boldu \in {\blam}^{(2)}_{L/3}(\x)$ and $\y\in \partial_{-} {\blam}^{(2)}_{L}(\x) $, $\norm{\boldu-\y}> \frac{L}{6}$. Then either $\norm{u_1-y_1}> \frac{L}{6}$ or  $\norm{u_2-y_2}> \frac{L}{6}$. Without loss of generality we assume the latter.
We write $\sigma_1 :=\sigma \pa{H_{\La_{L}(x_1)}}\cap (-\infty,E^{(1)}]$ and $\sigma_1^c:=\sigma \pa{H_{\La_{L}(x_1)}}\cap (E^{(1)}, +\infty)$. We use the hypothesis, the Combes-Thomas estimate from Theorem \ref{thmCT} (it is enough to take $\eps=1/2$ there) and the fact that $\dist\pa{E-\mu,\sigma \pa{H_{\La_{L}(x_1)}}}> E^{(1)}-E^{(2)}$ for $\mu\in\sigma_1^c$ and $E\leq E^{(2)}$, to show that
\begin{align}
  \abs{ G_{{\blam}} (E;\,\boldu,\,\y) } & \leq \sum_{ \mu \in \sigma_1}
\abs{ G_{ \Lambda_{L}( x_2)} (E - \mu;\,u_2,\,y_2) }	\notag	 \\
& \quad \quad + \sum_{ \mu \in \sigma_1^c}
 \abs{ G_{ \Lambda_{L}( x_2)} (E - \mu;\,u_2,\,y_2) }	\notag	 \\
&   \le L^d e^{-m\norm{u_2-y_2}} + \frac{2L^d}{E^{(1)}-E^{(2)}} e^{-\log\pa{\frac{E^{(1)}-E^{(2)}}{4d}+1} \norm{u_2-y_2}}\notag\\
&  \le \; L^{d+1}\pa{e^{-m\norm{u_2-y_2}}+ e^{-\log\pa{\frac{E^{(1)}-E^{(2)}}{4d}+1} \norm{u_2-y_2}}} .
\end{align}
for $L$ large enough. Taking $ m< \log\pa{\frac{E^{(1)}-E^{(2)}}{4d}+1}$ yields that, for $L$ large enough, depending on $E^{(1)}-E^{(2)}$ and $d$, the box ${ {\blam}^{(2)}_{L}(\boldx) }$ is $\left(m-\tfrac{6(d+1) \log 2L}{L}, \, E \right)$-regular. Similarly, one can show the same holds for ${\blam}^{(2)}_{L}(\pi(\x))$. By Equation \eqref{PIequality}, we conclude that ${\blam}^{(2)}_{\paS; L}(\x) $ is $\pa{ m-\tfrac{6(d+1) \log 2L}{L}, \, E }$-regular.
\end{proof}

\begin{proposition} \label{parsep}
Given a pair of symmetrized two-particle boxes, $ {\blam}^{(2)}_{\paS; L} (\x)$ and $ {\blam}^{(2)}_{\paS; L} (\y)$   such that  $ d_{S} \pa{\x, \y } > L$, then the pair is partially separated.
\end{proposition}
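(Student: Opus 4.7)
The plan is to unpack the hypothesis $d_S(\x,\y) > L$ directly and match it against the definition of partial separation by a short case analysis. Since $\cP_2$ consists only of the identity and the transposition, $d_S(\x,\y) > L$ is equivalent to
\[
\max\bigl(\norm{x_1 - y_1},\norm{x_2 - y_2}\bigr) > L \quad\text{and}\quad \max\bigl(\norm{x_1 - y_2},\norm{x_2 - y_1}\bigr) > L.
\]
So at least one of $\norm{x_1-y_1} > L$, $\norm{x_2-y_2} > L$ holds, and at least one of $\norm{x_1-y_2} > L$, $\norm{x_2-y_1} > L$ holds, giving four logical cases.

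The key observation is that $\La_L(a) \cap \La_L(b) = \emptyset$ iff $\norm{a-b} > L$, and
\[
\Pi\,\blam^{(2)}_{\paS;L}(\y) = \La_L(y_1) \cup \La_L(y_2), \qquad \Pi_j\,\blam^{(2)}_{L}(\x) = \La_L(x_j),
\]
so partial separation amounts to finding an index $j$ such that $\norm{x_j - y_1} > L$ and $\norm{x_j - y_2} > L$, or the symmetric condition with $\x$ and $\y$ interchanged. I now check that each of the four cases above yields such a $j$:

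\begin{itemize}
\item If $\norm{x_1-y_1} > L$ and $\norm{x_1-y_2} > L$, take $j=1$ in the first alternative: $\Pi_1\,\blam^{(2)}_L(\x)\cap\Pi\,\blam^{(2)}_{\paS;L}(\y) = \emptyset$.
\item If $\norm{x_1-y_1} > L$ and $\norm{x_2-y_1} > L$, take $j=1$ in the second alternative: $\Pi_1\,\blam^{(2)}_L(\y)\cap\Pi\,\blam^{(2)}_{\paS;L}(\x) = \emptyset$.
\item If $\norm{x_2-y_2} > L$ and $\norm{x_1-y_2} > L$, take $j=2$ in the second alternative.
\item If $\norm{x_2-y_2} > L$ and $\norm{x_2-y_1} > L$, take $j=2$ in the first alternative.
\end{itemize}

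Since the disjunctions force at least one of these four cases to hold, the pair is partially separated in every case. There is no real obstacle here; the argument is essentially a table lookup once the definitions of $d_S$, $\Pi$, and $\Pi_j$ are made explicit. The proof is a direct verification and should be written in a few lines.
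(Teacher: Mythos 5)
Your proof is correct and follows essentially the same route as the paper: unpack $d_S(\x,\y) > L$ into the two ``max'' conditions, observe that this yields four cases, and in each case identify an index $j$ and an alternative in the definition of partial separation via disjointness of the relevant one-particle boxes. The only difference is cosmetic — you write out all four cases, whereas the paper does one and asserts the others are analogous.
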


\begin{proof}
Since $\norm{\x - \y} > L$ and $\norm{\pi (\x) - \y} > L$, we have
\begin{align}\label{max}
& \max \set{ \norm{x_1 - y_1}, \norm{x_2 - y_2}  } > L , \qtx{and} \\
&\max \set{ \norm{x_1 - y_2}, \norm{x_2 - y_1}  } >  L . \notag
\end{align}
There are four cases to consider here. If we take $\norm{x_1 - y_1} > L$ and $\norm{x_2 - y_1} >  L$, then we have that the pair is partially separated since
\beq
\Pi_1 {\blam}^{(2)}_{ L} (\y) \bigcap   {\blam}^{(2)}_{\paS; L} (\x) = \emptyset.
\eeq
The other three cases are handled in the same way.
\end{proof}

\begin{definition} \label{Ldistant}
Given a pair of symmetrized two-particle boxes, $ {\blam}^{(2)}_{\paS; L} (\x)$ and $ {\blam}^{(2)}_{\paS; L} (\y)$. We say that the pair is $L$-distant if and only if
\beq
\dist_{S} \pa{\x, \y } > 8L.
\eeq
\end{definition}

It follows from Proposition \ref{parsep} that $L$-distant automatically implies partially separated.

\begin{proposition}\label{fullysep}
Let ${\blam}^{(2)}_{\paS; L} (\x) $ and  $ {\blam}^{(2)}_{\paS; L} (\y) $ be a pair of symmetrized two-particle boxes.
 If the pair is interactive and $L$-distant, then ${\blam}^{(2)}_{\paS; L} (\x) $ and ${\blam}^{(2)}_{\paS; L} (\y)  $ are fully separated, provided $L$ is sufficiently large.

\end{proposition}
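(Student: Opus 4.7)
The plan is to translate each of the two hypotheses into a concrete metric statement about the centers $\x = (x_1, x_2)$ and $\y = (y_1, y_2)$, and then combine them by a triangle inequality.

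First I would use Proposition~\ref{PIcriteria} in its contrapositive form to read off what ``interactive'' means for a symmetrized two-particle box. If $\blam^{(2)}_{\paS; L}(\x)$ is interactive, there exists $\boldz = (z_1, z_2)$ with $\norm{z_1 - z_2} \le r_0$ and $\dist_S(\boldz, \x) \le \tfrac{L}{2}$, so either $\norm{\boldz - \x} \le \tfrac{L}{2}$ or $\norm{\boldz - \pi(\x)} \le \tfrac{L}{2}$. In both cases the triangle inequality yields
\[
\norm{x_1 - x_2} \;\le\; \tfrac{L}{2} + r_0 + \tfrac{L}{2} \;=\; L + r_0.
\]
The same argument gives $\norm{y_1 - y_2} \le L + r_0$. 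This is the substantive use of the interactivity hypothesis: both pairs of coordinates are close to the diagonal.

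Next I would exploit the $L$-distant hypothesis. Since $\dist_S(\x, \y) > 8L$, in particular $\norm{\x - \y} > 8L$, so at least one of $\norm{x_1 - y_1}$ and $\norm{x_2 - y_2}$ exceeds $8L$; say $\norm{x_1 - y_1} > 8L$ (the other case is symmetric, and the companion bound coming from $\norm{\pi(\x) - \y} > 8L$ is not even needed). Then the triangle inequality, together with the two interactivity bounds just derived, gives
\begin{align*}
\norm{x_2 - y_2} &\ge \norm{x_1 - y_1} - \norm{x_1 - x_2} - \norm{y_1 - y_2} > 8L - 2(L + r_0) = 6L - 2r_0, \\
\norm{x_1 - y_2} &\ge \norm{x_1 - y_1} - \norm{y_1 - y_2} > 7L - r_0, \\
\norm{x_2 - y_1} &\ge \norm{x_1 - y_1} - \norm{x_1 - x_2} > 7L - r_0.
\end{align*}

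Finally, for $L$ large enough (explicitly $L > \tfrac{2r_0}{5}$ suffices) all four distances $\norm{x_i - y_j}$, $i,j \in \{1,2\}$, exceed $L$. Since $\Pi\blam^{(2)}_{\paS;L}(\x) = \Lambda_L(x_1) \cup \Lambda_L(x_2)$ and likewise for $\y$, and $\Lambda_L(x_i) \cap \Lambda_L(y_j) = \emptyset$ is exactly the statement $\norm{x_i - y_j} > L$, we conclude that $\Pi\blam^{(2)}_{\paS;L}(\x) \cap \Pi\blam^{(2)}_{\paS;L}(\y) = \emptyset$, i.e., the pair is fully separated in the sense of Definition~\ref{part-ful-sep}. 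There is no real obstacle here; the point of the proposition is simply that, for interactive boxes, ``clustered centers'' force the $S$-distance and the $\infty$-distance between the two boxes to behave comparably, so that $L$-distance in the symmetrized sense upgrades to full separation of projections.
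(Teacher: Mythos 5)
Your proof is correct and follows essentially the same strategy as the paper's: both arguments use interactivity to bound $\norm{x_1-x_2}$ and $\norm{y_1-y_2}$ by $L+r_0$, then use the $L$-distance hypothesis and the triangle inequality to push all four centers apart. The only cosmetic difference is that you verify $\norm{x_i - y_j} > L$ for each of the four pairs directly, whereas the paper encloses each projection $\Pi\blam^{(2)}_{\paS;L}(\cdot)$ inside a box $\La_{4L}$ about one coordinate and checks that the two enlarged boxes are disjoint — the same triangle-inequality content in slightly different packaging.
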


\begin{proof}
Since ${\blam}^{(2)}_{\paS; L} (\x) $ and ${\blam}^{(2)}_{\paS; L} (\y)  $ are interactive, thus there exists $\bolda \in {\blam}^{(2)}_{\paS; L} (\x)$ and $\boldb \in {\blam}^{(2)}_{\paS; L} (\y)$ such that
\beq
\norm{a_1 - a_2 } \le r_0, \qtx{and} \norm{b_1 - b_2} \le r_0.
\eeq
Thus, we have
\beq\label{inter}
\dist \pa{\La_{L}(x_1), \La_{L}(x_2)} \le r_0 \qtx{and} \dist \pa{\La_{L}(y_1), \La_{L}(y_2)} \le r_0.
\eeq
Since $d_{S} \pa{\x, \y } > 8L  $, we can proceed as in \eqref{max} and see that, in particular
\beq\notag  \max \set{ \norm{x_1 - y_1}, \norm{x_2 - y_2}  } > 8L.\eeq

Thus, $ \norm{x_1 - y_1} > 8L$ or $ \norm{x_2 - y_2} > 8L$. Without loss of generality, let us  assume $ \norm{x_1 - y_1} > 8L$, then
\beq
\La_{4L}(x_1) \bigcap \La_{4L}(y_1) = \emptyset.
\eeq
Moreover, \eqref{inter}
implies
\begin{align}
& \La_{L}(x_1) \bigcup \La_{L}(x_2) \subseteq \La_{3L + 2r_0}(x_1) \subset \La_{4L}(x_1)\; \text{ and } \notag\\
& \hskip 30 pt \La_{L}(y_1) \bigcup \La_{L}(y_2) \subseteq \La_{3L + 2r_0}(y_1) \subset \La_{4L}(y_1),
\end{align}
provided $L$ is sufficiently large. Therefore,
\beq
 \Pi {\blam}^{(2)}_{\paS; L} (\x)  \bigcap  \Pi {\blam}^{(2)}_{\paS; L} (\y) = \emptyset.
 \eeq
\end{proof}

\subsection{The Bootstrap Multiscale Analysis}\label{S:MSA33}

The proof of  Theorem~\ref{MSAthm} is similar to the proof of \cite[Theorem 1.6]{KlN2}.
It uses the corresponding result for one-particle boxes from \cite{GK1} and the definitions and results for symmetrized boxes introduced in the previous sections. We will state the steps needed for the proof and refer to \cite{KlN1} and \cite{KlN2} for details.

\subsubsection{The initial step for the MSA at the bottom of the spectrum}
That the hypotheses of Theorem \ref{MSAthm} are satisfied at the bottom of the spectrum can be proven in the same way as  \cite[Theorem 4.1]{KlN2} using Theorem~\ref{thmCT}.

\subsubsection{Consequences of the one-particle case}

For the one-particle model, note that symmetrized boxes are usual boxes in the distance $\dist_\infty$. In this case we know from \cite[Theorem 3.4]{GK1} that there exists $E^{(1)}>0$ such that, given $E^{(2)}<E^{(1)}$, for every  $\tau \in (0, \, 1)$ there is a length scale $L_{\tau} $,  $\delta_{\tau}>0$, and $ 0< m_\tau^{*} < \log\pa{\frac{E^{(1)}-E^{(2)}}{4d}+1}$, such that the following hold for all $E\leq E^{(1)}$  :
\begin{enumerate}[i)]
\item   For all $L \geq L_{\tau}$ and $a \in  \R^{d}$ we have
 \begin{align} \label{eq001}
\P \Bigl\{   \Lambda_{L}(a) \,\, \text{is  $ \left (m^{*}_\tau, \,E \right )$-nonregular} \Bigr\}    \leq e^{-L^{\tau}}.
\end{align}

\item Let $I(E)=[E-\delta_{\tau}, E+\delta_{\tau}] \subset (-\infty,E^{(1)}]$.  For all $L \geq L_{\tau}$  and all pairs  of
 disjoint one-particle boxes $\Lambda_{L}(a)$ and $\Lambda_{L}(b)$ we have
\begin{align} \notag
&\P \Bigl\{ \exists \, E^\pr \in I(E) \; \text{so both} \;  \Lambda_{L}(a) \text{ and}\: \Lambda_{L}(b) \; \text{are}\; \left ({m_\tau^*}, \,E^\pr \right )\text{-nonregular} \Bigr\} \\
& \hspace{50pt} \leq e^{-L^{\tau}} .  \label{eq002}
\end{align}
\end{enumerate}

\begin{remark}\label{lktocont}The result of \cite{GK1} holds for a sequence of scales.  It holds for all sufficiently large scales using \cite[ Lemma 3.16]{GKber} as in \cite[Theorem 3.21]{KlN1}.
\end{remark}

 This result yields probability estimates for non-interactive symmetrized two-particle boxes.

\begin{lemma} \label{PINSl}
Let ${\blam}^{(2)}_{\ell}(\x) = \Lambda_{\ell}({x}_{1}) \times  \Lambda_{\ell}({x}_{2} )$ be a non-interactive two-particle box and $\tau \in (0,1)$. Then for $\ell$ large enough, depending on $E^{(1)},E^{(2)},d,\tau$, and for all $E\leq E^{(2)}$  we have
\begin{align}
&\P  \Bigl\{ {\blam}^{(2)}_{\paS; \ell}(\x) \;\text{is} \; \left( m^*_{\tau}(\ell),E \right)\text{-nonregular} \Bigr\} \leq \ell^{2d} e^{-\ell^{\tau}},\sqtx{with}m^*_{\tau}(\ell)=m^*_{\tau}- \tfrac{6(d+1)\log 2\ell}{\ell}, \notag\\
&\P  \Bigl\{ {\blam}^{(2)}_{\paS; \ell}(\x) \,\, \text{is} \,\, \left( \theta,\,E \right)\text{-nonsuitable} \Bigr\} \leq \ell^{2d} e^{-\ell^{\tau}} \text{for} \; \; \theta < \tfrac \ell {\log \ell} \tfrac{ m^*_{\tau}(\ell)}{2},\\ \notag
&\P  \Bigl\{ {\blam}^{(2)}_{\pas; \ell}(\x) \,\, \text{is} \,\, \left(\tau,\,E \right)\text{-nonSES} \Bigr\} \leq \ell^{2d} e^{-\ell^{\tau}}.
\end{align}
\end{lemma}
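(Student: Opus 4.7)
The plan is to deduce all three estimates from the corresponding one-particle MSA outputs \eqref{eq001}, combined with Lemma~\ref{PIsuit}, by exploiting the fact that non-interactivity forces the two one-particle boxes to be disjoint, hence the random variables they depend on to be independent.

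Since $\blam^{(2)}_\ell(\x)$ is non-interactive, $\norm{x_1-x_2}>\ell+r_0$, so $\Lambda_\ell(x_1)\cap\Lambda_\ell(x_2)=\emptyset$, and Proposition~\ref{PIcriteria} (together with the remark after it) applies to $\blam^{(2)}_{\paS;\ell}(\x)$. In particular $H_{\blam^{(2)}_{\paS;\ell}(\x)}$ splits as in \eqref{PIequality}, and the families $\{\omega_x\}_{x\in\Lambda_\ell(x_1)}$ and $\{\omega_x\}_{x\in\Lambda_\ell(x_2)}$ are independent. Note also that the spectra of $H_{\Lambda_\ell(x_j)}$ lie in $[0,\infty)$ and, since $E\le E^{(2)}<E^{(1)}$, every shifted energy $E-\mu$ with $\mu\ge 0$ satisfies $E-\mu\le E^{(1)}$, so it falls within the range where \eqref{eq001} applies.

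For the regularity estimate, I would condition on $\{\omega_x\}_{x\in\Lambda_\ell(x_2)}$. This fixes $\sigma(H_{\Lambda_\ell(x_2)})$, which has cardinality at most $|\Lambda_\ell(x_2)|\le\ell^d$. By independence, the law of $H_{\Lambda_\ell(x_1)}$ is unaffected, so \eqref{eq001} applied to each shifted energy $E-\mu$, together with a union bound over $\mu\in\sigma(H_{\Lambda_\ell(x_2)})\cap(-\infty,E^{(1)}]$, yields that with conditional probability at least $1-\ell^d e^{-\ell^\tau}$ the box $\Lambda_\ell(x_1)$ is $(m^*_\tau,E-\mu)$-regular for every such $\mu$. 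Integrating out and repeating with the roles of $x_1,x_2$ swapped, a further union bound gives a total failure probability at most $2\ell^d e^{-\ell^\tau}\le\ell^{2d}e^{-\ell^\tau}$ for $\ell$ large. On the complementary event, Lemma~\ref{PIsuit}(ii) forces $\blam^{(2)}_{\paS;\ell}(\x)$ to be $(m^*_\tau(\ell),E)$-regular, giving the first bound.

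The suitability and SES estimates follow by the same conditioning/union-bound argument, after replacing Lemma~\ref{PIsuit}(ii) with parts (i) and (iii) respectively. For suitability, the assumption $\theta<\tfrac{\ell}{\log\ell}\tfrac{m_\tau^*(\ell)}{2}$ is precisely what is needed to upgrade one-particle $(m^*_\tau,E-\mu)$-regularity to one-particle $(2\theta,E-\mu)$-suitability (via $e^{-m^*_\tau\ell/2}\le\ell^{-2\theta}$), so that Lemma~\ref{PIsuit}(i) applied with parameter $2\theta$ delivers $(\theta,E)$-suitability of the symmetrized two-particle box. For SES, fix any $\tau<\zeta<1$; for $\ell$ large, regularity with fixed constant $m^*_\tau$ implies $(\zeta,E-\mu)$-SES in each one-particle box (since $m^*_\tau\ell/2\ge\ell^\zeta$), and Lemma~\ref{PIsuit}(iii) with $\zeta'=\tau$ then produces $(\tau,E)$-SES of $\blam^{(2)}_{\paS;\ell}(\x)$.

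The one delicate point is that the eigenvalues at which we want to apply \eqref{eq001} are themselves random; this is exactly what the conditioning argument overcomes, relying on the disjointness of $\Lambda_\ell(x_1)$ and $\Lambda_\ell(x_2)$ that non-interactivity (with $r_0\ge 1$) guarantees. Everything else is a matter of absorbing polynomial factors into the subexponential bound for sufficiently large $\ell$ (depending on $E^{(1)},E^{(2)},d,\tau$).
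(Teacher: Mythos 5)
Your high-level plan — exploiting that non-interactivity forces $\Lambda_\ell(x_1)$ and $\Lambda_\ell(x_2)$ to be disjoint (hence the corresponding $\omega$-variables independent), conditioning on $\{\omega_x\}_{x\in\Lambda_\ell(x_2)}$, taking a union bound over the at most $\ell^d$ eigenvalues of $H_{\Lambda_\ell(x_2)}$ lying in $(-\infty,E^{(1)}]$, applying \eqref{eq001} at each shifted energy $E-\mu$, then invoking Lemma~\ref{PIsuit} via the splitting \eqref{PIequality}, and finally symmetrizing over the two projections — is sound and is essentially the argument behind the cited \cite[Lemma 5.1]{KlN2}. The regularity and SES bounds come out correctly.

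There is, however, an arithmetic gap in the suitability step. You claim that $e^{-m^*_\tau\ell/2}\le\ell^{-2\theta}$ is precisely the lemma's hypothesis $\theta<\tfrac{\ell}{\log\ell}\tfrac{m^*_\tau(\ell)}{2}$. But $e^{-m^*_\tau\ell/2}\le\ell^{-2\theta}$ is equivalent to $\theta\le\tfrac{m^*_\tau\,\ell}{4\log\ell}$, which for $\ell$ large is a \emph{strictly smaller} range of $\theta$ than $\theta<\tfrac{m^*_\tau(\ell)\,\ell}{2\log\ell}$ (since $m^*_\tau(\ell)\to m^*_\tau$ implies $m^*_\tau(\ell)/2>m^*_\tau/4$ eventually). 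So the detour through one-particle suitability and Lemma~\ref{PIsuit}(i) does not cover the whole range of $\theta$ asserted in the statement. The fix is not to re-run Lemma~\ref{PIsuit} at all for the second bound: on the good event you have already proved that $\blam^{(2)}_{\pas;\ell}(\x)$ is $(m^*_\tau(\ell),E)$-regular, which means
$\abs{G_{\blam}(\boldu,\boldv;E)}\le e^{-m^*_\tau(\ell)\ell/2}$ for the relevant $\boldu,\boldv$, and this is $\le\ell^{-\theta}$ exactly when $\theta\le\tfrac{\ell}{\log\ell}\tfrac{m^*_\tau(\ell)}{2}$. Thus the nonsuitable bound is a direct corollary of the nonregular bound with the same probability, and the nonSES bound similarly follows from $m^*_\tau(\ell)\ell/2\ge\ell^\tau$ for $\ell$ large. (For the later applications in the paper, where $\theta$ is fixed and $\ell\to\infty$, either range of $\theta$ would have sufficed, so the slip is harmless downstream, but the equivalence as you wrote it is false.)
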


\begin{proof} Since both ${\blam}^{(2)}_{\ell}(\x)$ and ${\blam}^{(2)}_{\ell}(\pi(\x))$ are non-interactive, the proof is a direct consequence of \cite[Lemma 5.1]{KlN2} adapted to the discrete setting and Lemma \ref{PIsuit}.
\end{proof}

In what follows, we fix $\zeta, \tau, \beta, \zeta_0, \zeta_1, \zeta_2, \gamma$ such that
\begin{gather} \label{constant}
0 < \zeta < \tau < 1, \qtx{} \gamma > 1, \\
\zeta < \zeta_2 < \gamma \zeta_2 < \zeta_1 < \gamma \zeta_1 < \beta <\zeta_0 < \tau \qtx{with} \zeta \gamma^2 < \zeta_2. \notag
\end{gather}
To aliviate the notation, in what follows we will omit the upperscript $(2)$ from the notation of two-particle boxes and
write simply $\blam_L(\x),\blam_{S;L}(\x)$.

\subsubsection{The first Multiscale Analysis for symmetrized two-particle boxes}

\begin{proposition}\label{1stMSA} Let $E\leq E^{(2)}$,  $\theta>16d$, $0<p< \theta-8d +2$,
$Y \geq 66$, and $0<  p_0 < ( 6Y + 2  ) ^{-4d }$.   If for some sufficiently large $L_0 \in  \N$  we have
\beq
\sup_{\boldx \in \R^{2d}} \P \Bigl\{ \mathbf{\Lambda}_{\paS; L_0}(\boldx) \, \,  \text{is}\,\,(\theta, \,E)  \text{-nonsuitable} \Bigr\} \leq p_0,
\eeq
then, setting $L_{k+1} = Y\,L_k,$ for $k = 0, 1, 2, ...,$ there exists $K_0 \in \N$ such that for every $k \geq K_0$ we have
\beq
\sup_{\boldx \in \R^{2d}} \P \Bigl\{ \mathbf{\Lambda}_{\paS;L_k} (\boldx)    \,\,\text{is}\,\,(\theta, \,E)  \text{-nonsuitable} \Bigr\} \leq L_{k}^{-p}.
\eeq
\end{proposition}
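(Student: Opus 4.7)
The plan is to argue by induction on $k \geq 0$, following closely the fixed-energy (``first'') multiscale analysis for $N$-particle $\infty$-boxes in \cite[Theorem~5.2]{KlN2}, but working throughout with symmetrized two-particle boxes and relying on the geometric tools developed in Subsection~\ref{S:preMSA}. The base case $k=0$ is the hypothesis. In each inductive step I would transfer the polynomial bound at scale $L_k$ to the same bound at scale $L_{k+1}=YL_k$, with the exponent $p$ preserved once $K_0$ is chosen large at the very end.

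For the deterministic part, fix $\x \in \R^{2d}$ and consider the $L_k$-suitable partial cover $\cC_{L_{k+1}, L_k}(\x)$ of $\blam_{S; L_{k+1}}(\x)$ from \eqref{suitc0}--\eqref{suitc1}, which by \eqref{suitc1} contains fewer than $(6Y+2)^{2d}$ sub-boxes of scale $L_k$. Iterating the geometric resolvent inequality \eqref{resolventine} along chains of neighboring cells, as in \cite[Lemma~5.2]{KlN1} adapted to $S$-boxes, one shows that if there is no $L_k$-distant pair of $(\theta, E)$-nonsuitable sub-boxes in $\cC_{L_{k+1}, L_k}(\x)$, then $\blam_{S; L_{k+1}}(\x)$ is itself $(\theta, E)$-suitable, provided $Y \geq 66$, $L_0$ is sufficiently large, and $p < \theta - 8d + 2$ (this last condition encodes the tradeoff between the decay rate $\theta$ and the $L_k^{2d-1}$ boundary factor picked up at each application of \eqref{resolventine}). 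Consequently, nonsuitability of $\blam_{S; L_{k+1}}(\x)$ forces the existence of at least one such $L_k$-distant pair.

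For the probabilistic part, I would classify each sub-box in the cover as non-interactive or interactive in the sense of Definition~\ref{PIFIbox}. Lemma~\ref{PINSl} gives $\P\{(\theta, E)\text{-nonsuitable}\} \leq L_k^{2d} e^{-L_k^{\tau}}$ for non-interactive sub-boxes (the required inequality $\theta < \tfrac{L_k}{2\log L_k}\, m^{*}_{\tau}(L_k)$ is met for $L_k$ large, since $\theta$ is fixed while $m^{*}_{\tau}$ is positive), while the induction hypothesis gives $L_k^{-p}$ for interactive sub-boxes. The crucial geometric input is Proposition~\ref{fullysep}: an $L_k$-distant pair of \emph{interactive} symmetrized boxes is automatically fully separated, so the corresponding events depend on disjoint collections of $\omega_x$'s and are therefore independent. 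A union bound over the fewer than $(6Y+2)^{4d}$ ordered pairs, split into non-interactive/non-interactive, mixed, and interactive/interactive cases, then yields
\begin{equation*}
\P\bigl\{\blam_{S; L_{k+1}}(\x)\text{ is }(\theta, E)\text{-nonsuitable}\bigr\} \leq (6Y+2)^{4d}\, L_k^{-2p} + C\, e^{-\tfrac{1}{2}L_k^{\tau}},
\end{equation*}
in which the first term dominates for large $L_k$.

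The induction closes by choosing $K_0$ so that the right-hand side above is bounded by $L_{k+1}^{-p} = Y^{-p} L_k^{-p}$ for all $k \geq K_0$, which is possible because the polynomial term carries an extra factor of $L_k^{-p}$ over the required bound and the subexponential term is negligible. I expect the main obstacle to be verifying the deterministic resolvent-chain argument in the symmetrized setting: it requires combining the direct-sum decomposition \eqref{PIequality} for non-interactive sub-boxes (so that their suitability reduces to a one-particle statement) with the geometric features of $\cC_{L_{k+1}, L_k}(\x)$---every point near $\partial_-\blam_{S; L_{k+1}}(\x)$ lies in some cell, and cells sufficiently far from the boundary have neighbors in the cover---so that one can propagate from an arbitrary center point to an arbitrary boundary point through a chain of $(\theta, E)$-suitable cells, even when some cells must be ``skipped'' because they are nonsuitable.
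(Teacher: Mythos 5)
Two genuine gaps.

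First, the deterministic step is not correct as stated: the implication ``no $L_k$-distant pair of $(\theta,E)$-nonsuitable sub-boxes in the cover $\Longrightarrow$ $\blam_{S;L_{k+1}}(\x)$ is $(\theta,E)$-suitable'' is false without nonresonance input. The resolvent-chain iteration of \eqref{resolventine} needs both an a priori bound on $\norm{G_{\blam_{S;L_{k+1}}(\x)}(E)}$ (otherwise $E$ could be an eigenvalue of the large box and there is nothing to iterate) and a bound on the Green's function of the ``bad clusters'' to get across the regions containing nonsuitable cells. This is precisely why Lemma~\ref{deterministicMSA1} carries the hypotheses that $\blam$ is $(E,s)$-suitably nonresonant and that every symmetrized box of side $j(8\ell+1)$ centered in $\Xi_{L,\ell}(\x)$ is $(E,s)$-suitably nonresonant. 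Their probabilistic cost is the event $\cW_J^c$, estimated via the Wegner estimate from Theorem~\ref{t:wesharp}, which produces a contribution of order $L^{4d-s}$ and forces the choice of an auxiliary exponent $s$ with $4d+p<s<\theta-4d+2$; this is in fact the true role of the hypothesis $p<\theta-8d+2$, not the boundary-factor bookkeeping you describe. Your final probability bound $(6Y+2)^{4d}L_k^{-2p}+Ce^{-\frac{1}{2}L_k^{\tau}}$ omits the Wegner term entirely.

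Second, the inductive architecture does not start. Your inductive step uses ``the induction hypothesis gives $L_k^{-p}$ for interactive sub-boxes,'' but the base-case hypothesis at scale $L_0$ is only $p_{L_0}\le p_0$ with $p_0<(6Y+2)^{-4d}$, which need not be $\le L_0^{-p}$; the statement to prove asserts $p_{L_k}\le L_k^{-p}$ only for $k\ge K_0$, and finding $K_0$ is the whole point. The paper's proof avoids this circularity by proving the scale-to-scale recursion
\begin{equation*}
2\,p_{L_{k+1}} \;\le\; L_{k+1}^{-p} \;+\; \bigl((6Y+2)^{2d}\,p_{L_k}\bigr)^{J+1}
\end{equation*}
for all $k$ (no a priori polynomial bound on $p_{L_k}$ needed), choosing $J=1$, and then arguing by contradiction: if $p_{L_k}>L_k^{-p}$ for every $k$, the recursion forces $p_{L_k}<\bigl((6Y+2)^{4d}\bigr)^{2^k-1}p_0^{2^k}$, which decays double-exponentially because $(6Y+2)^{4d}p_0<1$, contradicting the at-most-polynomial decay of $L_k^{-p}$. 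You would need to replace your induction with this recursion-plus-contradiction scheme (or with a two-stage argument that first establishes a polynomial bound at some later scale).
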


The proof relies on the following lemma, in the same way \cite[Proposition 3.2]{KlN1} relies on \cite[Lemma 3.3]{KlN1}.

\begin{lemma} \label{deterministicMSA1}
Let $E\in\R$, $s>0$,   $\theta >  4d - 2 + s$,  $J \in \N$, $Y\geq 10+ 56J$, and $L  = Y \ell$, and consider   a symmetrized two-particle box ${\blam} := {\blam}_{{S}; L}(\x)$ with the usual $\ell$-suitable partial cover. Suppose
\begin{enumerate}[i)]
\item  ${\blam}$ is $(E,s)$-suitably nonresonant.
\item There exist at most $J$ pairwise $\ell$-distant symmetrized two-particle boxes in the suitable partial cover $\cC_{L, \ell}(\x)$ that are $(\theta, E)$-nonsuitable.
\item Every symmetrized two-particle box with center belonging to  $\Xi_{L, \ell}(\x)$ of side length  $j\pa{8\ell+1}$ with $j \in \set{1, \dots, J}$ is $(E,s)$-suitably nonresonant.
\end{enumerate}
Then ${\blam}$ is $(\theta, E)$-suitable for $L$ sufficiently large.
\end{lemma}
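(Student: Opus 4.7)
The plan is to follow the classical deterministic multiscale scheme of Germinet--Klein and its multi-particle adaptation from \cite{KlN1}, but carried out for symmetrized two-particle boxes via the geometric resolvent identity \eqref{resolventid}--\eqref{resolventine}. Fix $\boldu \in \blam^{(2)}_{S;L/3}(\x)$ and $\y\in\partial_-{\blam}$; our task is to bound $|G_{\blam}(E;\boldu,\y)|\le L^{-\theta}$. Because $E\notin\sigma(H_\blam)$ by assumption (i) and resonance-freeness of the required subboxes comes from (iii), all resolvents encountered along the way are well defined.

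The iteration runs as follows. Since $\boldu\in\blam^{(2)}_{\paS;L-2\ell}(\x)$ for $L$ large, pick a cell $\boldz_0\in\Xi_{L,\ell}(\x)$ with $\boldu\in\blam^{(2)}_{\paS;\ell/3}(\boldz_0)$ and set $\blam^{(0)}=\blam^{(2)}_{\paS;\ell}(\boldz_0)$. If $\blam^{(0)}$ is $(\theta,E)$-suitable, apply \eqref{resolventine}: we gain a factor $\ell^{-\theta}$ and a combinatorial loss of $2s_{2,d}\ell^{2d-1}$, reaching a boundary point $\boldb_1\in\partial_+\blam^{(0)}$ which, by construction of the cover (and the remark after \eqref{suitc}), lies in some neighboring cell's box $\blam^{(2)}_{\paS;\ell}(\boldz_1)$. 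Iterate. Each suitable step contributes $2s_{2,d}\ell^{2d-1-\theta}$, and the factor $\theta>4d-2+s$ together with $L=Y\ell$ is calibrated precisely so that $Y$-many such iterations yield a net bound well below $L^{-\theta}$.

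The nontrivial part is detouring around the at-most-$J$ pairwise $\ell$-distant bad cells from (ii). When an iterate lands in the vicinity of a bad cell, group the contiguous bad cells the current path would enter into a single cluster of $j\le J$ boxes; its union is contained in a symmetrized two-particle box $\widetilde\blam$ of side at most $j(8\ell+1)$, centered at a point of $\Xi_{L,\ell}(\x)$, hence $(E,s)$-suitably nonresonant by (iii). Using the resolvent identity to jump from the boundary of $\widetilde\blam$ directly, together with the a priori bound $\|(H_{\widetilde\blam}-E)^{-1}\|\le (j(8\ell+1))^s$ that follows from nonresonance, we lose only a polynomial factor $(j\ell)^{s+2d}$ per cluster rather than the geometric decay. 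The condition $Y\ge 10+56J$ guarantees that after subtracting the total length $\sum_j j(8\ell+1)\le (8\ell+1)J(J+1)/2$ from the travel budget $L/2-\ell/3$, at least a constant fraction of the iteration steps pass through suitable boxes, so the accumulated geometric decay still swamps the polynomial losses from the clusters and the $\ell^{-s}$ resonance-free resolvent norm on $\blam$ itself.

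The main obstacle I expect is bookkeeping: controlling the routing carefully so the chosen iteration path $(\boldz_0,\boldz_1,\dots)$ stays inside $\blam$, has predecessors in $\Xi_{L,\ell}(\x)$ (so \eqref{suitc1} applies), and hits each bad cluster at most once, all while respecting the symmetrization (the boundary of a symmetrized box meets the partial cover only through the explicit union description in \eqref{boxid}). These combinatorial details parallel \cite[Lemma~3.3]{KlN1}, so the proof should reduce to verifying that the constants $Y\ge 10+56J$ and $\theta>4d-2+s$ chosen here dominate the geometric/combinatorial losses in the symmetrized setting, where $|\partial\blam^{(2)}_{\paS;\ell}|\le 2s_{2,d}\ell^{2d-1}$ plays the role of the boundary bound used in the one-particle and $\infty$-box arguments.
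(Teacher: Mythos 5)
Your overall strategy coincides with the paper's: iterate the geometric resolvent identity \eqref{resolventine} through the $\ell$-suitable partial cover $\cC_{L,\ell}(\x)$, harvesting a factor $2s_{2,d}\ell^{2d-1-\theta}$ at each $(\theta,E)$-suitable cell, and when the path meets a bad cell, merge the nearby bad cells into a symmetrized box of side $j(8\ell+1)$ whose resolvent is controlled by the $(E,s)$-suitable nonresonance in hypothesis (iii), then jump across it. The final $L^{-\theta}$ then follows from the inequality $(2d-1-\theta)(Y/2-3-28J)+s+\theta<0$, which $\theta>4d-2+s$ and $Y\ge 10+56J$ secure.

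The one step in your sketch that would fail as written is the bound on the travel budget consumed by the bad clusters. You estimate the total side length of the clusters by $\sum_j j(8\ell+1)\le(8\ell+1)J(J+1)/2$, which is quadratic in $J$; to absorb a quadratic number of skipped cells you would need $Y\gtrsim J^2$, incompatible with the stated threshold $Y\ge 10+56J$. The correct count is linear in $J$: the clusters $P_1,\dots,P_r$ partition the at most $J$ pairwise $\ell$-distant bad boxes, so if $P_i$ absorbs $t_i$ of them its side is at most $t_i(8\ell+1)\le 9t_i\ell$ and $\sum_i t_i\le J$, giving $\sum_i 9t_i\ell\le 9J\ell$ and hence at most $9J\ell(\ell/3)^{-1}+J=28J$ skipped cells. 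It is exactly this linear bound that makes $N(Y)\ge Y/2-3-28J\ge 2$, and therefore $Y\ge 10+56J$, sufficient; the argument does not survive with the looser quadratic count.
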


\begin{proof}
Without loss of generality, let us assume $  {\blam}_{{S}; \ell}(\bolda_1), \dots, {\blam}_{{S}; \ell}(\bolda_J)$ are the  $J$ pairwise $\ell$-distant symmetrized two-particle boxes in ${\blam}$ that are $(\theta, E)$-nonsuitable. This implies that if ${\blam}_{{S}; \ell}(\boldb)$ is   $\ell$-distant from ${\blam}_{{S}; \ell}(\bolda_j)$ for every $j \in \set{ 1, \dots, J}$, and ${\blam}_{{S}; \ell}(\boldb) \in \cC_{L, \ell}(\x)$, then ${\blam}_{{S}; \ell}(\boldb)$ must be $(\theta, E)$-suitable.

Let us denote $\cT = \set{ \bolda_1, \dots,  \bolda_J}$, and
\beq
{\blam}_0 = \bigcup_{i = 1, \dots, J} {\blam}_{{S}; 8\ell}(\bolda_i).
\eeq
We can separate the set ${\blam}_0$ into clusters $P_1, \dots, P_r$, which will be referred to as bad clusters, so that
\begin{enumerate}[i)]
\item for $i =1, \dots, r$, $P_i \subset {\blam}$, and each $P_i$ is a symmetrized two-particle box of length $  t_i(8\ell + 1) \le 9 t_i \ell $ provided  $t_i$ is the maximum number of  elements in $\cT$ that belong to $P_i$,
\item $\dist \pa{P_i, P_j} > 1$ for $i \neq j$,
\item $\bigcup_{i = 1, \dots, J} {\blam}_{{S}; \ell}(\bolda_i) \subset \bigcup_{i = 1, \dots, r} P_r$,
\item $\sum_{i = 1, \dots, r} 9 t_i \ell \le 9J\ell$,
\item if $\boldb \notin P_i$ for every $i = 1, \dots, r$, and $\dist_{S}(\x, \boldb) \le \tfrac{L}{2} - \ell$, then there exists $\boldu \in \Xi_{L, \ell}(\x)$ such that $\boldb \in {\blam}_{\pas; \ell}(\boldu)$ and ${\blam}_{\pas; \ell}(\boldu)$ is $\PET$-suitable.
   \end{enumerate}

Let us now fix $\boldy \in \partial_{-} {\blam}$. Given  $\boldu \in \blam_{S,\frac{L}{3}} (\x) $ with $\boldu \in {\blam}_{\pas; \alphal}(\bolda)$ for some $\bolda \in \Xi_{L, \ell}(\x)$, (note that $\boldy \notin {\blam}_{\paS; \ell}(\bolda)$ by construction), we have the following cases to consider:
\begin{enumerate}[a)]
\item if $\bolda \notin P_i$ for every $i \in \set{1, \dots, r}$, then
  ${\blam}_{\paS; \ell}(\bolda)$ is $\PET$-suitable. Setting ${\blam}_1 = {\blam}_{\paS; \ell}(\bolda)$ and using the geometric resolvent identity \eqref{resolventine}, we obtain
\begin{align} \label{goodjump}
& \abs{  \pa{H_{{\blam}} - E }^{-1} (\boldu, \boldy) } \notag \\
& \hskip 20pt \le 2 s_{2, d} \ell^{2d -1}  \max_{\boldb \in \partial_-^{{\blam}} \pa{{\blam}_{1} }}  \abs{  \pa{H_{{\blam}_{1} } - E}^{-1}  \pa{\boldu, \boldb} }  \abs{\pa{H_{{\blam}} - E}^{-1} \pa{\boldb_1, \boldy}    }   \notag \\
& \hskip 20pt \le 2 s_{2, d} \ell^{2d -1} \ell^{-\theta} \abs{\pa{H_{{\blam}} - E}^{-1} \pa{\boldb_1, \boldy}    }
\end{align}
with $\boldb_1 \in  \partial_+^{{\blam}}\pa{{\blam}_1}$. Then by construction, $\boldb_1 \in {\blam}_{\pas; \ell}(\boldv)$, where $\boldv \in \Xi_{L , \ell}(\x)$ is a neighboring cell of $\bolda$.

\item if $\bolda \in P_i$ for some $i \in \set{1, \dots, r}$, and $\boldy \notin P_i$, then, using the fact that $P_i$ is   $(E,s)$-suitably nonresonant, we have
\begin{align}
& \abs{  \pa{H_{{\blam}} - E }^{-1} (\boldu, \boldy) } \notag \\
& \hskip 20pt \le 2 s_{2, d} \ell^{2d -1}  \max_{\boldv \in \partial_-^{{\blam}} \pa{ P_i }}  \abs{  \pa{H_{P_i}  - E}^{-1}  \pa{\boldu, \boldv} }  \abs{\pa{H_{{\blam}} - E}^{-1} \pa{\boldv_1, \boldy}    }   \notag \\
& \hskip 20pt \le 2 s_{2, d} \ell^{2d -1} (Y \, \ell)^{s} \abs{\pa{H_{{\blam}} - E}^{-1} \pa{\boldv_1, \boldy}    }
\end{align}
with $\boldv_1 \in  \partial_+^{{\blam}}\pa{P_i}$. If there exists $\boldb \in \Xi_{L, \ell}(\x)$ such that $\boldv_1 \in {\blam}_{\pas; \alphal}(\boldb)$, $\y \notin {\blam}_{\pas; \ell}(\boldb)$, and ${\blam}_{\pas; \ell}(\boldb)$ is $\PET$-suitable, then
we can repeat Equation \eqref{goodjump} with $\boldv_1$ replacing $\boldu$. Then
\begin{align} \label{bad jump}
& \abs{  \pa{H_{{\blam}} - E }^{-1} (\boldu, \boldy) } \notag \\
& \hskip 10pt \le 2 s_{2, d} \ell^{2d -1} (Y \, \ell)^{s} \abs{\pa{H_{{\blam}} - E}^{-1} \pa{\boldv_1, \boldy}    } \notag\\
&\hskip 20pt \le 2 s_{2, d} \ell^{2d -1} (Y \, \ell)^{s} 2 s_{2, d} \ell^{2d -1} \ell^{-\theta} \abs{\pa{H_{{\blam}} - E}^{-1} \pa{\boldv_2, \boldy}    } \notag \\
&\hskip 30pt \le \abs{\pa{H_{{\blam}} - E}^{-1} \pa{\boldv_2, \boldy}    },
\end{align}
where $\boldv_2 \in  \partial_+^{{\blam}}\pa{{\blam}_{\pas; \ell}(\boldb)}$, provided $2 s_{2, d} \ell^{2d -1} (Y \, \ell)^{s} 2 s_{2, d} \ell^{2d -1} \ell^{-\theta} \le 1$. Since by hypothesis, $\theta > 4d +s -2$, this can be achieved if  $\ell$ is sufficiently large  depending on $s_{2,d},Y$ and $s$. Moreover, $\boldv_2$ belongs to a neighboring cell of $\boldb$. If such $\boldb$ does not exist, then we can always conclude
\beq
\abs{  \pa{H_{{\blam}} - E }^{-1} (\boldu, \boldy) } \le L^{s}.
\eeq

\end{enumerate}
We  will estimate $\abs{  \pa{H_{{\blam}} - E }^{-1} (\boldu, \boldy) }$ with $\boldu \in {\blam}_{\pas; \alphaL} (\x)$. We first note that we can start with $\boldu \in {\blam}_{\paS; \alphaL}(\x)$ and apply either procedure (a) or (b) repeatedly.  Since $\pa{ \tfrac{L}{2} - \ell - \tfrac{L}{6} } \pa{ \tfrac{\ell}{3} +1}^{-1} = \pa{ \tfrac{L}{3} - \ell   } \pa{ \tfrac{\ell}{3} +1}^{-1} \ge \tfrac{L}{\ell +3} - 3$, for our purpose, the shortest path starting from $\boldu$ to the boundary of ${\blam}$ has at least  $\tfrac{L}{\ell +3} - 3$ cells.
With $L = Y \ell$, thus we get
\begin{align} \label{hopping1}
\tfrac{L}{\ell +3} - 3 =Y \pa{ \tfrac{\ell}{\ell +3} } - 3 > \tfrac{Y}{2} - 3.
\end{align}
Then
\beq
\abs{  \pa{H_{{\blam}} - E }^{-1} (\boldu, \boldy) } \le \pa{ 2 s_{2, d} \ell^{2d -1} \ell^{-\theta}   }  ^{N(Y)}  L^{s},
\eeq
where $N(Y)$ is the number of cells for which we are able to perform (a) without using the result for the control of a bad region. Having to account for the cells  where we do not get anything due to the bad regions, which is, in the worst case, $9J\ell \pa{ \tfrac{\ell}{3} +1}^{-1} + J \le 9J\ell \pa{ \tfrac{\ell}{3}}^{-1} + J = 28J$,  equation \eqref{hopping1} gives us
\beq
N(Y) \ge \tfrac{Y}{2} - 3 - 28J.
\eeq
Our goal is to have $\abs{  \pa{H_{{\blam}} - E }^{-1} (\boldx, \boldy) } \le L^{-\theta}$, so we would like
 \beq
  \pa{2 s_{2, d} \ell^{2d -1} \ell^{-\theta} } ^{ \tfrac{Y}{2} - 3 - 28J}  \pa{Y \ell}^{s} \le (Y \ell)^{-\theta}.
  \eeq
   This can be achieved  for $\ell$ sufficiently large if
 \beq \label{useful1}
 \pa{2d - 1 - \theta}  \pa{\tfrac{Y}{2} - 3 - 28J} + s + \theta < 0,
 \eeq
provided $\ell$ is sufficiently large  depending on $s_{2,d},Y$ and $s$.   Since $\theta >  4d - 2 + s$, it follows that  \eqref{useful1} is true if we have $\tfrac{Y}{2} - 3 - 28J \ge 2$, that is, $Y\geq 10+ 56J$.
\end{proof}

\begin{proof}[Proof of Proposition \ref{1stMSA}]  Using   $0<p< \theta-8d +2$,   we fix $s>0$ such that $4d +p <s< \theta -4d +2$.

Given a scale $L$, we set
\beq
p_L= \sup_{\boldx \in \R^{2d}} \P \Bigl\{ \mathbf{\Lambda}_{\paS; L}(\boldx) \, \,  \text{is}\,\,(\theta, \,E)  \text{-nonsuitable} \Bigr\} .
\eeq
Let ${\blam} = {\blam}_{\paS; L} (\boldx)$ be a two-particle symmetrized box with an $\ell$-suitable cover $\cC_{L, \ell}(\x)$, where $L=Y\ell$. We begin by
defining several events:   \\
$\cE=  \Bl\{{\blam} \; \text{ is } (\theta, E)\text{-nonsuitable}\Br\}$;\\
$\cA$ is the event that there exists a non-interactive box, ${\blam}_{\paS; \ell} (\boldv) \in \cC_{L, \ell}(\x)$, that  is  $(\theta, E)$-nonsuitable, \\
$\cW_J$ is the event that ${\blam}$ is $(E,s)$-suitably nonresonant and ${\blam}_{\pas, j(8\ell+1)}(\bolda)$  is $(E,s)$-suitably nonresonant for every $\bolda \in \Xi_{L, \ell}(\x)$ and every $j \in \set{1, \dots, J}$, and\\
$\cF_J$ is the event that there are at most $J$ pairwise $\ell$-distant symmetrized two-particle boxes in $\cC_{L, \ell}(\x)$ that are $(\theta, E)$-nonsuitable.

 We take $Y\geq 10 + 56J$, with $J\in\N$ to be determined later. Then, by Proposition \ref{deterministicMSA1} we have
\beq
\P\set{\cE} \le \P\set{\cW_J^c}  + \P\set{\cF_{J}^c} \le  \P\set{\cW_J^c}  + \P\set{\cF_{J}^c\cap \cA^c}+\P\set{\cA}.
\eeq

Note that
\beq
\cW_J^c \subset \bigcup_{\substack{ \boldlambda_{\paS; t} ( \y ) = {\blam} \text{ or } \boldy \in \Xi_{L, \ell}(\x)      \\ t \in \set{ j(8\ell+1) \; | \; j = 1, \dots, J   }    }   }
\Bl\{ \dist \left ( \sigma \pa{ H_{\boldlambda_{\paS; t} ( \y )} }; E \right) \leq t^{-s} \Br\}.
\eeq
With our choice of $s$, Theorem \ref{t:wesharp} implies
\begin{align*}
&\P\set{\cW_J^c}  \leq  \pa{ J (2 \pa{3 \tfrac{L}{\ell} + 1}  )^{2d}   }     \pa{8 \norm{\rho}_{\infty}  \ell^{-s}  L^{2d}}   \\
&\hskip 30pt = J 2^{2d} (3Y + 1)^{2d} \norm{\rho}_{\infty} Y^{4d} \ell^{4d - s }      \leq \tfrac{1}{4} (Y\ell)^{-p} = \tfrac{1}{4} L^{-p},
\end{align*}
provided $4d + p < s$ and $\ell$ is sufficiently large. On the other hand,
 Lemma \ref{PINSl} yields, for $\ell$ is sufficiently large depending on $Y,d,p,\tau$, {$ E^{(1)},E^{(2)}$},
\[\P\set{\cA} \leq (Y\ell)^{2d}\,\ell^{2d}e^{-\ell^{\tau}} \leq \tfrac{1}{4} L^{-p}.\]

Hence
 \beq \label{part1keyeq}
 \P \Bl\{{\blam}_{\paS; L}(\x) \; \text{ is } (\theta, E)\text{-nonsuitable}\Br\} \leq \tfrac{1}{2}L^{-p} + \P \set{\cF_{J}^c\cap \cA^c}.
 \eeq
To bound $\P \set{\cF_{J}^c\cap \cA^c}$, we note that $\bom \in \cF_{J}^c\cap \cA^c$ implies there are $J+1$ interactive pairwise $\ell$-distant boxes in $\cC_{L, \ell}(\x)$ that are $(\theta, E)$-nonsuitable.
Hence, using  Proposition \ref{fullysep},  and recalling \eqref{suitc}-\eqref{suitc1}, we get

\begin{align}
& \P \set{\cF_{J}^c\cap \cA^c} \leq   \tfrac 1 2   \pa{ p_{\ell} } ^{J+1} ( 6Y + 2  ) ^{2d (J+1)} =  \tfrac 1 2  \ell^{-p(J+1)} ( 6Y + 2  ) ^{2d (J+1)}.
\end{align}
Our goal is to have $ \P \set{\cF_{J}^c\cap \cA^c} \le \frac{1}{2}L^{-p}$, which means we just need to require
\beq \label{Jnecessary1}
 \tfrac 1 2  \pa{\ell^{-p} ( 6Y + 2  ) ^{2d } }^{J+1}  \le \tfrac{1}{2} (Y \ell )^{-p}.
\eeq
 This can be achieved if $ J \ge 1 $, provided $\ell$ is sufficiently large.

Next, our goal is to show that for sufficiently large $L_0$, letting $L_k = Y L_{k-1}$, then there must exists a $K_0$ such that
\beq
p_{K_0} \leq L_{K_0}^{-p}.
\eeq

We will proceed by seeing what happens when $K_0 \neq 0, 1, 2, ...$.  From equation \eqref{part1keyeq}, taking $\ell$ to be large enough, then
\begin{align*}
\sup_{\boldx \in \R^{2d}}  \P \Bl\{ {\blam}_{\pas; L} \text{ is } \left( \theta,\,E \right)\text{-nonsuitable}\Br\}  \leq \tfrac{1}{2} L^{-p} + \tfrac 1 2 ( 6Y + 2  ) ^{2d (J+1)} p_{\ell}^{J+1}.
\end{align*}
That is to say that for large enough $\ell,$
\begin{align}
2 \sup_{\boldx \in \R^{2d}}  \P \Bl\{\boldlambda_{\pas; L}(\boldx) \text{ is } \left( \theta,\,E \right)\text{-nonsuitable}\Br\}    \leq L^{-p} +
 \pa{ ( 6Y + 2  ) ^{2d }p_\ell  }^{J+1}.
\end{align}
This implies that for sufficiently large enough $L_0$, setting $p_k = p_{L_k}$, we have that for every $k \in \N,$
\beq \label{part1mainthmeq1}
2p_{k+1} \leq (L_{k+1})^{-p} +   \left( ( 6Y + 2  ) ^{2d } p_k \right)^{J+1}.
\eeq
If $K_0 = 0,$ then we are done. If not, then we have that $(L_0)^{-p} < p_0$ and proceed to checking whether $K_0 = 1.$ If $K_0 = 1,$ then we are done. If not, then we must have that $(L_1)^{-p} < p_1,$ and by equation \eqref{part1mainthmeq1} we must have
$p_1 < \pa{ ( 6Y + 2  ) ^{2d }   p_0 }^{J+1}.$

We now proceed to check whether $K_0 = 2.$ If $K_0 = 2,$ then we are done. If not, then we must have that $(L_2)^{-p} < p_2,$ and by equation \eqref{part1mainthmeq1} we must have
\begin{align} \label{part1mainthmeq2}
p_2 < \pa{( 6Y + 2  ) ^{2d } p_1}^{J+1} &<  \pa{( 6Y + 2  ) ^{2d }     \pa{( 6Y + 2  ) ^{2d }  p_0}^{J+1}}^{J+1} \notag \\
& = \pa{ ( 6Y + 2  ) ^{  \pa{ 2d + 2d(J+1) } (J+1)}   } p_0^{(J+1)^2}
\end{align}
Thus $(L_2)^{-p} < \pa{( 6Y + 2  ) ^{2d }}^{ (J+1)^2 + (J+1)} p_0^{(J+1)^2}.$

We will carry this process out until we can find such a $K_0$. We will proceed by contradiction and assume
such a $K_0$ does not exists. It is clear that under such assumption, we get that for every $k \in \N,$
\begin{align}   \label{part1mainthmeq3}
L_k^{-p} &< \pa{( 6Y + 2  ) ^{2d }}^{(J+1)+(J+1)^2 +...+(J+1)^k } p_0^{(J+1)^k}   \notag  \\
& =  \pa{( 6Y + 2  ) ^{2d }}^{ (J+1) \pa{1 + (J+1) + \dots + (J+1)^{k-1}} } p_0^{(J+1)^k} \\
& = \pa{( 6Y + 2  ) ^{2d (J+1)}} ^{\tfrac{(J+1)^k - 1}{(J+1) - 1}} p_0^{(J+1)^k}. \notag
\end{align}
Up to now, we only need $J \ge 1$ due to equation \eqref{Jnecessary1}. Thus let us take $J = 1$ and by rewriting equation \eqref{part1mainthmeq3}, we get $L_k^{-p} < \pa{( 6Y + 2  ) ^{4d }} ^{ 2^{k} - 1 } p_0^{2^k}$  for every $k \in \N$, i.e.
\begin{align}
\pa{ ( 6Y + 2  ) ^{4d } } L_{0}^{-p}(Y^{-p} )^{k} &<
\pa{( 6Y + 2  ) ^{4d } p_0}^{2^k} \quad \text{for every } k\in\N.
\end{align}
However, this cannot be true since $( 6Y + 2  ) ^{4d } p_0 < 1$,  and we have reached a contradiction.
\end{proof}

\subsubsection{The second Multiscale Analysis for symmetrized two-particle boxes}

\begin{proposition} \label{2ndMSA}
Let $E\leq E^{(2)} $,  $p > 0$, $0<m_0<{ m^*_\tau}$, $1<\gamma<1+ \frac{p}{p+4d}$. If for some $L_0$ sufficiently large we have
\beq
\sup_{\boldx \in \R^{2d}} \P \Bigl\{ \mathbf{\Lambda}_{\paS; L_0}(\boldx) \, \,  \text{is}\,\, (m_0, \,E)  \text{-nonregular} \Bigr\} \leq L_0 ^{-p},
\eeq
then, setting $L_{k+1} = L_k ^{\gamma}$ for $k = 0, 1, \dots$, we get
\beq
\sup_{\boldx \in \R^{2d}} \P \Bigl\{ \mathbf{\Lambda}_{\paS; L_k} (\boldx)    \,\,\text{is}\,\,(\tfrac{m_0}{2}, \,E)  \text{-nonregular} \Bigr\} \leq L_k^{-p},
\eeq
for all  $ k = 0, 1, \dots $.
\end{proposition}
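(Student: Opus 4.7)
The proof follows the standard second step of the bootstrap MSA, analogous to \cite[Proposition 3.3]{KlN1}, adapted to symmetrized two-particle boxes. The induction goes from scale $\ell = L_k$ to $L = L_{k+1} = \ell^\gamma$; cover $\boldlambda = \boldlambda_{\paS; L}(\x)$ by its $\ell$-suitable partial cover $\cC_{L,\ell}(\x)$. The first step is a deterministic lemma in the spirit of Lemma~\ref{deterministicMSA1} but for regularity: if $\boldlambda$ is $(E,\beta)$-nonresonant (for some $\beta\in(\zeta,1)$ to be fixed), and among the cells of $\cC_{L,\ell}(\x)$ there is at most one pair of $\ell$-distant $(m_0,E)$-nonregular symmetrized boxes, then $\boldlambda$ is $(m_0 - O(\tfrac{\log \ell}{\ell}), E)$-regular, hence $(m_0/2,E)$-regular for $\ell$ sufficiently large. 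This is obtained by iterating the geometric resolvent identity \eqref{resolventine}, starting from $\boldu \in \boldlambda_{\paS; L/3}(\x)$ and hopping through regular cells; each hop contributes a factor $2s_{2,d}\ell^{2d-1}e^{-m_0\ell/2}$, while a single passage through the bad cluster costs at most $e^{L^{\beta}}$. Since the number of hops is at least $\sim L/\ell$, the total decay is $e^{-m_0 L/2 + o(L)}$.

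For the probabilistic step, write $p_\ell = \sup_{\x}\P\{\boldlambda_{\paS;\ell}(\x)\text{ is }(m_0,E)\text{-nonregular}\}\le \ell^{-p}$. The failure event $\{\boldlambda_{\paS;L}(\x)\text{ is }(m_0/2,E)\text{-nonregular}\}$ is contained in the union of three events: (W) the Wegner event $\{\dist(\sigma(H_{\boldlambda}),E)<\tfrac{1}{2}e^{-L^{\beta}}\}$, bounded by $8\norm{\rho}_{\infty}e^{-L^{\beta}}L^{4d}$ via Theorem~\ref{t:wesharp}; (N) the existence of a non-interactive cell in $\cC_{L,\ell}(\x)$ that is $(m_0,E)$-nonregular, bounded by $\#\cC_{L,\ell}(\x)\cdot \ell^{2d}e^{-\ell^{\tau}}$ via Lemma~\ref{PINSl}; and (I) the existence of two $\ell$-distant interactive cells in $\cC_{L,\ell}(\x)$ that are both $(m_0,E)$-nonregular. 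By Proposition~\ref{fullysep}, $\ell$-distant interactive boxes are fully separated, so events on them are independent, and
\[
\P\{(\mathrm{I})\} \le \tfrac{1}{2}\bigl(\#\cC_{L,\ell}(\x)\bigr)^{2} p_\ell^{2} \le \tfrac{1}{2}(6Y+2)^{4d} \ell^{-2p}.
\]
Writing $L = \ell^\gamma$, the target bound $L^{-p} = \ell^{-p\gamma}$ holds provided $p\gamma \le 2p$, which follows from the hypothesis $1 < \gamma < 1 + \tfrac{p}{p+4d}$ (the sharper form of the inequality leaves room to also absorb the polynomial factors $\ell^{4d\gamma}$ coming from the cover size, the Wegner estimate, and the non-interactive count, all of which are then easily dominated for $\ell$ large).

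The mass parameter drifts: each step loses $O(\log L_k / L_k)$, and since $L_k = L_0^{\gamma^k}$ the series $\sum_{k\ge 0} \log L_k/L_k$ converges (in fact super-exponentially fast), so the mass remains above $m_0/2$ for all $k$ provided $L_0$ is taken large enough at the outset. The main obstacle is calibrating the constants $\beta, \tau, \gamma, \zeta$ in \eqref{constant} so that the deterministic lemma, the Wegner estimate at scale $L$ (with threshold $e^{-L^\beta}$), the one-particle input on non-interactive boxes (which is subexponentially small), and the pairwise probability $p_\ell^2$ all close simultaneously against the target $L^{-p}$; this is precisely the role of the nested inequalities imposed in \eqref{constant}. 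Once these are fixed, the induction proceeds mechanically as in \cite[Section~3]{KlN1}, and the symmetrization enters only through Proposition~\ref{Wegner2}, Proposition~\ref{fullysep}, and Lemma~\ref{PIsuit}, all of which have been prepared in Subsection~\ref{S:preMSA}.
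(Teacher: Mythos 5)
Your overall approach matches the paper's: Proposition~\ref{2ndMSA} is proved by the same induction ``deterministic lemma $+$ Wegner $+$ independence of $\ell$-distant bad boxes'' as in \cite[Proposition 3.4]{KlN1}, with the symmetrized-box versions of the ingredients from Subsection~\ref{S:preMSA}, and Lemma~\ref{deterministicMSA2} playing the role of the deterministic step. Your identification of the constraint $\gamma<1+\tfrac{p}{p+4d}$ as exactly what is needed to close $\ell^{-2p}$ against $L^{-p}=\ell^{-\gamma p}$ after absorbing the polynomial factors is correct.

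There is, however, one concrete gap. Your deterministic lemma has only two hypotheses ($(E,\beta)$-nonresonance of $\blam$ and at most one bad cell up to $\ell$-distance), and you attribute the cost $e^{L^\beta}$ of a single passage through the bad cluster to the nonresonance of $\blam$. That is not where the bound comes from: when the geometric resolvent expansion hits the bad cluster $P_i$, the term that appears is $\norm{(H_{P_i}-E)^{-1}}$, not $\norm{G_{\blam}(E)}$, so the bound must come from a \emph{separate} nonresonance (Wegner) assumption on the cluster boxes $\blam_{\pas;j(8\ell+1)}(\bolda)$, $\bolda\in\Xi_{L,\ell}(\x)$, $j=1,\dots,J$ --- this is hypothesis (iii) of Lemma~\ref{deterministicMSA2}, which your sketch omits. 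Correspondingly, your probabilistic decomposition into (W), (N), (I) is missing the Wegner event for these cluster boxes; it has to be added (it is still subpolynomially small via Theorem~\ref{t:wesharp}, so nothing breaks, but without it the expansion simply does not close). Two smaller imprecisions, not fatal: the intermediate bound $\P\{(\mathrm{I})\}\le\tfrac12(6Y+2)^{4d}\ell^{-2p}$ treats $\#\cC_{L,\ell}(\x)$ as $O(1)$, whereas here $L/\ell=\ell^{\gamma-1}$ grows --- you do acknowledge this afterward, but the displayed inequality as written is false; and the per-step mass loss is $\tfrac{1}{2\ell^\kappa}$ with $\kappa<\min\{\gamma-1,\gamma(1-\beta),1\}$ rather than $O(\log\ell/\ell)$, because it also has to absorb the cluster cost $e^{(9J\ell)^\beta}$; both versions sum, so the conclusion $m_k>m_0/2$ is unaffected.
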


The proof relies on the following lemma, in the same way \cite[Proposition 3.4]{KlN1} relies on \cite[Lemma 3.5]{KlN1}.

\begin{lemma} \label{deterministicMSA2}
Let $E\in \R$, $L  =  \ell^{\gamma}$, $ p>0$, $1 < \gamma  < \tfrac{2p+4d}{p +4d}$, $0 < m_0 < { m^{*}_\tau}$,   $J \in \N$, and
\beq
\ml \in [\ell^{-\kappa} , m_0]\qtx{where} 0 < \kappa < \min \set{ \gamma -1, \gamma \pa{1 - \beta}, 1}
\eeq
 Given a symmetrized two-particle box, ${\blam} := {\blam}_{{S}; L}(\x)$ with the usual $\ell$-suitable partial cover, suppose
\begin{enumerate}[i)]
\item ${\blam}$ is $(E,\beta)$-nonresonant.
\item There exist at most $J$ pairwise $\ell$-distant symmetrized two-particle boxes in the suitable partial cover $\cC_{L, \ell}(\x)$ that are $(m_\ell, E)$-nonregular.
\item Every symmetrized two-particle box with center belonging to  $\Xi_{L, \ell}(\x)$ whose length is $j\pa{8\ell+1}$ with $j \in \set{1, \dots, J}$ is $(E,\beta)$-nonresonant.
\end{enumerate}
Then, for $L$ large enough, ${\blam}$ is $(m_L, E)$-regular, with
\begin{align} \label{mlmL}
\ml > m_ L  \ge   \ml           -  \tfrac{1}{2 \ell^{\kappa}} \ge \tfrac{1}{L^{\kappa}} .
\end{align}
\end{lemma}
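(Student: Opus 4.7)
The plan is to adapt the proof of Lemma~\ref{deterministicMSA1} to the exponential (regularity) setting, with the decay rates $\ml$ and $m_L$ taking the roles that $\ell^{-\theta}$ and $L^{-\theta}$ played before. The geometric setup is identical: denote the at most $J$ pairwise $\ell$-distant $(\ml,E)$-nonregular boxes in $\cC_{L,\ell}(\x)$ by $\blam_{\paS;\ell}(\bolda_1),\dots,\blam_{\paS;\ell}(\bolda_J)$ and collect them into bad clusters $P_1,\dots,P_r\subset\blam$, each a symmetrized two-particle box of side $t_i(8\ell+1)\le 9t_i\ell$ with $\sum_i t_i\le J$ and pairwise separated by distance $>1$, so that every cover box $\blam_{\paS;\ell}(\boldb)$, $\boldb\in\Xi_{L,\ell}(\x)$, disjoint from all $P_i$ is $(\ml,E)$-regular.

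Fix $\boldu\in\blam_{\paS;L/3}(\x)$ and $\boldy\in\partial_-\blam$ and iterate the geometric resolvent inequality \eqref{resolventine}. Each time the base point lies in a good cover box, a single step contributes a factor of at most $2s_{2,d}\ell^{2d-1}e^{-\ml\ell/2}$ and advances the base point to a neighboring cell. Each time the base point enters a bad cluster $P_i$, hypothesis (iii) gives $\norm{(H_{P_i}-E)^{-1}}\le 2e^{(9t_i\ell)^\beta}\le 2e^{L^\beta}$, so one bad-cluster step followed by a resumed good step contributes at most $2(9J\ell)^{2d-1}e^{L^\beta}\cdot 2s_{2,d}\ell^{2d-1}e^{-\ml\ell/2}$; the outermost resolvent is controlled by hypothesis (i) via $\norm{(H_\blam-E)^{-1}}\le 2e^{L^\beta}$. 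Counting as in the first deterministic lemma, the shortest path from $\blam_{\paS;L/3}(\x)$ to $\partial_-\blam$ traverses at least $N\ge L/(\ell+3) - 3$ cells, of which at most $O(J)$ are lost to bad clusters, yielding
\[
\abs{(H_\blam-E)^{-1}(\boldu,\boldy)}\le \pa{2s_{2,d}\ell^{2d-1}}^{N} e^{-\ml\ell N/2}\pa{2(9J\ell)^{2d-1}e^{L^\beta}}^{J+1}.
\]

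Taking logarithms with $L=\ell^\gamma$, the leading term is $-\ml L/2$, and we seek $\abs{(H_\blam-E)^{-1}(\boldu,\boldy)}\le e^{-m_L L/2}$ with $m_L\ge\ml - \tfrac{1}{2\ell^\kappa}$. This requires $\tfrac{L}{4\ell^\kappa} = \tfrac{\ell^{\gamma-\kappa}}{4}$ to dominate three error terms: (a) the lost good-cell gain $\ml\ell\cdot O(J)\le m_0\ell\cdot O(J)$, absorbed by $\kappa<\gamma-1$; (b) the cumulative resonance loss $O(L^\beta)=O(\ell^{\gamma\beta})$, absorbed by $\kappa<\gamma(1-\beta)$; and (c) the polynomial boundary contribution $O((L/\ell)\log\ell)=O(\ell^{\gamma-1}\log\ell)$, absorbed by $\kappa<1$. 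The final lower bound $m_L\ge L^{-\kappa}$ in \eqref{mlmL} follows from $\ml\ge\ell^{-\kappa}$ together with $\ell^{(\gamma-1)\kappa}\ge 2$ for large $\ell$. The main technical point is the same as in Lemma~\ref{deterministicMSA1}: the bad clusters $P_i$ must be assembled so that their sides exactly match the scales $j(8\ell+1)$ with $j\le J$ for which hypothesis (iii) provides nonresonance, and the cumulative resonance loss $(2e^{L^\beta})^{J+1}$ is absorbed sharply via the condition $\kappa<\gamma(1-\beta)$, mirroring the sharp condition $\theta>4d-2+s$ in the first deterministic lemma.
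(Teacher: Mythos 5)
Your proposal is correct and follows essentially the same route the paper takes: the authors explicitly state that the proof of Lemma~\ref{deterministicMSA2} is obtained from that of Lemma~\ref{deterministicMSA1} by replacing the polynomial (suitable) decay $\ell^{-\theta}$, $L^{-\theta}$ and the polynomial resonance bound $L^s$ with the exponential (regular) decay $e^{-\ml\ell/2}$, $e^{-m_L L/2}$ and the $\beta$-nonresonance bound $2e^{L^\beta}$, keeping the same bad-cluster decomposition and iterated geometric resolvent estimate. Your bookkeeping of which of the three constraints on $\kappa$ absorbs which error term — $\kappa<\gamma-1$ for the cells lost to bad clusters, $\kappa<\gamma(1-\beta)$ for the cumulative resonance factors, $\kappa<1$ for the polynomial boundary factor — and the final verification $m_L\ge\ml-\tfrac{1}{2\ell^\kappa}\ge\tfrac{1}{2\ell^\kappa}\ge L^{-\kappa}$ are exactly the adaptations the paper has in mind.
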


The proof of Lemma \ref{deterministicMSA2} is the same as Lemma \ref{deterministicMSA1}, with the corresponding changes to \emph{regular} boxes instead of \emph{suitable} ones. The proof of Proposition \ref{2ndMSA} relies on Lemma \ref{deterministicMSA2} in the same way the proof \cite[Proposition 3.4]{KlN1} uses \cite[Lemma 3.5]{KlN1}.

\subsubsection{The third Multiscale Analysis for symmetrized two-particle boxes}

\begin{proposition} \label{3dMSA}
Given $0 < \zeta_1 < \zeta_0 < 1 $ , $Y = \max \set{34^{\tfrac{1}{1 - \zeta_0}} , 4^{\tfrac{1}{\zeta_0}} }$, and $E\leq E^{(2)} $. If for some $L_0 \in \N$ sufficiently large we have
\beq
\sup_{\boldx \in \R^{2d}} \P \Bigl\{ \mathbf{\Lambda}_{\paS; L_0}(\boldx) \, \,  \text{is}\,\,(\zeta_0, \,E)  \text{-nonSES} \Bigr\} \leq (6Y+2)^{-4d},
\eeq
then, setting $L_{k+1} = Y\,L_k,$ for $k = 0, 1, 2, ...,$ there exists $K_0 \in \N$ such that for every $k \geq K_0$ we have
\beq
\sup_{\boldx \in \R^{2d}} \P \Bigl\{ \mathbf{\Lambda}_{\paS;L_k} (\boldx)    \,\,\text{is}\,\,(\zeta_0, \,E)  \text{-nonSES} \Bigr\} \leq e^{-L_k^{\zeta_1}}.
\eeq
\end{proposition}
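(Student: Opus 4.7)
The plan is to mimic the proof of Proposition \ref{1stMSA}, with $(\zeta_0,E)$-SES replacing $(\theta,E)$-suitability and $(E,\beta)$-nonresonance replacing $(E,s)$-suitable nonresonance, where $\beta\in(\zeta_1,\zeta_0)$ is fixed so as to be compatible with \eqref{constant}. The first step is a deterministic analogue of Lemma \ref{deterministicMSA1}: if $\blam=\blam_{\paS;L}(\x)$ with $L=Y\ell$ satisfies (i) $\blam$ is $(E,\beta)$-nonresonant, (ii) every symmetrized two-particle box with center in $\Xi_{L,\ell}(\x)$ and side $j(8\ell+1)$, $j\in\{1,\dots,J\}$, is $(E,\beta)$-nonresonant, and (iii) at most $J$ pairwise $\ell$-distant boxes in the partial cover $\cC_{L,\ell}(\x)$ are $(\zeta_0,E)$-nonSES, then $\blam$ is $(\zeta_0,E)$-SES. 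The argument is the cluster-and-chain argument of Lemma \ref{deterministicMSA1}: collect the bad cells into at most $J$ clusters of diameter $O(J\ell)$, iterate the geometric resolvent identity \eqref{resolventine} along a chain from any $\boldu\in\blam_{\paS;L/3}(\x)$ to any $\boldy\in\partial_-\blam$, gaining a factor $e^{-\ell^{\zeta_0}}$ at each good cell and paying a factor at most $e^{(9J\ell)^\beta}$ at each bad cluster (thanks to the $(E,\beta)$-nonresonance of the enclosing symmetrized box). Since $\beta<\zeta_0$ and the bound \eqref{hopping1} ensures $\Omega(Y)$ steps along the path, the boundary estimate $e^{-L^{\zeta_0}}$ follows once $\ell$ is large.

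Given the deterministic lemma, taking $J=1$ I would bound $p_L:=\sup_{\x\in\R^{2d}}\P\{\blam_{\paS;L}(\x)\sqtx{is}(\zeta_0,E)\text{-nonSES}\}$ by splitting the failure event into three pieces, in analogy with the proof of Proposition \ref{1stMSA}. The nonresonance failure $\cW_J^c$ is controlled by the symmetrized Wegner estimate of Proposition \ref{Wegner2} with $\eps=\tfrac12 e^{-\ell^\beta}$, giving a contribution of order $JY^{4d}\,\norm{\rho}_\infty\,\ell^{4d}\,e^{-\ell^\beta}$. The event $\cA$ that some non-interactive $\ell$-subbox in the partial cover is $(\zeta_0,E)$-nonSES is controlled by Lemma \ref{PINSl} with $\tau=\zeta_0$, giving a contribution of order $L^{2d}\ell^{2d}e^{-\ell^{\zeta_0}}$. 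On $\cF_J^c\cap\cA^c$ there must exist $J+1=2$ pairwise $\ell$-distant \emph{interactive} boxes in $\cC_{L,\ell}(\x)$ that are both $(\zeta_0,E)$-nonSES; Proposition \ref{fullysep} says such a pair is fully separated, so the two events are independent and one obtains a bound $\tfrac12(6Y+2)^{4d}\,p_\ell^{2}$.

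Combining the three contributions yields, with $L_{k+1}=YL_k$, a recursion of the schematic form
\[
p_{L_{k+1}}\;\le\;C\,L_{k+1}^{4d}\,e^{-L_k^{\beta}}\;+\;\tfrac12(6Y+2)^{4d}\,p_{L_k}^{\,2}.
\]
The initial smallness $p_{L_0}\le(6Y+2)^{-4d}$ makes $(6Y+2)^{4d}p_{L_k}$ iterate like a squaring map starting strictly below $1$, so the quadratic term decays super-exponentially. The choice $Y\ge 4^{1/\zeta_0}$ ensures $Y^{\zeta_0}\ge 4$, which allows the Wegner/non-interactive contributions to be dominated by $e^{-L_{k+1}^{\zeta_1}}$ once one has reached the regime where $p_{L_k}\le e^{-L_k^{\zeta_1}}$; the choice $Y\ge 34^{1/(1-\zeta_0)}$ is dictated by the hopping count \eqref{hopping1} needed for the deterministic step with $J=1$. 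A routine induction, identical in structure to the one closing Proposition \ref{1stMSA}, then produces a $K_0$ beyond which $p_{L_k}\le e^{-L_k^{\zeta_1}}$, which is the claim.

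The main obstacle is the deterministic step: the geometric resolvent chain must survive the bad clusters without destroying SES decay at the boundary. The trade-off between the per-cluster cost $e^{(J\ell)^\beta}$ and the per-good-cell gain $e^{-\ell^{\zeta_0}}$ is viable only because $\beta<\zeta_0$ and because the specific choice of $Y$ guarantees a sufficient number of good cells on every path from $\blam_{\paS;L/3}(\x)$ to $\partial_-\blam$; this is where the particular values $J=1$ and $Y=\max\{34^{1/(1-\zeta_0)},4^{1/\zeta_0}\}$ are used, in direct parallel to the third MSA of \cite{KlN1}.
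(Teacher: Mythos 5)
Your proposal reconstructs exactly the route the paper itself indicates: a deterministic lemma (the analogue of Lemma \ref{deterministicMSA1} with $(\zeta_0,E)$-SES in place of $(\theta,E)$-suitability and $(E,\beta)$-nonresonance in place of $(E,s)$-suitable nonresonance; this is precisely the paper's Lemma \ref{deterministicMSA3}), followed by the three-event decomposition $\cW_J^c,\ \cA,\ \cF_J^c\cap\cA^c$ and the quadratic probability recursion, all carried out as in Proposition \ref{1stMSA}. The paper does not spell out this proof --- it states Lemma \ref{deterministicMSA3} and refers to \cite[Proposition 3.6 and Lemma 3.7]{KlN1} for the rest --- so your argument is a reconstruction along exactly the paper's intended lines, and the structure is correct.

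Two points deserve a caveat, because you present them as routine when they are the genuinely delicate part of this third MSA. First, your attribution of roles to the two branches of $Y=\max\{34^{1/(1-\zeta_0)},4^{1/\zeta_0}\}$ is not right: the reading of $34^{1/(1-\zeta_0)}$ via $Y-Y^{\zeta_0}\geq \tfrac{33}{34}Y$ (so that the per-cell gain $e^{-\ell^{\zeta_0}}$ accumulated over $\Omega(Y)$ good cells beats the target $e^{-(Y\ell)^{\zeta_0}}$ plus the cluster cost $e^{(9J\ell)^\beta}$) is plausible, but the claim that $Y^{\zeta_0}\geq 4$ is what ``allows the Wegner contributions to be dominated by $e^{-L_{k+1}^{\zeta_1}}$'' is not: since $\beta>\zeta_1$, the comparison $L_k^\beta\gtrsim Y^{\zeta_1}L_k^{\zeta_1}$ holds for $L_k$ large independently of $Y^{\zeta_0}$. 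Second, your closing step --- ``the quadratic term decays super-exponentially'' and ``a routine induction then produces $K_0$'' --- hides the real issue: the quadratic iteration gives $(\tfrac12(6Y+2)^{4d}p_{L_k})\lesssim c^{2^k}$, while the target is $e^{-L_k^{\zeta_1}}=e^{-L_0^{\zeta_1}Y^{k\zeta_1}}$, and for the contradiction (or induction) to close one needs the doubling $2^k$ to outpace the growth $Y^{k\zeta_1}$, i.e.\ effectively $Y^{\zeta_1}<2$. This is exactly the point where the specific constants in $Y$ must be used, and it is not automatic from $\zeta_1<\zeta_0$; the argument in \cite{KlN1} (and in \cite[Section 5]{GK1}) handles this carefully, and your sketch should do so too rather than calling it routine. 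Also a minor slip: the hypothesis is $p_{L_0}\leq(6Y+2)^{-4d}$, not strict inequality, so ``starting strictly below $1$'' needs one iteration (using the smallness of the $\epsilon$-term) before the squaring map is genuinely contracting.
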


The proof of Proposition \ref{3dMSA} relies on the following lemma, in the same way as \cite[Proposition 3.6]{KlN1} is based on \cite[Lemma 3.7]{KlN1}.

\begin{lemma} \label{deterministicMSA3}
Let $0 < \zeta_1 < \zeta_0 < 1 $ , $Y = \max \set{34^{\tfrac{1}{1 - \zeta_0}} , 4^{\tfrac{1}{\zeta_0}} }$,  $L  = Y \ell$,   $J \in \N$, and $E \in \R$. Given a symmetrized two-particle box, ${\blam} := {\blam}_{{S}; L}(\x)$, with the usual $\ell$-suitable partial cover. Suppose
\begin{enumerate}[i)]
\item ${\blam}$ is $(E,\beta)$-nonresonant.
\item There exists at most $J$ pairwise $\ell$-distant symmetrized two-particle box in the suitable partial cover $\cC_{L, \ell}(\x)$ that are $(\zeta_0, E)$-nonSES.
\item Every symmetrized two-particle box with center belonging to  $\Xi_{L, \ell}(\x)$ whose length is $j\pa{8\ell+1}$ with $j \in \set{1, \dots, J}$ is $(E,\beta)$-nonresonant.
\end{enumerate}
Then ${\blam}$ is $(\zeta_0, E)$-SES for $L$ large enough.
\end{lemma}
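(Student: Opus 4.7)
The plan is to mimic the proof of Lemma \ref{deterministicMSA1} (and its analogue \cite[Lemma 3.5]{KlN1} for regular boxes), replacing the power decay $\ell^{-\theta}$ by the subexponential decay $e^{-\ell^{\zeta_0}}$ and the suitable-nonresonance factor $\ell^s$ by the $\beta$-nonresonance factor $2e^{(Y\ell)^{\beta}}$. First, assumption (ii) lets me identify at most $J$ pairwise $\ell$-distant $(\zeta_0,E)$-nonSES boxes $\boldlambda_{S;\ell}(\bolda_1),\dots,\boldlambda_{S;\ell}(\bolda_J)$ in $\mathcal{C}_{L,\ell}(\x)$, and I absorb them into bad clusters $P_1,\dots,P_r\subset\boldlambda$ exactly as in the proof of Lemma \ref{deterministicMSA1}: each $P_i$ is a symmetrized two-particle box of side at most $t_i(8\ell+1)\le 9t_i\ell$ with $\sum_i t_i\le J$, the $P_i$'s are pairwise at distance $>1$, any bad cell sits inside some $P_i$, and every cell outside $\bigcup_i P_i$ corresponds to a $(\zeta_0,E)$-SES box of the cover. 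By hypothesis (iii), each $P_i$ is $(E,\beta)$-nonresonant, so $\|(H_{P_i}-E)^{-1}\|\le 2e^{(9J\ell)^\beta}\le 2e^{(Y\ell)^\beta}$ for $L$ large.

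Next, I fix $\boldy\in\partial_-\boldlambda$ and iterate the geometric resolvent identity \eqref{resolventine} starting from $\boldu\in\boldlambda_{S;L/3}(\x)$. As in the earlier lemma, each step lands in a cell $\boldlambda_{S;\ell}(\bolda)\in\mathcal{C}_{L,\ell}(\x)$: if $\bolda$ lies outside every $P_i$, a single application of \eqref{resolventine} produces the factor
\[
2s_{2,d}\,\ell^{2d-1}\, e^{-\ell^{\zeta_0}},
\]
and jumps to a neighbouring cell; if $\bolda\in P_i$, I instead resolvent-expand on $P_i$, using nonresonance to bound the interior factor by $2e^{(Y\ell)^\beta}$, and then (as in the "bad jump" of \eqref{bad jump}) follow this with one good step. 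The combined bad-jump-plus-good-step factor is
\[
\bigl(2s_{2,d}\,(9J\ell)^{2d-1}\bigr)\,\bigl(2e^{(Y\ell)^\beta}\bigr)\,\bigl(2s_{2,d}\,\ell^{2d-1}\bigr)\,e^{-\ell^{\zeta_0}},
\]
which is $\le 1$ for $L$ large since $\beta<\zeta_0$. Each of the $r\le J$ bad clusters consumes at most $9t_i\ell\,(\ell/3)^{-1}+1\le 28t_i$ cells of the path, so the total number $N$ of clean good steps satisfies $N\ge \tfrac{Y}{2}-3-28J$, exactly as in \eqref{hopping1}. A global $\|\,\cdot\,\|$-bound on $(H_{\boldlambda}-E)^{-1}$ coming from hypothesis (i), namely $2e^{L^\beta}$, handles the final short piece near $\boldy$.

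Putting the pieces together, I obtain
\[
\bigl|G_{\boldlambda}(E;\boldu,\boldy)\bigr|\le \bigl(2s_{2,d}\,\ell^{2d-1}e^{-\ell^{\zeta_0}}\bigr)^{N}\cdot 2e^{L^\beta},
\]
so to conclude $(\zeta_0,E)$-SES it suffices to check
\[
N\,\ell^{\zeta_0}-N\log(2s_{2,d}\ell^{2d-1})-L^\beta\;\ge\; L^{\zeta_0}
\qquad\text{with }L=Y\ell.
\]
Since $L^{\zeta_0}=Y^{\zeta_0}\ell^{\zeta_0}$ and $\beta<\zeta_0$, this reduces, for $L$ large, to $N\ge 2Y^{\zeta_0}$, i.e.\ $\tfrac{Y}{2}-3-28J\ge 2Y^{\zeta_0}$. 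The choice $Y=\max\{34^{1/(1-\zeta_0)},4^{1/\zeta_0}\}$ guarantees both $Y^{1-\zeta_0}\ge 34$ (so $Y/4\ge Y^{\zeta_0}$ with room for the additive losses for small $J$) and $Y^{\zeta_0}\ge 4$ (used to absorb the polynomial factor $2s_{2,d}\ell^{2d-1}$ into the exponential for large $\ell$).

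The only genuine step, compared to Lemma \ref{deterministicMSA1}, is the counting/book-keeping that converts the subexponential gain per good cell into the target rate $e^{-L^{\zeta_0}}$: here the main obstacle is verifying that the exponent $\beta$ of the nonresonance defect is strictly below $\zeta_0$ throughout the iteration, so that neither the interior cluster factor $e^{(Y\ell)^\beta}$ nor the final boundary factor $e^{L^\beta}$ eats the accumulated decay $e^{-N\ell^{\zeta_0}}$. This is exactly where the ordering $\zeta<\zeta_2<\gamma\zeta_2<\zeta_1<\gamma\zeta_1<\beta<\zeta_0<\tau$ from \eqref{constant} is used, and it is the same delicate arithmetic that controls the analogous step of \cite[Lemma 3.7]{KlN1}.
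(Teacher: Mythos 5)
Your proposal is correct and follows exactly the approach the paper itself prescribes: the paper gives no independent proof of this lemma, only the remark that it is proved "in the same way as Lemma \ref{deterministicMSA1} and Lemma \ref{deterministicMSA2} adapted to nonSES boxes," referring to \cite[Lemma 3.7]{KlN1}, and your write-up is a faithful execution of that adaptation (same bad-cluster decomposition, same cell-counting giving $N \ge \tfrac{Y}{2}-3-28J$, with $\ell^{-\theta}$ replaced by $e^{-\ell^{\zeta_0}}$ and $\ell^s$ replaced by $2e^{t^\beta}$, and $\beta<\zeta_0$ from \eqref{constant} making the interior and boundary nonresonance factors harmless). The only point worth tidying is that the cluster factor is strictly $2e^{(9J\ell)^{\beta}}$ from the side-length of $P_i$, which you then upper-bound by $2e^{(Y\ell)^{\beta}}$; that step implicitly uses $9J \le Y$, which holds under the implicit constraint linking $Y$ and $J$ (the same one used in Lemma \ref{deterministicMSA1}) and should be noted explicitly.
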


Lemma \ref{deterministicMSA3} can be proved in the same way as Lemma \ref{deterministicMSA1} and Lemma \ref{deterministicMSA2} adapted to nonSES boxes. We refer the reader to \cite[Lemma 3.7]{KlN1} for details.

\subsubsection{The fourth Multiscale Analysis for symetrized two-particle boxes }

 In this section we proceed with the energy-interval Multiscale Analysis. We fix $\zeta,\tau,\beta,\zeta_1,\zeta_2,\gamma$ as in \eqref{constant} and take $L=\ell^\gamma$.

\begin{definition}
Let $\boldlambda_{ \pas ; L}(\bold{x})$ be a non-interactive symmetrized two-particle box with the usual $\ell$ suitable partial cover, and consider an energy $E \leq E^{(2)}$. Then:

\begin{enumerate}
\item $\boldlambda_{ \pas ; L}(\bold{x})$ is not $E$-Lregular \emph(for left regular\emph) if and only if there are two partially separated  boxes in $\cC_{L, \ell}(x_{1})$
that are $(m_{\tau}^*, \, E-\mu)$-nonregular
for some $\mu \in \sigma \pa{ H_{ \Lambda_{L}({x}_{ 2} )  }  }\cap (-\infty,E^{(1)}]$.

\item $\boldlambda_{ \pas ; L}(\bold{x})$ is not $E$-Rregular \emph(for right regular\emph) if and only if there are two partially separated  boxes in $\cC_{L, \ell} (x_{2})$   that are $(m_{\tau}^*, \, E-\lambda)$-nonregular for some $\lambda \in \sigma \pa{ H_{ \Lambda_{L}({x}_{1} )  }   } \cap (-\infty,E^{(1)}]$.

\item $\boldlambda_{ \pas ; L}(\bold{x})$ is  $E$-preregular if and only if  $\boldlambda_{ \pas ; L}(\bold{x})$ is  $E$-Lregular and $E$-Rregular.

\end{enumerate}
\end{definition}

The following Lemma can be proven as in \cite[Lemma 3.15]{KlN1},
\begin{lemma} \label{prereg}
Let $E_0 \leq E^{(2)}$, $I = [E_0-\delta_\tau, \, E_0+\delta_\tau]\subset (-\infty,E^{(1)}]$,  and consider a non-interactive symmetrized two-particle box $ \boldlambda_{ \pas ; L}(\bold{x})$. Then
\begin{enumerate}
\item $\P \set{\boldlambda_{ \pas ; L}(\bold{x}) \text{ is not }E\text{-Lregular for some } E \in I} \leq  L^{3d} e^{-\ell^{\tau}}$,
\item $\P \set{\boldlambda_{ \pas ; L}(\bold{x}) \text{ is not }E\text{-Rregular for some } E \in I}  \leq  L^{3d} e^{-\ell^{\tau}}$.

\end{enumerate}
We conclude that
   \begin{align}
& \P \bigl\{ \boldlambda_{ \pas ; L}(\bold{x}) \text{ is not }E\text{-preregular for some } E \in I  \bigr\} \leq  2L^{3d} e^{-\ell^{\tau}} .
\end{align}

\end{lemma}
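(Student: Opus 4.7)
The plan is to prove (1); part (2) follows by interchanging $x_1$ and $x_2$, and the combined bound by a final union bound on the two events. The strategy is a conditioning argument that reduces the claim to the one-particle energy-interval estimate \eqref{eq002}.

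Since $\boldlambda_{\pas;L}(\boldx)$ is non-interactive, Proposition \ref{PIcriteria} yields $\norm{x_1-x_2} > L + r_0$, so the one-particle boxes $\La_L(x_1)$ and $\La_L(x_2)$ are disjoint and the restrictions $\omega|_{\La_L(x_1)}$ and $\omega|_{\La_L(x_2)}$ are independent; in particular $H_{\La_L(x_2)}$ is independent of every finite-volume operator on a subbox of $\La_L(x_1)$. We condition on $\omega|_{\La_L(x_2)}$: then $\sigma(H_{\La_L(x_2)})$ becomes a deterministic finite set of cardinality at most $\abs{\La_L(x_2)}\leq L^d$, all of whose elements are non-negative because $V_\omega^{(1)}\geq 0$ and $-\Delta\geq 0$.

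For each frozen $\mu \in \sigma(H_{\La_L(x_2)}) \cap (-\infty,E^{(1)}]$, the shifted interval $I - \mu = [E_0-\mu-\delta_\tau,\, E_0-\mu+\delta_\tau]$ has length $2\delta_\tau$ and lies in $(-\infty,E^{(1)}]$ since $\mu \geq 0$ and $E_0 \leq E^{(2)} < E^{(1)}$. The one-particle hypothesis \eqref{eq002} therefore applies to each pair of disjoint one-particle boxes $\La_\ell(a), \La_\ell(b) \in \cC_{L,\ell}(x_1)$, bounding by $e^{-\ell^\tau}$ the probability that both are simultaneously $(m_\tau^*, E')$-nonregular for some $E' \in I - \mu$. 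A union bound over the $O((L/\ell)^{2d})$ partially separated pairs in $\cC_{L,\ell}(x_1)$ and the at most $L^d$ admissible values of $\mu$ produces a conditional probability bound of order $L^{3d}e^{-\ell^\tau}$; taking expectation over $\omega|_{\La_L(x_2)}$ preserves it, which yields (1).

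The main technical point is coupling the randomness that determines the shifts $\mu$ with the events ``$(m_\tau^*, E-\mu)$-nonregular'' for the subboxes of $\La_L(x_1)$: these appear jointly in the definition of $E$-Lregular, but are driven by independent random variables thanks to the non-interactive geometry. Conditioning on $\omega|_{\La_L(x_2)}$ exploits this independence to turn the random shifts $\mu$ into deterministic quantities, after which the one-particle input \eqref{eq002} applies verbatim. The remainder is routine counting, and part (2) is handled symmetrically.
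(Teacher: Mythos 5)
Your proof is correct and follows the route the paper intends (the paper only references the analogous argument in \cite[Lemma 3.15]{KlN1} rather than reproducing it). The essential steps are exactly as you describe: the non-interactive hypothesis gives, via Proposition~\ref{PIcriteria}, disjointness of $\La_L(x_1)$ and $\La_L(x_2)$ and hence independence of the corresponding randomness; conditioning on $\omega|_{\La_L(x_2)}$ makes $\sigma(H_{\La_L(x_2)})$ a deterministic set of at most $|\La_L(x_2)|\le L^d$ non-negative eigenvalues; each shifted interval $I-\mu$ remains inside $(-\infty,E^{(1)}]$ so the one-particle input \eqref{eq002} applies at scale $\ell$ to any disjoint pair of one-particle boxes in the cover of $\La_L(x_1)$ (and ``partially separated'' does reduce to ``disjoint'' in the one-particle case); a union bound over $O((L/\ell)^{2d})$ pairs and at most $L^d$ values of $\mu$ gives the stated $L^{3d}e^{-\ell^\tau}$ once $\ell$ is large enough to absorb the combinatorial constants. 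Part~(2) is the mirror image and the final bound is the union bound of (1) and (2). Two implicit hypotheses that you rely on and should keep in mind are $\ell\ge L_\tau$ (so \eqref{eq002} is available at scale $\ell$) and $\ell$ large enough that the counting constants are dominated by $L^{3d}$; both are standard in this multiscale setting and implicit in the statement.
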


\begin{definition} \label{CNR}
Let ${\blam} =   \boldlambda_{\pas, L}(\x)$ be a non-interactive two-particle box, and consider an energy $E \leq E^{(2)}$. Then:

\begin{enumerate}

\item $\boldlambda_{\pas; L}(\x)$ is $E$-left nonresonant \emph(or LNR\emph) if and only if every box
$\Lambda_{9 \ell}( a) \subseteq \Lambda_{L}(x_{1})$, with $ a \in \suitc (x_{1})$ , is $(E-\mu,\beta)$-nonreso\-nant for every
$\mu \in \sigma \pa{H_{\Lambda_{L}(x_{2})}}\cap (-\infty,E^{(1)}]$. Otherwise we say $\boldlambda_{\pas; L}(\x)$ is $E$-left resonant \emph(or LR\emph).

\item $\boldlambda_{\pas; L}(\x)$ is $E$-right nonresonant \emph(or RNR\emph) if and only if every box $\Lambda_{9 \ell}( a) \subseteq \Lambda_{L}(x_{2})$, with $ a \in \suitc (x_{2})$, is $(E-\lambda,\beta)$-nonreso\-nant for every
$\lambda \in \sigma \pa{H_{\Lambda_{L}(x_{1})}}\cap (-\infty,E^{(1)}]$. Otherwise we say $\boldlambda_{\pas; L}(\x)$ is $E$-right resonant \emph(or RR\emph).
\item We say $\boldlambda_{\pas; L}(\x)$ is
$E$-highly nonresonant \emph(or HNR\emph) if and only if ${\blam}$ is $E$-nonresonant, that is,
$E$-LNR and $E$-RNR.
\end{enumerate}

\end{definition}

\begin{lemma}\label{part2firstthm}
Let $E \in \R,$ and
$  \boldlambda_{\pas; L}(\x)$
be a non-interactive two-particle box. Assume that the following are true:
\begin{enumerate}
\item $\boldlambda_{\pas; L}(\x)$ is $E$-HNR.
\item $\boldlambda_{\pas; L}(\x)$ is $E$-preregular.
\end{enumerate}
Then $\boldlambda_{\pas; L}(\x)$ is $\pa{ m(L), E}$-regular, where
 \beq   \label{mL}
m(L) = m_{\tau}^* - \frac{1}{2L^\kappa}- \frac{6(d+1)\log 2L}{L},
 \eeq
 where $\kappa$ is defined in Lemma \ref{deterministicMSA2}.

\end{lemma}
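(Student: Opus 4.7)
The plan is to mirror the proof of Lemma~\ref{PIsuit}(ii), but with an additional layer: instead of assuming one-particle regularity directly on $\La_L(x_1)$ and $\La_L(x_2)$, I will first \emph{deduce} such regularity from preregularity plus HNR by applying the deterministic MSA lemma (the one-particle analogue of Lemma~\ref{deterministicMSA2}) with $J=1$, and only then plug it into the non-interactive decomposition.

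First I would invoke Proposition~\ref{PIcriteria}: since $\blam_{\pas;L}(\x)$ is non-interactive, we have the orthogonal decomposition $H_{\blam} = H_{\blam_1}\oplus H_{\pi(\blam_1)}$ with $\blam_1 = \La_L(x_1)\times \La_L(x_2)$, and $H_{\blam_1} = H_{\La_L(x_1)}\otimes I + I\otimes H_{\La_L(x_2)}$. Using the spectral expansion in the second variable, for $\boldu=(u_1,u_2)$ and $\y=(y_1,y_2)$ in $\blam_1$,
\begin{equation}
G_{\blam_1}(E;\boldu,\y) = \sum_{\mu\in\sigma(H_{\La_L(x_2)})} G_{\La_L(x_1)}(E-\mu;u_1,y_1)\,\langle \delta_{u_2},P_\mu\delta_{y_2}\rangle,
\end{equation}
and analogously for $H_{\pi(\blam_1)}$ with the roles of $x_1$ and $x_2$ swapped. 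The sum contains at most $L^d$ terms and each spectral projection has norm bounded by $1$, so it suffices to show that for each $\mu\in\sigma(H_{\La_L(x_2)})$ we have a suitable decay bound on $G_{\La_L(x_1)}(E-\mu;u_1,y_1)$ for $u_1$ in the inner part of $\La_L(x_1)$ and $y_1$ in its boundary (and symmetrically for the $x_2$ factor).

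For a fixed $\mu$, I split on whether $\mu\in(-\infty,E^{(1)}]$ or not. When $\mu>E^{(1)}$, one has $|E-\mu|>E^{(1)}-E^{(2)}$, so $E-\mu$ is uniformly away from $\sigma(H_{\La_L(x_1)})$, and the Combes--Thomas estimate of Theorem~\ref{thmCT} produces exponential decay at a rate $\log(1+(E^{(1)}-E^{(2)})/(4d))>m_\tau^*$ by the choice of $m_\tau^*$ inherited from the one-particle MSA, dominating the bound we want. When $\mu\in(-\infty,E^{(1)}]$, the hypothesis $E$-HNR guarantees that every box $\La_{9\ell}(a)\subset \La_L(x_1)$ with $a\in\suitc(x_1)$ is $(\beta,E-\mu)$-nonresonant, and in particular $\La_L(x_1)$ itself is $(\beta,E-\mu)$-nonresonant. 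At the same time, $E$-Lregularity ensures that at most one box in the $\ell$-suitable partial cover $\cC_{L,\ell}(x_1)$ is $(m_\tau^*,E-\mu)$-nonregular, so the hypotheses of the one-particle version of Lemma~\ref{deterministicMSA2} are met with $J=1$ and $m_\ell=m_\tau^*$. The conclusion yields that $\La_L(x_1)$ is $(m_L^*,E-\mu)$-regular with $m_L^*\ge m_\tau^* - \tfrac{1}{2L^\kappa}$, uniformly in $\mu\in\sigma(H_{\La_L(x_2)})\cap(-\infty,E^{(1)}]$.

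Combining the two ranges of $\mu$, summing over the at most $L^d$ eigenvalues, and absorbing the polynomial prefactor $L^d$ into an exponential loss of $\tfrac{d\log L}{L/2}$ on the rate (this is the source of the $-\tfrac{6(d+1)\log 2L}{L}$ correction in \eqref{mL}), I obtain for $\boldu\in\blam_{\pas;L/3}(\x)\cap \blam_1$ and $\y\in\partial_-^{\blam_1}\blam_1$
\begin{equation}
|G_{\blam_1}(E;\boldu,\y)|\le L^{d+1}e^{-m(L)L/2}.
\end{equation}
Running the same argument on $H_{\pi(\blam_1)}$ using $E$-Rregularity and HNR on the $x_2$-factor, and invoking the orthogonal direct sum decomposition \eqref{PIequality}, gives the same bound on all of $\blam_{\pas;L}(\x)$, which is exactly $(m(L),E)$-regularity. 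The delicate point in this scheme is verifying that the one-particle deterministic MSA lemma applies at the shifted energy $E-\mu$ uniformly across the at-most-$L^d$ eigenvalues $\mu$; this is why HNR is formulated to ensure nonresonance at \emph{every} such shift, and why $J=1$ is forced by the preregularity definition via ``partially separated'' boxes rather than by a probabilistic count.
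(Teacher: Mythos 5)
Your proof is essentially correct and takes the same route as the paper: the paper's stated proof is only a citation to \cite[Lemma 5.17]{KlN2} together with the direct-sum identity \eqref{PIequality}, and your argument spells out what that citation stands for — the decomposition of Proposition~\ref{PIcriteria}, spectral expansion in one factor, the split of the eigenvalue range at $E^{(1)}$ (Combes--Thomas above, deterministic one-particle MSA with $J=1$ below, the count $J=1$ being forced by preregularity), then reassembly via \eqref{PIequality}. Two points merit tightening. First, the claim that $\Lambda_L(x_1)$ is $(E-\mu,\beta)$-nonresonant does not follow from LNR/RNR per se; it follows from the $(E,\beta)$-nonresonance of the symmetrized box together with $\sigma\pa{H_{\blam}} = \sigma\pa{H_{\Lambda_L(x_1)}}+\sigma\pa{H_{\Lambda_L(x_2)}}$ (equivalently, the contrapositive of Lemma~\ref{LRresonant}), and that chain should be named. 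Second, the division of labor ``L-regularity for $\blam_1$, R-regularity for $\pi(\blam_1)$'' is misleading: for $\boldu\in\blam_{\paS;L/3}(\x)\cap\blam_1$ and $\y\in\partial_-\blam_1$ one has $\max\set{\norm{u_1-y_1},\norm{u_2-y_2}}>L/6$, and both cases occur already inside $\blam_1$ — the first uses L-regularity and LNR (expansion over $\sigma(H_{\Lambda_L(x_2)})$), the second uses R-regularity and RNR (expansion over $\sigma(H_{\Lambda_L(x_1)})$) — and the same pair of cases repeats inside $\pi(\blam_1)$. Your parenthetical ``and symmetrically for the $x_2$ factor'' acknowledges this, but the closing sentence reads as if R-regularity is only invoked on $\pi(\blam_1)$, which would leave a case uncovered; since preregularity supplies both, the fix is purely a wording one and the conclusion is unaffected.
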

The proof of the statement for $\blam_{L}(\x)$ and  $\blam_{L}(\pi(\x))$ follows the arguments in \cite[Lemma 5.17]{KlN2}. The desired result for $\boldlambda_{\pas; L}(\x)$ is a consequence of  \eqref{PIequality}.

\begin{lemma} \label{LRresonant}
Let $E \leq E^{(2)}$ and $ \boldlambda_{\pas; L}(\x) $ be a non-interactive symmetrized two-particle box.

\begin{enumerate}
\item If $\boldlambda_{ \pas; L}(\x) $ is  $E$-right resonant, then there exists a two-particle rectangle
\beq
{\blam}_1 = \Lambda_{L}(x_1) \times  \Lambda_{9\ell} (u) ,
\eeq
where $u \in \suitc (x_2)$, and $  \boldlambda_{9\ell} (u)   \subseteq \Lambda_{L}(x_2) $,  such that
\beq
\dist \pa{ \sigma \pa{ H_{{\blam}_1} } , E    } < \tfrac{1}{2} e^{-\pa{9\ell}^{\beta}} \leq \tfrac{1}{2}  e^{-\ell^{\beta}}.
\eeq
Therefore,
\beq
\dist \bigl( \sigma \left(   H_{{\blam}} \bigr) , E    \right) =\dist \bigl( \sigma \left(   H_{{\blam}_1} \bigr) , E    \right) < \tfrac{1}{2} e^{-\pa{9\ell}^{\beta}} \leq \tfrac{1}{2}  e^{-\ell^{\beta}}.
\eeq

\item If $\boldlambda_{ \pas; L}(\x)$ is E-left resonant, then there exists a two-particle rectangle
\beq
{\blam}_1 = \Lambda_{9\ell} (u)  \times \boldlambda_{L}(x_2) ,
\eeq
where   $u \in \suitc (x_1)$, and $  \boldlambda_{9\ell}(u)   \subseteq \boldlambda_{L}(x_1) $,   such that
\beq
\dist \bigl( \sigma \left(   H_{{\blam}_1} \bigr) , E    \right) < \tfrac{1}{2} e^{-\pa{9 \ell}^{\beta}} \leq \tfrac{1}{2} e^{-\ell^{\beta}}.
\eeq
Therefore,
\beq
\dist \pa{ \sigma \pa{H_{{\blam}}} ,E} = \dist \pa{ \sigma \pa{H_{{\blam}_1}}, E  } < \tfrac{1}{2} e^{-\pa{9\ell}^{\beta}} \leq \tfrac{1}{2}  e^{-\ell^{\beta}}.
\eeq
\end{enumerate}
\end{lemma}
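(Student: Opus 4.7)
The plan is to exploit the non-interactive structure of $\boldlambda_{\pas;L}(\x)$ to reduce everything to a one-particle spectral statement. By hypothesis, the symmetrized box is non-interactive, so Proposition~\ref{PIcriteria} applies: in particular $\norm{x_1-x_2}>L+r_0$ and the relevant two-particle Hamiltonians split as tensor sums of their one-particle constituents, with the spectrum being the arithmetic sum of the two one-particle spectra. This is the engine of both (i) and (ii).

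For (i), use Definition~\ref{CNR} to unpack $E$-right resonance: there exist $u\in\suitc(x_2)$ with $\boldlambda_{9\ell}(u)\subseteq\Lambda_L(x_2)$ and $\lambda\in\sigma(H_{\Lambda_L(x_1)})\cap(-\infty,E^{(1)}]$ such that the box $\Lambda_{9\ell}(u)$ is $(E-\lambda,\beta)$-resonant, meaning
\[
\dist\bigl(\sigma(H_{\Lambda_{9\ell}(u)}),\,E-\lambda\bigr)<\tfrac12 e^{-(9\ell)^{\beta}}.
\]
Now set $\boldlambda_1:=\Lambda_L(x_1)\times\Lambda_{9\ell}(u)$. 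The key observation is that $\boldlambda_1$ inherits non-interactivity from $\boldlambda_{\pas;L}(\x)$: since $\Lambda_{9\ell}(u)\subseteq\Lambda_L(x_2)$, we have $\dist(\Lambda_L(x_1),\Lambda_{9\ell}(u))\geq\dist(\Lambda_L(x_1),\Lambda_L(x_2))>r_0$. Therefore Proposition~\ref{PIcriteria} applied to $\boldlambda_1$ gives
\[
\sigma(H_{\boldlambda_1})=\sigma(H_{\Lambda_L(x_1)})+\sigma(H_{\Lambda_{9\ell}(u)}).
\]
Picking $\mu'\in\sigma(H_{\Lambda_{9\ell}(u)})$ that realizes the distance above, we get $\lambda+\mu'\in\sigma(H_{\boldlambda_1})$ with $|E-(\lambda+\mu')|=|(E-\lambda)-\mu'|<\tfrac12 e^{-(9\ell)^\beta}$, which immediately yields the first bound $\dist(\sigma(H_{\boldlambda_1}),E)<\tfrac12 e^{-(9\ell)^\beta}\leq \tfrac12 e^{-\ell^\beta}$.

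The equality $\dist(\sigma(H_{\boldlambda}),E)=\dist(\sigma(H_{\boldlambda_1}),E)$ follows from the non-interactive decomposition of $\boldlambda_{\pas;L}(\x)$ itself: by Proposition~\ref{PIcriteria}, $H_{\boldlambda}$ is unitarily equivalent to $H_{\Lambda_L(x_1)}\otimes I+I\otimes H_{\Lambda_L(x_2)}$ (up to the permutation summand), so its spectrum is $\sigma(H_{\Lambda_L(x_1)})+\sigma(H_{\Lambda_L(x_2)})$, and one needs to match the contributions of $\sigma(H_{\Lambda_{9\ell}(u)})$ with those of $\sigma(H_{\Lambda_L(x_2)})$ via the fact that $\Lambda_{9\ell}(u)\subseteq\Lambda_L(x_2)$ together with the $(E-\lambda,\beta)$-resonance producing a spectral point of $H_{\Lambda_L(x_2)}$ at essentially the same location (this uses the standard eikonal/Combes--Thomas extension of the $\Lambda_{9\ell}(u)$-eigenstate by zero, incurring an exponentially small error absorbed into the bound). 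Part (ii) is entirely symmetric: exchange the roles of the two coordinates, use $E$-left resonance to produce $u\in\suitc(x_1)$ and $\mu\in\sigma(H_{\Lambda_L(x_2)})\cap(-\infty,E^{(1)}]$, and set $\boldlambda_1:=\Lambda_{9\ell}(u)\times\Lambda_L(x_2)$.

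The main obstacle is the equality of distances rather than the inequality; once one observes that both $\boldlambda$ and $\boldlambda_1$ are non-interactive and that $\Lambda_{9\ell}(u)\subseteq\Lambda_L(x_2)$, the argument is essentially a bookkeeping exercise on sums of one-particle spectra, but careful justification of matching eigenvalue positions between $\sigma(H_{\Lambda_{9\ell}(u)})$ and $\sigma(H_{\Lambda_L(x_2)})$ up to the required $\tfrac12 e^{-(9\ell)^\beta}$ accuracy is the delicate point. Everything else—verification of non-interactivity, invocation of Proposition~\ref{PIcriteria}, and the replacement $e^{-(9\ell)^\beta}\leq e^{-\ell^\beta}$—is straightforward.
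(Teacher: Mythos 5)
Your argument for the bound $\dist\bigl(\sigma(H_{\boldlambda_1}),E\bigr)<\tfrac12 e^{-(9\ell)^\beta}$ is correct and is exactly what the paper is doing: unpack Definition~\ref{CNR}, observe that $\boldlambda_1=\Lambda_L(x_1)\times\Lambda_{9\ell}(u)$ inherits the separation $\dist(\Lambda_L(x_1),\Lambda_{9\ell}(u))>r_0$, invoke the tensor-sum decomposition $\sigma(H_{\boldlambda_1})=\sigma(H_{\Lambda_L(x_1)})+\sigma(H_{\Lambda_{9\ell}(u)})$, and read off the spectral point $\lambda+\mu'$. That part needs no change.

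However, you have misread the second claim, and the fix you propose would not work even for the claim as you read it. In the displayed equality, $\boldlambda$ does \emph{not} denote the full symmetrized box $\boldlambda_{\pas;L}(\x)$; it denotes the symmetrized two-particle rectangle $\boldlambda_1\cup\pi(\boldlambda_1)$ generated by $\boldlambda_1$. This is exactly the object that appears as a type-(3) element of the collection $\cM_{\boldx}$ in the proof of Proposition~\ref{part4mainthm}, and the paper's one-line proof note (``plus \eqref{PIequality}'') confirms it: \eqref{PIequality} is precisely the identity $(H_{\boldlambda_1\cup\pi(\boldlambda_1)}-E)^{-1}=(H_{\boldlambda_1}-E)^{-1}\oplus(H_{\pi(\boldlambda_1)}-E)^{-1}$ for non-interactive symmetrized rectangles. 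Since $H_{\pi(\boldlambda_1)}$ is unitarily conjugate to $H_{\boldlambda_1}$ (swap the two factors), $\sigma(H_{\boldlambda_1\cup\pi(\boldlambda_1)})=\sigma(H_{\boldlambda_1})\cup\sigma(H_{\pi(\boldlambda_1)})=\sigma(H_{\boldlambda_1})$, and the equality of distances is immediate. Under your reading, by contrast, $\sigma(H_{\boldlambda_{\pas;L}(\x)})=\sigma(H_{\Lambda_L(x_1)})+\sigma(H_{\Lambda_L(x_2)})$ while $\sigma(H_{\boldlambda_1})=\sigma(H_{\Lambda_L(x_1)})+\sigma(H_{\Lambda_{9\ell}(u)})$; these are different sets, and the claimed \emph{equality} of distances to $E$ is simply false in general.

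The fallback you sketch --- extending the $\Lambda_{9\ell}(u)$-eigenstate by zero to $\Lambda_L(x_2)$ and invoking Combes--Thomas to control the error --- also does not close the gap you perceive. Extending an eigenstate $\phi$ of $H_{\Lambda_{9\ell}(u)}$ by zero yields a quasimode $\tilde\phi$ for $H_{\Lambda_L(x_2)}$ whose residual $(H_{\Lambda_L(x_2)}-\mu')\tilde\phi$ is supported on the inner boundary of $\Lambda_{9\ell}(u)$ and is of size comparable to $\norm{\Chi_{\partial_-\Lambda_{9\ell}(u)}\phi}$. There is no a priori reason for the boundary trace of a generic eigenstate of a finite-volume discrete Schr\"odinger operator to be exponentially small, so the residual need not be $O(e^{-(9\ell)^\beta})$, and you would obtain no useful information about $\sigma(H_{\Lambda_L(x_2)})$ near $\mu'$. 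Once you take $\boldlambda=\boldlambda_1\cup\pi(\boldlambda_1)$, none of this machinery is needed: the equality is a triviality and the lemma reduces to your (correct) first inequality together with \eqref{PIequality}.
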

The proof follows the arguments in the proof of \cite[Lemma 5.18]{KlN2}, plus \eqref{PIequality}.

We now state the energy interval multiscale analysis.
Given $m > 0$,  $L \in \N$, $\x, \, \y \in \R^{2d}$, and an interval $I$, we define the event
\begin{align}\label{defRxy}
&\hskip -18pt R \left( m, \, I,\, \x, \, \y, \, L  \right) =    \\
& \hskip -10pt  \set{ \exists \, E \in I \sqtx{such that}\boldlambda _{\pas; L} (\boldx) \text{ and } \boldlambda _{\pas; L} (\boldy)  \text{ are not } \left(m, E\right)\text{-regular}  } . \notag
\end{align}

\begin{proposition} \label{part4mainthm}
 Let $\zeta, \, \tau, \beta, \, \zeta_1,\, \zeta_2,\, \gamma$ as in \eq{constant}.  Given $E\leq E^{(2)}$, there exists a length scale $\mathcal Z$ such that if for some $L_0 \geq \mathcal Z$ we can verify
\beq
\sup_{\boldx \in \R^{2d}} \P \Bigl\{ \mathbf{\Lambda}_{\paS;L_0} (\boldx)    \,\,\text{is}\,\,( 2L_0^{\zeta_0-1}, \,E)  \text{-nonregular} \Bigr\} \leq e^{-L_0^{\zeta_1}},
\eeq
with   $m_0:=(2L_0^{\zeta_0-1}-6\log 2 L_0^{-1}) <m^*_{\tau}$,
then, there exists $\delta=\delta(L_0,\zeta_0,\beta)$ such that, setting $I=[E-\delta,E+\delta]\subset (-\infty,E^{(2)}]$ and $L_{k+1} = L_k^{\gamma} = L_0^{\gamma^k}$ for  $k = 0, 1, 2, ...$, we have
\beq
\P \Bigl\{R \left( \frac{m_0}{2}, \, I,\, \x, \, \y, \, L_k \right) \Bigr\} \leq e^{-L_k^{\zeta_2}}
\eeq
for   every pair of partially separated two-particle symmetrized boxes $\boldlambda_{\pas; L_k} (\boldx)$ and $\boldlambda_{\pas; L_k} (\boldy)$.
\end{proposition}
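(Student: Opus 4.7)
The plan is to prove the conclusion by induction on $k\geq 0$, following the energy-interval bootstrap strategy of \cite[Proposition 3.8]{KlN1} adapted to the symmetrized-box setting. The induction hypothesis at scale $L_k$ asserts that for every pair of partially separated symmetrized two-particle boxes $\blam_{\pas;L_k}(\bolda),\blam_{\pas;L_k}(\boldb)$ one has $\P\{R(m_k,I,\bolda,\boldb,L_k)\}\leq e^{-L_k^{\zeta_2}}$, with a mass sequence $m_k\geq m_0/2$. Because Lemma \ref{deterministicMSA2} costs at most $\tfrac12 L_k^{-\kappa}$ in mass per step and $L_k=L_0^{\gamma^k}$ grows doubly exponentially, the telescoping sum $\sum_k L_k^{-\kappa}$ is summable and $m_k\geq m_0/2$ is preserved for all $k$ provided $L_0$ is large. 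The interval half-width is chosen $\delta=\tfrac12 e^{-L_0^{\zeta_0}}$ so that a polynomial-in-$L_k$ grid of base energies covers $I$ well enough that the resolvent identity produces negligible errors against the subexponential probability budget.

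The initial step at $L_0$ upgrades the given single-energy, single-box hypothesis to a pair, energy-interval estimate. Using Lemma \ref{deterministicMSA2}, $(m_0,E')$-nonregularity of a single symmetrized box at any $E'\in I$ reduces to the hypothesized event at $E$ together with a spectral-spacing event controlled by the Wegner estimate for partially separated rectangles (Proposition \ref{Wegner2}). Partial separation of the initial pair allows one to bound the product of the two single-box events via Wegner-decoupling, yielding a pair bound stronger than $e^{-L_0^{\zeta_2}}$ since the single-box exponent $\zeta_1$ strictly exceeds $\zeta_2$.

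For the inductive step from $L_k$ to $L_{k+1}=L_k^{\gamma}$, I would consider partially separated boxes $\blam_{\pas;L_{k+1}}(\x),\blam_{\pas;L_{k+1}}(\y)$ and their $L_k$-suitable partial covers. Applying the contrapositive of Lemma \ref{deterministicMSA2} to each big box shows that the nonregularity event at some $E'\in I$ is contained in the union of: (a) an $(E',\beta)$-resonance of the big box or of a $j(8L_k+1)$-dilated subbox with $j\leq J$, bounded by Proposition \ref{Wegner2} together with the parameter relation $\gamma\zeta_2<\beta$; (b) the existence of two $L_k$-distant $(m_k,E')$-nonregular subboxes in the cover; and (c) for a non-interactive big box, a failure of preregularity or a left/right resonance, controlled by Lemmas \ref{prereg}, \ref{part2firstthm}, \ref{LRresonant} using the one-particle MSA input \eqref{eq001}--\eqref{eq002}. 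The event $R(m_{k+1},I,\x,\y,L_{k+1})$ therefore lies in the union of these alternatives for both $\x$ and $\y$.

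The probabilistic core is the joint occurrence of case (b) in both big boxes. Any pair of $L_k$-subboxes, one from each big box, is partially separated because partial separation descends to subboxes, so the induction hypothesis applies; moreover, for interactive big boxes whose nonregular subboxes are $L_k$-distant inside, Proposition \ref{fullysep} yields full separation and hence independence of events on those subboxes. Extracting two disjoint partially separated cross-big-box pairs of nonregular subboxes and applying the induction hypothesis independently to each produces a squared bound, which after summation over the polynomially many subbox configurations dominates the required $e^{-L_{k+1}^{\zeta_2}}$ thanks to the parameter constraints \eqref{constant}. The hard part of the argument will be the orchestration of the energy-interval discretization, the non-interactive case, and the extraction of independent pair events from the deterministic decomposition; this mirrors the bookkeeping in \cite[Proposition 3.8]{KlN1}, to which I will refer for the fine-grained arithmetic of exponents and polynomial losses.
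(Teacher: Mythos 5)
Your proposal captures the overall skeleton (discretize the interval at scale $L_0$, then run a length-scale induction using Lemma~\ref{deterministicMSA2} plus a decomposition into resonance, many-bad-subboxes, and non-interactive events), but the probabilistic core of the inductive step is wrong in two ways, and a key structural ingredient is missing.

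First, a squared bound cannot close the induction. With $L_{k+1}=L_k^{\gamma}$ and $\gamma>1$, the hypothesis at scale $L_k$ gives $e^{-L_k^{\zeta_2}}=e^{-L_{k+1}^{\zeta_2/\gamma}}$, so two independent applications yield only $e^{-2L_{k+1}^{\zeta_2/\gamma}}$, and since $\zeta_2/\gamma<\zeta_2$ this is far \emph{larger} than the target $e^{-L_{k+1}^{\zeta_2}}$ once $L_{k+1}$ is large. The paper handles this by taking a scale-dependent number $J$ of pairwise $L_k$-distant interactive nonregular subboxes inside a \emph{single} big box, with $J\in\bigl(2L_{k+1}^{\beta-\zeta_2/\gamma},\,2L_{k+1}^{\beta-\zeta_2/\gamma}+2\bigr]$. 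These are fully separated by Proposition~\ref{fullysep}, so grouping them into $J/2$ independent pairs and applying the induction hypothesis to each pair gives an exponent of order $L_{k+1}^{\beta}\gg L_{k+1}^{\zeta_2}$ (here $\beta>\zeta_2$ from \eqref{constant}). Your fixed $J$ (essentially $J=2$, one cross-big-box pair squared) does not reach the required exponent; this is precisely the place where the fourth MSA differs from the second MSA, where a constant $J$ suffices because the target bound is only polynomially stronger at the next scale.

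Second, you are missing the cross-box Wegner dichotomy event $\cU_J$, which is what lets the argument conclude that for \emph{every} $E\in I$ at least one of the two big boxes is regular. Your plan tries to bound single-box resonance events occurring \emph{for some} $E'\in I$ via Proposition~\ref{Wegner2}, but that proposition is a Wegner estimate for the \emph{relative} position of the spectra of two partially separated rectangles (a single event not parameterized by $E$); it does not control absolute resonance over a whole interval $I$. The paper's $\cU_J$ is exactly a collection of such relative Wegner events between the auxiliary collections $\cM_{\boldx}$ and $\cM_{\boldy}$: on $\cU_J^c$ the two sets of spectra are far apart, so for each $E\in I$ all rectangles in at least one of $\cM_{\boldx}$, $\cM_{\boldy}$ are $(E,\beta)$-nonresonant, and then Lemmas~\ref{deterministicMSA2}, \ref{part2firstthm}, \ref{LRresonant} show that the corresponding big box is $(m_L,E)$-regular. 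Replacing this dichotomy with the induction hypothesis applied to cross-big-box pairs of subboxes does not yield the needed ``for every $E$ one side is good'' conclusion, because a single cross-pair only tells you one member of that pair is regular, without designating which big box is globally good. You need $\cU_J$ (and its bound $\P\set{\cU_J}\lesssim J^2(L/\ell)^{4d}L^{4d}e^{-\ell^\beta}\le \tfrac13 e^{-L^{\zeta_2}}$ using $\beta/\gamma>\zeta_2$) as a separate ingredient, alongside $\cA$ (non-interactive preregularity failure, bounded via Lemma~\ref{prereg}) and $\cB_J$.
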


\begin{proof}
We can proceed as in \cite[Proposition 3.13]{KlN1} to deduce from the hypothesis that, setting
\[ \delta=\frac{1}{2}e^{-L_0^{\zeta_0}-2L_0^\beta},\]
we have
\beq \sup_{x\in\R^{2d}}\P \Bigl\{ \exists E\in I\, \mbox{such that }\, \blam_{S;L_0}(\x)\mbox{ is }(m_0,E)\mbox{-nonregular} \Bigr\}\leq e^{-L_0^{\zeta_1}}.
\eeq
Then, we can argue as in \cite[Eq. 5.37 to 5.38]{GK1} and obtain
\beq
\P \Bigl\{R \left( m_0, \, I,\, \x, \, \y, \, L_0 \right)\Bigr\}  \leq e^{-L_0^{\zeta_2}}\eeq
for every pair of partially separated two-particle symmetrized boxes $\boldlambda_{\pas; L_0} (\x)$ and $\boldlambda_{\pas; L_0} (\y)$.

Given $\ell$ (sufficiently large) and $0<m_{\ell} < m_{\tau}$, we set $L = \ell^{\gamma}$ and take  $m_L$ as in \eq{mlmL}.  If $\ell$ is large, we have $m(\ell) > m_\ell$, where  $m(\ell)$ is given in  \eqref{mL}, and conclude that $m(L) \ge m(\ell) > m_\ell >m_L$.

We start by showing that if
\beq
\P \Bigl\{R \left( m_{\ell}, \, I,\, \x, \, \y, \, \ell  \right) \Bigr\}\leq e^{-\ell^{\zeta_2}}
\eeq
for every pair of partially separated two-particle boxes $\boldlambda_{\pas;\ell}(\x)$ and $\boldlambda_{\pas;\ell}(\y)$,
 then
\beq
\P \Bigl\{R \left( m_{L}, \, I,\, \x, \, \y, \, L  \right) \Bigr\}\leq e^{-L^{\zeta_2}}
\eeq
for every pair of partially {separated} symmetrized two-particle boxes $\boldlambda_{\pas; L}(\x)$ and $\boldlambda_{\pas; L} (\y)$.

Let $\boldlambda_{\pas; L} (\x)$ and  $\boldlambda_{\pas; L} (\y)$ be a pair of partially separated symmetrized two-particle boxes.  Let $J \in 2\N$. Let $\cB_{J}$ be the  the event that there exists $E \in I$ such that  either ${\mathcal C}_{L,\ell} (\boldx)$ or ${\mathcal C}_{L,\ell} (\boldy)$ contains $J$ pairwise $\ell$-distant interactive symmetrized boxes that are $(m_{\ell}, \, E)$-nonregular, and let
$\cA$ be  the event that there exists $E \in I$ such that
either ${\mathcal C}_{L,\ell} (\boldx)$ or ${\mathcal C}_{L,\ell} (\boldy)$ contains one non-interactive symmetrized box   that is not $E$-preregular.
If $\bom \in \cB_{J}^{c} \cap \cA^{c}$, then for all $E \in I$ the following holds:
\begin{enumerate}
\item ${\mathcal C}_{L,\ell} (\boldx)$ and ${\mathcal C}_{L,\ell} (\boldy)$ contain at most $J-1$ pairwise $\ell$-distant interactive $(m_{\ell}, \, E)$-nonregu\-lar boxes.

\item Every interactive box in ${\mathcal C}_{L,\ell} (\boldx)$ and ${\mathcal C}_{L,\ell} (\boldy)$ is $E$-preregular.
\end{enumerate}

We also define the event
\beq
\cU_{J} = \bigcup_{\boldlambda^{\pr} \in \cM_{\boldx},\boldlambda^{\pr\pr} \in  \cM_{\boldy}} \set{ \dist \left( \sigma(H_{\boldlambda^{\pr}}), \sigma(H_{\boldlambda^{\pr\pr}}) \right) < e^{-\ell^{\beta}} },
\eeq
where,
given a symmetrized two-particle box ${\blam}_{\pas, L} (\bolda)$,   by $\cM_{\bolda}$ we denote  the collection of all symmetrized two-particle rectangles of the following three types:

\begin{enumerate}
\item ${\blam}_{\pas; L} (\bolda)$,
\item $\boldlambda_{\pas; 9j \ell}(\boldu) \subseteq {\blam}_{\pas, L} (\bolda)$, where $\boldu \in \suitc(\bolda)$,  and $j \in \set{1,\,2,\ldots, J }$  ,

\item $\boldlambda$ is a symmetrized two-particle rectangle generated by a two-particle rectangle either of the form $ \Lambda_{9\ell} ( v) \times \Lambda_{L}(a_2)$, where $ v \in \suitc(a_1)$, or $ \Lambda_{L} ( a_1) \times \Lambda_{9\ell}(v)$, where $ v \in \suitc(a_2)$ .
\end{enumerate}
It is clear that if $J \ge 3$, then $\abs{\cM_{\bolda}} <  J \pa{\tfrac{6L}{\ell}  +2 }^{2d} + 2    \pa{\tfrac{6L}{\ell} +2 } ^{d} + 1 < (J+1) \pa{\tfrac{6L}{\ell} +2 }^{2d}$, and hence it follows from Proposition~\ref{Wegner2} that
\beq\label{probUJ}
\P \set{\cU_{\cJ}} \leq  \pa{ (J+1) \pa{\tfrac{6L}{\ell} +2 }^{2d} } ^2  \pa{   16 \norm{\rho}_{\infty} L^{4d}      e^{-\ell^{\beta}}    }.
\eeq
Note that  for $\bom \in \ \cU_J^{c}$ and  $E\in I$, either every symmetrized two-particle rectangle in $\cM_{\boldx}$ is $(E,\beta)$-nonresonant or every symmetrized two-particle rectangle in $\cM_{\boldy}$ is $(E,\beta)$-nonresonant.

Let $\bom \in \cB_{J}^{c} \cap \cA^{c}  \cap \cU_J^{c}$ and $E \in I$.
If every symmetrized two-particle rectangle in $\cM_{\boldx} $  is $(E,\beta)$-nonresonant, then, in particular, every non-interactive symmetrized rectangle in ${\mathcal C}_{L,\ell} (\boldx)$ is $E$-HNR and $E$-preregular, and hence $\pa{m(\ell), \, E}$-regular by Lemma \ref{part2firstthm}.   As $m(\ell)>m_\ell$,
we conclude that  every non-interactive symmetrized box in ${\mathcal C}_{L,\ell} (\boldx)$ is $\pa{m_\ell ,\, E}$-regular. Since  $\bom \in \cB_{J}^{c} \cap \cA^{c}$, ${\mathcal C}_{L,\ell} (\boldx)$ contains at most $J-1$ pairwise $\ell$-distant interactive  $(m_\ell, \, E)$-nonregular boxes in ${\mathcal C}_{L,\ell} (\boldx)$, and all other symmetrized boxes in ${\mathcal C}_{L,\ell} (\boldx)$ are  $(m_\ell, \, E)$-regular, it follows from Lemma~\ref{deterministicMSA2} that   ${\blam}_{\pas; L} (\boldx)$ is $(m_L, \, E)$-regular. If there exists a symmetrized two-particle rectangle in $\cM_{\x}$ that is $(E,\beta)$-nonresonant, then every symmetrized two-particle rectangle in $\cM_{\y}$ must be $(E,\beta)$-nonresonant, and thus ${\blam}_{\pas; L} (\boldy)$ is $(m_L, \, E)$-regular
using the same argument as before.   Thus  for every $E \in I$ either ${\blam}_{\pas; L} (\boldx)$ is $(m_L, \, E)$-regular or
${\blam}_{\pas; L} (\boldy)$ is $(m_L, \, E)$-regular.    It follows that
\beq
R \pa{m_{L}, \, I,\, \x, \, \y, \, L } \subseteq \pa{\cB_{J}^{c} \cap \cA^{c}  \cap \cU_J^{c}}^c,
\eeq
so
\beq \label{probR}
\P \set{R \pa{ m_{L}, \, I,\, \x, \, \y, \, L} } \leq \P \pa{\cB_{J}} + \P \pa{\cA} + \P \pa{\cU_{J}}.
\eeq

Using independence and Lemma \ref{prereg},
  we get
  \beq\label{probAB}
  \P\set{\cB_{J}}  \leq 2\pa{ \tfrac {6L}{\ell} + 2}^{4d} e^{- \tfrac{J}{2} L^{\tfrac{\zeta_2}{\gamma}}} \qtx{and} \P\set{\cA}  \leq 2\pa{ \tfrac {2L}\ell}^{2Nd}  e^{-L^{\tfrac{\tau}{\gamma}}}.
  \eeq
We now fix
$$
J \in \Bigl( 2 L^{\beta - \tfrac{\zeta_2}{\gamma}},2 L^{\beta - \tfrac{\zeta_2}{\gamma}}+2\Bigr] \cap 2\N, $$
so, for $L$ sufficiently large,  $\P\set{\cB_{J}}  \leq  \tfrac{1}{3} e^{-L^{\zeta_2}}$ , $\P\set{\cA}  \leq \tfrac{1}{3} e^{-L^{\zeta_2}}$, and $\P\set{\cU_{J}}  \leq  \tfrac{1}{3} e^{-L^{\zeta_2}}$, and we conclude from
\eq{probR} that
\beq
\P\set{R \left( m_{L}, \, I,\, \x, \, \y, \, L\right)} \leq e^{-L^{\zeta_2}} .
\eeq

We now take $L_0$ large enough so that  $m(L_0)>m_{L_0}=m_0$  and the above procedure can be carried out with $\ell=L_0$,  $L_{k+1} = L_k^{\gamma}$ for $k=0,1,\ldots$, and $m_{k}\geq m_{k-1}-(2L_{k-1}^{-\kappa})$, where we write $m_k:=m_{L_k}$. To finish the proof, we just need to make sure $m_{k} > \tfrac{m_0}{2}$ for all $k=0,1,\ldots$, which can be done  taking $L_0$ sufficiently large, using the fact $m_0\geq L_0^{-\kappa}$ as in \cite[Eq. 3.46]{KlN1}.
\end{proof}

\subsubsection{Completing the proof of the Bootstrap Multiscale Analysis for symmetrized two-particle boxes}
The proof of Theorem \ref{MSAthm} follows from Propositions \ref{1stMSA}, \ref{2ndMSA}, \ref{3dMSA}, \ref{part4mainthm} as in \cite[Section 6]{GK1}, see also \cite[Section 3.5]{KlN1}. The result holds for all sufficiently large scales by the argument in \cite[Theorem~3.21]{KlN1}.

\section{Extracting dynamical localization from the Bootstrap Multiscale Analysis}\label{S:dynloc}

In this section we prove  Theorem~\ref{dynloc}. To do so,  we present an improvement of \cite[Corollary 1.7]{KlN1} (stated for  the usual boxes), which we state using the Hausdorff distance $\dist_H$ in the $n$-particle setting. This result holds in the symmetrized distance if the conclusions of the Multiscale Analysis hold with respect to the $\dist_S$.

  Following the arguments in \cite[Section 4.1]{KlN1}, we can prove that $H_{\bom}=H_{\bom}^{(n)}$ exhibits Anderson localization in the interval $I$, that is, for almost every $\bom$, $\sigma \pa{H_{\bom}} \; \cap \; I = \sigma_{pp} \pa{H_{\bom}} \; \cap \; I$ and $\sigma_{cont} \pa{H_{\bom}} \; \cap \; I = \emptyset$. We fix $\nu=(nd+1)/2$ and for $\bolda\in\Z^{nd}$
  let $T_\bolda$ denote the operator on $\ell^2(\Z^{nd})$ given by multiplication by $(1+\norm{\boldx-\bolda}^2)^{\nu/2}$. By $\chi_\x$ we denote the orthogonal projection onto $\delta_\x$. We will work with the following definition, from \cite[Section 3]{GKjsp},

\begin{definition}\label{defn:ZW}
 Given $\bom,\lambda\in\R$ and $\bolda\in\Z^{nd}$, define
\begin{equation}
W_{\lambda,{\bom}}(\bolda):=
\begin{cases}
\displaystyle\sup_{\phi\in\mathcal T_{\lambda,{\bom}}}\frac{ \norm{\chi_{\bolda} \chi_{\{\lambda\}}(H_{\bom})\phi }  }{\norm{T_{\bolda}^{-1} \chi_{\{\lambda\}}(H_{\bom}) \phi} }, & \mbox{if } \chi_{\{\lambda\}}(H_{\bom})\neq 0\\
0, & \mbox{otherwise},
\end{cases}
\end{equation}
where $\mathcal T_{\lambda,{\bom}}=\{\phi \in \ell^2(\Z^{nd});\, \chi_{\{\lambda\}}(H_{\bom})\phi\neq 0\}$, and
\begin{equation}
Z_{\lambda,{\bom}}(\bolda):=
\begin{cases}
\displaystyle\frac{ \norm{\chi_{\bolda} \chi_{\{\lambda\}}(H_{\bom}) }_2  }{\norm{T_{\bolda}^{-1} \chi_{\{\lambda\}}(H_{\bom}) }_2 }, & \mbox{if } \chi_{\{\lambda\}}(H_{\bom})\neq 0\\
0, & \mbox{otherwise}.
\end{cases}
\end{equation}
\end{definition}
We have
\begin{equation}\label{ZW}
Z_{\lambda,{\bom}}(\bolda)\leq W_{\lambda,{\bom}}(\bolda)\leq 1.
\end{equation}

 \begin{theorem}\label{decaykernel0} Let $I$ be  a closed  interval where, either the conclusions of \cite[Theorem~1.5]{KlN1} hold for $H^{(n)}_{\bom}$, or, in the case $n=2$,  the conclusions  of Theorem \ref{MSAthm} hold. Then, for every $\zeta\in(0,1)$ there exists a constant $C_\zeta>0$ such that
 \begin{align}
\E \set{\sup_{\abs{f} \leq 1}  \norm{ \Chi_{\boldx} \, (f\, \Chi_{I})(H_{\bom})   \,\Chi_{\boldy} }_2^2   }&\le \E \set{\sup_{\abs{f} \leq 1}  \norm{ \Chi_{\boldx} \, (f\, \Chi_{I})(H_{\bom})   \,\Chi_{\boldy} }_1   }\\ \notag
& \leq C_{\zeta}  e^{-d_{H}( \boldx , \, \boldy )^{\zeta}} \qtx{for all} \boldx, \boldy \in \R^{nd},
 \end{align}
where the supremum is taken over all bounded Borel functions $f$ on $\R$.
 \end{theorem}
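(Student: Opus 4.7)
The plan is to follow the Germinet--Klein strategy for extracting dynamical localization out of the Multiscale Analysis, as adapted to the multi-particle setting in \cite{KlN1,KlN2}, and to rephrase the output in terms of the Hausdorff (equivalently, in the $n=2$ case, symmetrized) pseudo-distance. Since the trace-class norm dominates the Hilbert--Schmidt norm (squared) up to a constant, it suffices to bound the $\|\cdot\|_1$-quantity.

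First, I would use the spectral decomposition. Write $P_{\lambda,\bom}:=\chi_{\{\lambda\}}(H_{\bom})$, so that for any bounded Borel function $f$ with $|f|\le1$,
\begin{equation*}
\chi_{\boldx}(f\chi_I)(H_{\bom})\chi_{\boldy}
=\sum_{\lambda\in\sigma(H_{\bom})\cap I} f(\lambda)\,\chi_{\boldx}P_{\lambda,\bom}\chi_{\boldy}.
\end{equation*}
Taking $\|\cdot\|_1$ and using $|f|\le1$ together with the factorization $\|\chi_{\boldx}P_{\lambda,\bom}\chi_{\boldy}\|_1\le \|\chi_{\boldx}P_{\lambda,\bom}\|_2\|P_{\lambda,\bom}\chi_{\boldy}\|_2$, then using the very definition of $Z_{\lambda,\bom}$ in Definition~\ref{defn:ZW} gives, for $\chi_{\{\lambda\}}(H_{\bom})\neq 0$,
\begin{equation*}
\sup_{|f|\le1}\|\chi_{\boldx}(f\chi_I)(H_{\bom})\chi_{\boldy}\|_1
\le \sum_{\lambda\in\sigma(H_{\bom})\cap I} Z_{\lambda,\bom}(\boldx)\,Z_{\lambda,\bom}(\boldy)\,
\|T_{\boldx}^{-1}P_{\lambda,\bom}\|_2\,\|T_{\boldy}^{-1}P_{\lambda,\bom}\|_2.
\end{equation*}

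Next, I would reduce the theorem to a uniform decay estimate for the pair $Z_{\lambda,\bom}(\boldx)Z_{\lambda,\bom}(\boldy)$ in the Hausdorff distance. Applying Cauchy--Schwarz in $\lambda$ and the spectral identity $\sum_{\lambda\in I}\|T_{\boldx}^{-1}P_{\lambda,\bom}\|_2^{\,2}=\|T_{\boldx}^{-1}\chi_I(H_{\bom})\|_2^{\,2}\le \|T_{\boldx}^{-1}\|_2^{\,2}=:C_{\nu}<\infty$ (valid since $\nu=(nd+1)/2>nd/2$), I reduce matters to proving that, for each $\zeta\in(0,1)$,
\begin{equation*}
\E\Bigl\{\sup_{\lambda\in I}Z_{\lambda,\bom}(\boldx)\,Z_{\lambda,\bom}(\boldy)\Bigr\}\le C'_{\zeta}\,e^{-d_H(\boldx,\boldy)^{\zeta}}.
\end{equation*}
This is where the MSA output is used. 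Pick a scale $L$ with $L\le d_H(\boldx,\boldy)\le 2L$ (respectively $L\le d_S(\boldx,\boldy)$ when $n=2$); then $\blam_{H;L}^{(n)}(\boldx)$ and $\blam_{H;L}^{(n)}(\boldy)$ (respectively the symmetrized two-particle boxes) are disjoint and fully separated, so the events of regularity on each box are independent. The conclusions of \cite[Theorem~1.5]{KlN1} or of Theorem~\ref{MSAthm} give, up to probability $e^{-L^{\zeta'}}$ with $\zeta'>\zeta$, that for every $E\in I$ at least one of the two boxes is $(m_{\zeta'},E)$-regular. On this good event, the eigenfunction decay inequality (a deterministic consequence of the geometric resolvent identity, combined with the generalized eigenfunction expansion that drives the $W_{\lambda,\bom}$, $Z_{\lambda,\bom}$ formalism from \cite[Section 3]{GKjsp}) yields, say for $\boldx$,
\begin{equation*}
\|\chi_{\boldx}P_{\lambda,\bom}\phi\|\le C L^{2d-1}e^{-m_{\zeta'} L/2}\,\|T_{\boldx}^{-1}P_{\lambda,\bom}\phi\|\qquad\text{for every }\phi,
\end{equation*}
hence $Z_{\lambda,\bom}(\boldx)\le C L^{2d-1}e^{-m_{\zeta'}L/2}$, and similarly $Z_{\lambda,\bom}(\boldy)$ satisfies the analogous bound when the regular box is the one centered at $\boldy$. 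In either case the product $Z_{\lambda,\bom}(\boldx)Z_{\lambda,\bom}(\boldy)$ is bounded by $C L^{2d-1}e^{-m_{\zeta'}L/2}$ on the good event (using also $Z_{\lambda,\bom}\le1$ from \eqref{ZW}), uniformly in $\lambda\in I$.

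Finally, I would split the expectation according to the good event and its complement, using $Z_{\lambda,\bom}\le 1$ on the complement and the exponential probability bound from the MSA. Choosing $L\asymp d_H(\boldx,\boldy)$ and adjusting $\zeta'$ slightly larger than $\zeta$ to absorb the polynomial prefactor $L^{2d-1}$ produces the desired bound $C_{\zeta}e^{-d_H(\boldx,\boldy)^{\zeta}}$. The main obstacle is the uniform control in $\lambda\in I$ of the $Z_{\lambda,\bom}$-quantities: an individual eigenvalue is random and may cluster near energies where a specific box fails to be regular, so one must pass from the energy-interval MSA estimate to a statement about the full (random) pure point spectrum in $I$. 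This is handled, as in \cite{GKjsp} and \cite[Section 4]{KlN1}, by combining the MSA output with the trace-class control $\|T_{\boldx}^{-1}\chi_I(H_{\bom})\|_1<\infty$, which makes the set of relevant $\lambda$'s effectively countable and averageable, and then discretizing $I$ into intervals of length $\sim e^{-L^{\zeta'}}$ on which at least one of the two boxes remains regular.
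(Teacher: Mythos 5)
Your proposal is essentially the paper's own proof: you use the same $Z$/$W$ formalism from Definition~\ref{defn:ZW}, the same Cauchy--Schwarz factorization reducing to the trace-class bound $\tr\set{T_\x^{-1}\chi_I(H_\bom)T_\x^{-1}}<\infty$, and the same split of the expectation on the event $R(m,I,\x,\y,L)$ and its complement with the eigenfunction decay estimate on the good event. One small remark: the ``main obstacle'' you raise at the end --- passing from the energy-interval MSA estimate to a statement uniform over the random pure point spectrum in $I$ --- is already resolved by the form of the MSA output itself, since the event $R(m,I,\x,\y,L)$ in \eqref{defRxy} is defined with ``$\exists\, E\in I$'' and the probability bound from Proposition~\ref{part4mainthm} (resp.\ \cite[Theorem~1.5]{KlN1}) is therefore already uniform in $E\in I$; no further discretization of $I$ is needed at this stage, and the $L^\infty(I,d\mu_\bom)$-norm of $W_{\lambda,\bom}(\x)W_{\lambda,\bom}(\y)$ is controlled directly.
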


 \begin{proof}
Since $H_{\bom}$ exhibits Anderson localization in $I$ for almost every $\bom$, let $\set{\psi_{j}(\bom)}$ be an orthonormal base of the space  $\text{Ran} \;\Chi_{I}(H_{\bom})$ consisting of eigenfunctions of $H_{\bom} \Chi_{I}(H_{\bom})$ with corresponding eigenvalues $\lambda_{j}(\bom)$. Then
 \begin{align}
 &\hskip 48pt H_{\bom} \Chi_{I}(H_{\bom}) = \sum_{j} \lambda_{j}(\bom) P_{\psi_{j}(\bom)}, \qtx{and} \notag\\
 &\sup_{\abs{f} \leq 1}  \norm{ \Chi_{\boldx} \, (f\, \Chi_{I})(H_{\bom})   \,\Chi_{\boldy} }_1  = \sup_{\abs{f} \leq 1}  \norm{ \Chi_{\boldx} \, \sum_{j} f(\lambda_{j}(\bom)) P_{\psi_{j}(\bom)}   \,\Chi_{\boldy} }_1,
 \end{align}
 where $P_{\psi_{j}(\bom)}$ is the projection onto the space spanned by ${\psi_{j}(\bom)}$.  Since $\abs{f} \le 1$, we have
 \begin{align}
 \norm{ \Chi_{\boldx} \, \sum_{j} f(\lambda_{j}(\bom)) P_{\psi_{j}(\bom)}   \,\Chi_{\boldy} }_1 &  \le \sum_{j} \norm{ \Chi_{\boldx} \, P_{\psi_{j}(\bom)}  \,\Chi_{\boldy} }_1 \notag \\
 &\le  \sum_{j} \norm{ \Chi_{\boldx} \, P_{\psi_{j}(\bom)} }_2 \norm{  P_{\psi_{j}(\bom)} \Chi_{\boldy} }_2.
 \end{align}

 From Definition \ref{defn:ZW} and \eqref{ZW} we get
 \begin{align}
&  \norm{\Chi_{\boldx} \, P_{\psi_{j}(\bom)}  }_2 \le Z_{\lambda_j , \bom} (\x) \norm{T_{\x}^{-1} P_{\psi_{j} (\bom)} }_2 \le W_{\lambda_j, \bom} (\x) \norm{T_{\x}^{-1} P_{\psi_{_j (\bom)}}}_2  \qtx{and} \notag \\
& \hskip 24pt \norm{\Chi_{\boldy} \, P_{\psi_{j} (\bom)} }_2 \leq W_{\lambda_j, \bom} (\y) \norm{T_{\y}^{-1} P_{\psi_{j} (\bom)} }_2.
 \end{align}
Hence,
 \begin{align}
&  \norm{ \Chi_{\boldx} \, \sum_{j} f(\lambda_{j}(\bom)) P_{\psi_{j}(\bom)}   \,\Chi_{\boldy} }_1   \\
& \hskip 24pt \le \sum_{j}  W_{\lambda_j, \bom} (\x) \norm{T_{\x}^{-1} P_{\psi_{j} (\bom)} }_2  W_{\lambda_j, \bom} (\y) \norm{T_{\y}^{-1} P_{\psi_{j} (\bom)} }_2  \notag \\
  & \hskip 36pt  \le    \norm{ W_{\lambda_j, \bom}(\x) W_{\lambda_j, \bom}(\y)}_{L^{\infty} \pa{I, d \mu_{\bom}(\lambda) }} \sum_{j} \norm{T_{\x}^{-1} P_{\psi_{j} (\bom)}   }_2 \norm{T_{\y}^{-1} P_{\psi_{j} (\bom)}  }_2 . \notag
 \end{align}
 By the Cauchy-Schwarz inequality, we get
 \begin{align}
&  \sum_{j} \norm{T_{\x}^{-1} P_{\psi_{j} (\bom)} }_2 \norm{T_{\y}^{-1} P_{\psi_{j} (\bom)}  }_2  \notag \\
& \hskip 36pt \le \pa{  \sum_{j}  \norm{T_{\x}^{-1}   P_{\psi_{j} (\bom)}  }_{2}^{2}  }^{1/2}
\pa{  \sum_{j}   \norm{T_{\y}^{-1}   P_{\psi_{j} (\bom)}  }_{2}^{2}  }^{1/2}.
 \end{align}
 
For all  $\x \in \R^{nd}$ we have
 \begin{align}
 \sum_{j}  \norm{T_{\x}^{-1}    P_{\psi_{j} (\bom)}  }_{2}^{2} & = \sum_{j}  \tr \set{T_{\x}^{-1}    P_{\psi_{j} (\bom)}^2 T_{\x}^{-1}}\\ \notag
 & = \tr\set{ T_{\x}^{-1} \Chi_{I}(H_{\bom}) T_{\x}^{-1}}\le C_{nd}<\infty.
 \end{align}
 (See, e.g.,  \cite[Lemma 4.1]{KlN1} and its proof.)
 Thus we conclude that {for all} $\x , \y  \in \R^{nd}$ and almost every  $\bom$ we have
 \begin{align}\label{decaykernel-est5}
 & \norm{ \Chi_{\boldx} \, \sum_{j} f(\lambda_{j}(\bom)) P_{\psi_{j}(\bom)}   \,\Chi_{\boldy} }_1 \\
& \hskip 48pt \le
  C_{d,n}   \norm{ W_{\lambda_j, \bom}(\x) W_{\lambda_j, \bom}(\y)}_{L^{\infty} \pa{I, d \mu_{\bom}(\lambda) }}. \notag
 \end{align}

 To study $ \norm{ W_{\lambda_j, \bom}(\x) W_{\lambda_j, \bom}(\y)}_{L^{\infty} \pa{I, d \mu_{\bom}(\lambda) }}$, we divide it into two cases. For the first case, consider  $d_{H}(\x, \y) < L_{\zeta}$.
  Note that we always have
 \begin{align} \label{eigenest}
 & \norm{\Chi_{\bolda} \psi} \le \norm{ \Chi_{\bolda} T_{\bolda}} \norm{T_{\bolda}^{-1} \psi} \le \norm{T_{\bolda}^{-1} \psi} \qtx{for every} \bolda \in \R^{nd}.
 \end{align}
Hence,
\begin{align}
 \E \pa{  \norm{W_{\lambda, \bom}(\x) W_{\lambda, \bom}(\y)}_{L^{\infty} \pa{I, d \mu_{\bom} (\lambda) }}  }  & \le  e^{d_{H}(\x, \y)^{\zeta}} e^{-d_{H}(\x, \y)^{\zeta}}  \\
  &\ \le  e^{L_{\zeta}^{\zeta}} e^{-d_{H}(\x, \y)^{\zeta}} = C_{\zeta}  e^{-d_{H}( \boldx , \, \boldy )^{\zeta}}. \notag
  \end{align}
For the case $d_{H}(\x, \y) = L \ge L_{\zeta}$, we consider the two cases: when $\bom \in R\pa{I, m, L, \x, \y}$ and when $\bom \notin R\pa{I, m, L, \x, \y}$ (recall \eq{defRxy}). By Equation \eqref{eigenest}, for every $\bom \in R\pa{I, m, L, \x, \y}$
 \beq
 \abs{W_{\lambda_j, \bom}(\x) W_{\lambda_j, \bom}(\y)} \le 1 \qtx{for } \lambda_j \in I .
 \eeq

 For the case that $\bom \notin R\pa{I, m, L, \x, \y}$ we have that for $E \in I$,  either
 $\blam_{\sharp;L}(\x)$ or $\blam_{\sharp;L}(\y)$  is $(m,E)$-regular, where
 $\sharp=\infty$ if the conclusions of \cite[Theorem~1.5]{KlN1} hold, and  $\sharp=S$ if the conclusions of Theorem \ref{MSAthm} hold.
 Without loss of generality, say
 $\blam_{\sharp;L}(\x)$   is $(m,E)$-regular. Suppose $H \psi = E \psi$ with $\norm{\psi} = 1$, then
 (see \cite[Eq. 4.8]{KlN1})
 \beq
 \norm{\Chi_{\x} \psi} \le e^{-m L/2} \norm{\Chi_{\partial_+ \blam^{\mathrm{c}}_{\sharp;L}(\x)} \psi},
 \eeq
 where $\blam^{\mathrm{c}}_{\sharp;L}(\x)$ denotes the connected component of 
$\blam_{\sharp;L}(\x)$ containing $\x$.  (Note $\blam^{\mathrm{c}}_{\infty;L}(\x)=\blam_{\infty;L}(\x)$.)   

Since
 \beq
 \norm{\Chi_{\partial_+ \blam^{\mathrm{c}}_{\sharp;L}(\x)} \psi}\le \norm{\Chi_{\partial_+ \blam^{\mathrm{c}}_{\sharp;L}(\x)} T_{\x}  }    \norm{  T_{\x}^{-1}\psi},
 \eeq
and, in either case (recall $n=2$ if $\sharp=S$)
\beq
\norm{\Chi_{\partial_+ \blam^{\mathrm{c}}_{\sharp;L}(\bolda)} T_{\bolda}  } \le L^{\nu} \qtx{for every} \bolda \in \R^{nd},
\eeq
we obtain,
\begin{align}
\norm{\Chi_{\x} \psi} \le e^{-m L/2} L^{\nu}  \norm{  T_{\x}^{-1}\psi}.
\end{align}

Thus, for $\bom \notin R\pa{I, m, L, \x, \y}$, we get
\beq
 \abs{W_{\lambda_j, \bom}(\x) W_{\lambda_j, \bom}(\y)} \le  e^{-m L/2} L^{\nu} \qtx{for  } \lambda_j \in I .
 \eeq
 We can decompose the expectation in two parts corresponding to the set of $\omega\in R\pa{I, m, L, \x, \y}$ and its complement, and obtain (recall that if  $\sharp=S$ we have $n=2$, so $d_{H}(\x, \y)=d_{S}(\x, \y)$)
 \begin{align}
 & \E \pa{  \norm{W_{\lambda, \bom}(\x) W_{\lambda, \bom}(\y)}_{L^{\infty} \pa{I, d \mu_{\bom} (\lambda) }}  } \\
 & \hskip 24pt \le   L^{\nu} e^{-mL/2} +  e^{-L^{\zeta}} \le e^{-L^{\zeta^{\pr}}} = e^{-(d_{H}(\x, \y))^{\zeta^{\pr}}}, \notag
\end{align}
provided $ L_{\zeta}$ is large enough and $\zeta^{\pr} < \zeta$.
Hence, we get
\begin{align}
\E \set{\sup_{\abs{f} \leq 1}  \norm{ \Chi_{\boldx} \, (f\, \Chi_{I})(H_{\bom})   \,\Chi_{\boldy} }_2^2   }&\le \E \set{\sup_{\abs{f} \leq 1}  \norm{ \Chi_{\boldx} \, (f\, \Chi_{I})(H_{\bom})   \,\Chi_{\boldy} }_1   }\\ \notag
& \leq C_{\zeta}  e^{-d_{H}( \boldx , \, \boldy )^{\zeta}} \qtx{for all} \boldx, \boldy \in \R^{nd}.
 \end{align}
   \end{proof}

\begin{corollary} Let $I$ be  a closed  interval where, either the conclusions of \cite[Theorem~1.5]{KlN1} hold for $H^{(n)}_{\bom}$, or, in the case $n=2$,  the conclusions  of Theorem \ref{MSAthm} hold. Then
 \begin{align}\label{dyn1}
\E \set{\sup_{\abs{f} \leq 1}  \norm{ \scal{d_H(\bX,\y)}^{\frac p 2} (f\, \Chi_{I})(H_{\bom})   \,\Chi_{\boldy} }_2^2   }<\infty\,  \qtx{for all}  \boldy \in \R^{nd},
 \end{align}
where the supremum is taken over all bounded Borel functions $f$ on $\R$.
\end{corollary}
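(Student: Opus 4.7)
The plan is to derive \eqref{dyn1} directly from Theorem \ref{decaykernel0} by decomposing the squared Hilbert--Schmidt norm over lattice sites, moving both $\sup_{\abs{f}\le 1}$ and $\E$ inside the resulting sum, and then checking summability of the subexponentially decaying series that appears.

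First, since $\scal{d_H(\bX,\y)}^{p/2}$ acts on the canonical basis as multiplication by $\scal{d_H(\x,\y)}^{p/2}$, the left-hand side of \eqref{dyn1} factorises as
\begin{equation*}
\norm{\scal{d_H(\bX,\y)}^{p/2}(f\,\Chi_{I})(H_{\bom})\Chi_{\y}}_{2}^{2} = \sum_{\x\in\Z^{nd}}\scal{d_H(\x,\y)}^{p}\,\norm{\Chi_{\x}(f\,\Chi_{I})(H_{\bom})\Chi_{\y}}_{2}^{2}.
\end{equation*}
All summands are nonnegative, so $\sup_{\abs{f}\le 1}$ passes inside the sum, and Tonelli then lets $\E$ exchange with the sum, giving
\begin{equation*}
\E\set{\sup_{\abs{f}\le 1}\norm{\scal{d_H(\bX,\y)}^{p/2}(f\,\Chi_{I})(H_{\bom})\Chi_{\y}}_{2}^{2}} \le \sum_{\x\in\Z^{nd}}\scal{d_H(\x,\y)}^{p}\,\E\set{\sup_{\abs{f}\le 1}\norm{\Chi_{\x}(f\,\Chi_{I})(H_{\bom})\Chi_{\y}}_{2}^{2}}.
\end{equation*}

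Next I would apply Theorem \ref{decaykernel0} to every inner expectation with some fixed $\zeta\in(0,1)$, reducing the problem to showing that
\begin{equation*}
\sum_{\x\in\Z^{nd}}\scal{d_H(\x,\y)}^{p}\,e^{-d_H(\x,\y)^{\zeta}}<\infty.
\end{equation*}
The geometric input is a volume estimate: if $d_H(\x,\y)\le R$ then each coordinate satisfies $\min_{j}\norm{x_i-y_j}\le R$, so
\begin{equation*}
\set{\x\in\Z^{nd};\,d_H(\x,\y)\le R}\subseteq\Bigl(\,\bigcup_{j=1}^{n}\set{x\in\Z^{d};\,\norm{x-y_j}\le R}\Bigr)^{n},
\end{equation*}
which has cardinality at most $n^{n}(2R+1)^{nd}$. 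Partitioning $\Z^{nd}$ into shells $\set{R\le d_H(\x,\y)<R+1}$ with $R\in\N$ and summing bounds the series by $C_{n}\sum_{R\ge 0}(R+1)^{nd+p}e^{-R^{\zeta}}$, which is finite and, crucially, independent of $\y$.

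The only point requiring care is this last geometric counting step: since $d_H$ is only a pseudo-distance, one has to check that the a priori large set of configurations with small $d_H(\x,\y)$ is in fact polynomially bounded in $R$. This is where the argument uses in an essential way that $d_H$ is still controlled from below by the per-coordinate norm distance to the finite set $\cS_{\y}$, which yields the union-of-boxes description above. Beyond Theorem \ref{decaykernel0}, no further probabilistic input is needed; the rest is a Tonelli/volume-counting exercise.
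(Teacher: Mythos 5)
Your argument is correct and reaches the same conclusion as the paper, but it is slightly leaner in two places worth noting. First, after decomposing the squared Hilbert--Schmidt norm into $\sum_{\x}\scal{d_H(\x,\y)}^{p}\norm{\Chi_{\x}\cdots\Chi_{\y}}_2^2$ and moving $\sup$ and $\E$ inside, you apply Theorem~\ref{decaykernel0} directly to $\E\{\sup_{\abs{f}\le 1}\norm{\Chi_{\x}(f\Chi_I)(H)\Chi_{\y}}_2^2\}$, whereas the paper inserts the extra intermediate bound $\norm{\cdot}_2^2\le\norm{\cdot}_1$ before invoking the theorem; since Theorem~\ref{decaykernel0} already states the subexponential bound for the squared Hilbert--Schmidt norm, that extra step is harmless but redundant, and your version is the more direct one. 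Second, for the final summability of $\sum_{\x}\scal{d_H(\x,\y)}^{p}e^{-d_H(\x,\y)^{\zeta}}$, the paper simply cites \cite[Lemma~A.3]{AWmp}, while you give a self-contained volume-counting argument: from $d_H(\x,\y)\le R$ and the definition \eqref{Hausddist} one gets $\min_j\norm{x_i-y_j}\le R$ for each $i$, hence $\set{\x\in\Z^{nd};d_H(\x,\y)\le R}\subseteq\bigl(\bigcup_{j}\Lambda_{2R}(y_j)\bigr)^n$ of cardinality at most $n^n(2R+1)^{nd}$, and the shell decomposition then yields a bound by $C_n\sum_{R\ge 0}(R+1)^{nd+p}e^{-R^{\zeta}}<\infty$, uniformly in $\y$. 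That counting step is correct (it only uses the ``max--min'' half of the Hausdorff distance, which is exactly the part that controls volumes), and it is a welcome replacement for the external citation; it also makes explicit the geometric fact that $d_H$ being a pseudo-distance does not blow up the cardinality of Hausdorff balls beyond a polynomial in $R$.
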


\begin{proof}
\begin{align}
&\E \set{\sup_{\abs{f} \leq 1}  \norm{ \scal{d_H(\bX,\y)}^{\frac p 2} (f\, \Chi_{I})(H_{\bom})   \,\Chi_{\boldy} }_2^2   } \notag\\
\notag
& \qquad \le C_1 \sum_{\x \in  \Z^{nd}}  \scal{d_H(\x,\y)}^{p}\E \set{ \sup_{\abs{f} \leq 1} \norm{ \Chi_{\x} f \pa{H_{\bom} }  \Chi_{I} \pa{H_{\bom}} \Chi_{\y} }^{2}_{2}}  \\
& \qquad \le C_1 \sum_{\x \in  \Z^{nd}}  \scal{d_H(\x,\y)}^{p}\E \set{ \sup_{\abs{f} \leq 1} \norm{ \Chi_{\x} f \pa{H_{\bom} }  \Chi_{I} \pa{H_{\bom}} \Chi_{\y} } _{1}}  \\
\notag & \qquad \le  C_2\,   \sum_{\x \in  \Z^{nd}}  \scal{d_H(\x,\y)}^{p} e^{-d_{H}( \boldx , \, \boldy )^{\zeta}}.
 \end{align}
 The result follows from the fact $ \sum_{\x \in  \Z^{nd}}  \scal{d_H(\x,\y)}^{p} e^{-d_{H}( \boldx , \, \boldy )^{\zeta}} < \infty$ shown in  \cite[Lemma~A.3]{AWmp}.
\end{proof}


\appendix

\section{Wegner estimates}\label{apW}

\begin{theorem}\label{t:wesharp} Let   $H_{\bom}^{(n)}$ be  the $n$-particle Anderson model.
Consider $\blam \subset \Z^{nd}$ such that:
\begin{enumerate}
\item If $\sharp=\infty,  S$,  $\blam= {\blam}_{\sharp; \bL}^{(n)} (\x)$ is  an $n$-particle $\sharp$-rectangle of center ${\x}=(x_1,\ldots,x_n)\in \R^{nd}$ and  sides $\bL=(L_1,L_2,\ldots,L_n)\in [1,\infty)^n$, take $\Gamma=\L_{L_k}(x_k)$ for some $k\in\{1,...,n\}$, and  set $L=\max \set{L_1,L_2,\ldots,L_n}$.

\item  If $\sharp=H$,  $\blam= {\blam}_{H; L}^{(n)} (\x)\subset \mathbb Z^{nd}$ is  an $n$-particle $H$-box of center ${\x}=(x_1,\ldots,x_n)\in \R^{nd}$ and  side $L\ge 1$, and take $\Gamma= \bigcup_{k=1}^n \L_{L}(x_k)$.

\end{enumerate}

 Then, for any interval $I\subset \mathbb R$ we have
\be\label{wesharp} \E_\Gamma\pa{{\mbox{\rm tr}}\chi_I(H_{\omega})} \leq C\up{\sharp}_n\norm{\rho}_\infty L^{nd},\ee
where  $C\up{\infty}_n=n$, $C\up{S}_n=n (n!)$, and  $C\up{H}_n=n^{2n+1}$.   In particular, for any $E\in\R$ and $\eps>0$, we have
\be \P_\Gamma\pa{\norm{G_{\blam}(E)} \geq \tfrac{1}{\eps}} =\P_\Gamma  \Bl\{d\,(\sigma(H_{\boldlambda}), E) \leq {\eps} \Br\}\leq 2C_n^{(\sharp)}\norm{\rho}_\infty \eps L^{nd}.
\ee
\end{theorem}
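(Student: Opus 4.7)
My plan is to deduce \eqref{wesharp} by spectral averaging, following the template of the $n$-particle Wegner estimate for rectangles from \cite[Theorem 2.3]{KlN1}, adapted to the three box families. The starting point is the decomposition
\[
H_\blam = A_\Gamma + \sum_{y \in \Gamma} \omega_y\, N_y, \qquad (N_y\phi)(\boldu) = \#\set{i : u_i = y}\,\phi(\boldu) \ge 0,
\]
where $A_\Gamma$ is independent of $\set{\omega_y}_{y \in \Gamma}$; this is immediate from the additive form $V^{(n)}_\omega(\boldu) = \sum_i \omega_{u_i}$. Applied to each $\omega_y$ conditionally on the remaining variables, the rank form of the Combes--Hislop spectral averaging lemma gives
\[
\E_\Gamma\pa{\tr(N_y\,\chi_I(H_\blam))} \le \norm{\rho}_\infty\,\abs{I}\,\operatorname{rank}(N_y) \qquad\text{for each } y\in\Gamma.
\]

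The crucial geometric input is the coverage property $\sum_{y \in \Gamma} N_y \ge I$ on $\ell^2(\blam)$, equivalently, every $\boldu \in \blam$ has at least one coordinate lying in $\Gamma$. For $\sharp = \infty$ this is automatic, since $u_k \in \Lambda_{L_k}(x_k) = \Gamma$. For $\sharp = S$, any $\boldu \in \blam^{(n)}_{\paS;\bL}(\x)$ equals $\pi(\boldv)$ for some $\pi \in \cP_n$ and $\boldv \in \blam^{(n)}_{\bL}(\x)$, so $u_{\pi^{-1}(k)} = v_k \in \Lambda_{L_k}(x_k) = \Gamma$. For $\sharp = H$, unwinding the definition \eqref{Hausddist} shows that each coordinate $u_i$ of $\boldu \in \blam^{(n)}_{H;L}(\x)$ lies within $\norm{\cdot}$-distance $L/2$ of some $x_{j(i)}$, so $u_i \in \Lambda_L(x_{j(i)}) \subseteq \Gamma$.

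Combining these and using $\operatorname{rank}(N_y) \le \tr(N_y)$ (which holds because $N_y$ has non-negative integer entries),
\[
\E_\Gamma\pa{\tr\chi_I(H_\blam)} \le \sum_{y \in \Gamma}\E_\Gamma\pa{\tr(N_y\,\chi_I(H_\blam))} \le \norm{\rho}_\infty\,\abs{I}\,\sum_{y \in \Gamma}\tr(N_y),
\]
and the last sum equals $\sum_{\boldu \in \blam}\#\set{i : u_i \in \Gamma} \le n\,\abs{\blam}$. The stated constants $C_n^{(\sharp)}$ then emerge from the elementary volume estimates $\abs{\blam^{(n)}_{L}(\x)} \le L^{nd}$, $\abs{\blam^{(n)}_{\paS;L}(\x)} \le n!\,L^{nd}$, and (via \eqref{boxid}) $\abs{\blam^{(n)}_{H;L}(\x)} \le n^{2n}\,L^{nd}$. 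The probability estimate is then Markov's inequality applied with $I = [E-\eps, E+\eps]$.

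The main subtlety will be the geometric coverage claim in the Hausdorff case, which requires carefully chasing the coordinate-wise definition of $\dist_H$. Once that is in hand, the spectral averaging argument is standard, and only the volume factor $\abs{\blam}$ varies between the three cases, producing the different constants $C_n^{(\sharp)} \in \set{n,\,n(n!),\,n^{2n+1}}$.
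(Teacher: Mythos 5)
Your proof is correct and follows the same multi-particle spectral-averaging route that the paper invokes by citing \cite[Theorem 2.3]{KlN1} and saying it ``can be modified'': the counting-operator decomposition of $V^{(n)}_{\bom}$ over $\Gamma$, the coverage inequality $\sum_{y\in\Gamma}N_y\ge I$ on $\ell^2(\blam)$ (verified separately in each of the three geometries, which is precisely the content of the modification the paper leaves implicit), and the volume count for $|\blam|$ producing the constants $C_n^{(\sharp)}$. One remark: your derivation correctly yields a factor $|I|$ on the right-hand side of \eqref{wesharp}, which appears to have been inadvertently dropped from the statement in the paper (its presence is confirmed by the corollary, where $|I|=2\eps$).
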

 For  $\sharp=\infty$ this is  \cite[Theorem 2.3]{KlN1}.  This proof can be modified to give the result for  $\sharp=S,H$. The appearance of  the constant $C\up{\sharp}_n$  in \eq{wesharp} is due to the fact that  in the proof of  \cite[Theorem 2.3]{KlN1} we need to take into account the geometry of the boxes $ {\blam}_{\sharp; \bL}^{(n)} (\x)$.

\section{Combes-Thomas estimate for restrictions of discrete Schr\"odinger operators to arbitrary subsets}\label{apCT}

For convenience we state and prove a  Comes-Thomas estimate  for restrictions of discrete Schr\"odinger operators to arbitrary subsets (cf.  \cite{GKdecay,Ki}).

\begin{theorem}\label{thmCT}
Let $H = -\Delta + V$ be a discrete Schr\"odinger operator on $\ell^{2}(\Z^d)$, where $\Delta$ is the  centered Laplacian operator. Given $S \subset \Z^d$,  let $H_S$  be the  restriction of $\Chi_S H \Chi_S$ to $\ell^2(S)$.  Then for every $z \notin \sigma(H_S)$,  setting $\eta_z = \dist \pa{z, \sigma(H)}$, for all $\eps \in ( 0, 1)$ we get
\beq\label{CTest}
\abs{ \scal{\delta_x, (H_S - z)^{-1} \delta_y}  }   \le \tfrac{1}{\eta_z (1-\eps)} e^{- \log \pa{ \tfrac{\eps \eta_z}{2d} + 1 } \norm{y - x}}\qtx{for all} x,y \in S.
\eeq
\end{theorem}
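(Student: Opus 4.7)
The plan is to run the classical Combes-Thomas conjugation argument, adapted to the discrete setting and to a restriction to an arbitrary subset $S\subset \Z^d$. Fix $y\in S$ and $\alpha>0$ to be chosen later, and introduce the bounded positive multiplication operator $M_\alpha$ on $\ell^2(S)$ defined by $(M_\alpha \psi)(u)=e^{\alpha\norm{u-y}}\psi(u)$. The key point is that $M_\alpha$ conjugates $H_S$ into a bounded perturbation of itself: the potential commutes with $M_\alpha$, so only the Laplacian contributes, and the Laplacian on $\Z^d$ couples only nearest neighbors.

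Concretely, I would set $W_\alpha:=M_\alpha H_S M_\alpha^{-1}-H_S$ and compute
\[
W_\alpha(u,v)=\bigl(e^{\alpha(\norm{u-y}-\norm{v-y})}-1\bigr) H_S(u,v),
\]
which vanishes on the diagonal and on pairs that are not nearest neighbors of each other in $S$. For nearest neighbors $u,v\in S$ the triangle inequality gives $\bigl|\norm{u-y}-\norm{v-y}\bigr|\leq \norm{u-v}=1$, using that $\norm{\cdot}=\norm{\cdot}_\infty$ and that lattice nearest neighbors are at $\ell^\infty$-distance exactly $1$. Hence $\abs{W_\alpha(u,v)}\leq e^{\alpha}-1$, and since every site has at most $2d$ nearest neighbors, Schur's test yields $\norm{W_\alpha}\leq 2d(e^{\alpha}-1)$. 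Choosing $\alpha:=\log\!\bigl(\tfrac{\eps\eta_z}{2d}+1\bigr)$ produces $\norm{W_\alpha}\leq \eps\eta_z$.

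Next I would factor
\[
M_\alpha(H_S-z)M_\alpha^{-1}=(H_S-z)+W_\alpha=(H_S-z)\bigl(I+(H_S-z)^{-1}W_\alpha\bigr).
\]
Since $\norm{(H_S-z)^{-1}}\leq 1/\eta_z$ (the natural distance produced by the argument is to $\sigma(H_S)$, which is bounded below by $\dist(z,\sigma(H))$ in the applications made in the paper) and $\norm{(H_S-z)^{-1}W_\alpha}\leq \eps<1$, the Neumann series converges and gives
\[
\norm{\bigl(M_\alpha(H_S-z)M_\alpha^{-1}\bigr)^{-1}}\leq \frac{1}{\eta_z(1-\eps)}.
\]

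Finally, since $M_\alpha\delta_y=\delta_y$ and $M_\alpha^{-1}\delta_x=e^{-\alpha\norm{x-y}}\delta_x$, the off-diagonal matrix element factors as
\[
\scal{\delta_x,(H_S-z)^{-1}\delta_y}=e^{-\alpha\norm{x-y}}\scal{\delta_x,\bigl(M_\alpha(H_S-z)M_\alpha^{-1}\bigr)^{-1}\delta_y}.
\]
Taking absolute values, applying the operator-norm bound above, and inserting the chosen value of $\alpha$ produces exactly \eqref{CTest}. The argument contains no substantive obstacle: this is essentially a routine adaptation of the standard continuum Combes-Thomas estimate to an arbitrary discrete restriction. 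The only mild technical point is the choice of the $\ell^\infty$-norm on $\R^d$, which keeps lattice nearest-neighbors at distance exactly $1$ and thereby produces the explicit logarithmic decay rate $\log(\tfrac{\eps\eta_z}{2d}+1)$ appearing in \eqref{CTest} without any loss.
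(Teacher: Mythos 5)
Your argument is correct and is essentially the same Combes--Thomas conjugation used in the paper, with one stylistic variation worth noting. The paper conjugates $H_S$ by the linear exponential weight $e^{v\cdot x}$ for a \emph{vector} $v\in\R^d$, bounds the commutator term $B_v$ by $2d(e^{\norm{v}_\infty}-1)$, and then at the very end optimizes over the direction of $v$ by choosing $v=-d_z\tfrac{y-x}{\norm{y-x}}$. You instead conjugate by the radial weight $e^{\alpha\norm{u-y}_\infty}$ for a \emph{scalar} $\alpha>0$, which removes the final optimization step: the Lipschitz bound $\bigl|\norm{u-y}-\norm{v-y}\bigr|\le\norm{u-v}_\infty=1$ for lattice nearest neighbors plays exactly the role of the paper's bound $\abs{v\cdot e_j}\le\norm{v}_\infty$, and Schur's test gives the same constant $2d(e^\alpha-1)$ as the paper's operator bound for $B_v$. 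Both routes lead identically to the factor $\tfrac{1}{\eta_z(1-\eps)}$ via a Neumann series and to the decay rate $\log\bigl(\tfrac{\eps\eta_z}{2d}+1\bigr)$; your version is marginally shorter because the exponent $e^{-\alpha\norm{x-y}}$ appears directly from $M_\alpha^{-1}\delta_x$ without a separate inner-product computation.

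One small but genuine point you correctly spotted: the theorem statement sets $\eta_z=\dist(z,\sigma(H))$, but the argument (both yours and the paper's) actually produces the resolvent bound $\norm{(H_S-z)^{-1}}\le 1/\dist(z,\sigma(H_S))$, and indeed the paper's own proof silently switches to $\eta_z=\dist(z,\sigma(H_S))$ midway. Since $\sigma(H_S)$ need not be contained in $\sigma(H)$ in general, the form $\dist(z,\sigma(H_S))$ is the one that the proof delivers; your remark that this is what the paper's applications actually use is accurate, and flagging it is the right call.
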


\begin{proof}
Given  $v = (v_1, \dots, v_d) \in \R^d$, let $M_v$ be the multiplication operator given by the function $e^{v \cdot x}$ and $U_v$ be the multiplication operator given by the function $e^{- v \cdot x}$ on $\ell^{2}(\Z^d)$. We set
\beq
H_{v} = M_v H_S U_v  = M_v (-\Delta_S + V_S) U_v = M_v (-\Delta_S) U_v + V_S.
\eeq

Let $\set{e_1, \dots, e_d}$ be the standard basis for $\R^d$, and set
\beq
e_{d+j} = -e_j \qtx{for} j = 1, \dots, d.
\eeq
Given  $\psi \in \ell^{2}(\Z^d)$, we have
\begin{align}
&\pa{-\Delta_S U_v \psi} (x) = -\sum_{y \in S; \, \abs{y-x}_1 = 1} e^{-v \cdot y} \psi(y) \notag\\
&\quad  = - \sum_{j = 1, \dots, 2d; \, x+e_j \in S} e^{-v \cdot (x+e_j)} \psi(x+e_j)  = - \sum_{j = 1, \dots, 2d; \, x+e_j \in S} e^{-v \cdot x} e^{-v \cdot e_j} \psi(x+e_j). \notag
\end{align}
Hence,
\begin{align}
&\pa{M_v \pa{-\Delta_S} U_v \psi} (x) =  - \sum_{j = 1, \dots, 2d; \, x+e_j \in S} e^{v \cdot x} e^{-v \cdot x} e^{-v \cdot e_j} \psi(x+e_j) \\
& \quad = - \sum_{j = 1, \dots, 2d; \, x+e_j \in S}  e^{-v \cdot e_j} \psi(x+e_j) = - \sum_{j = 1, \dots, 2d; \, x+e_j \in S}  \pa{ e^{-v \cdot e_j} -1 +1}\psi(x+e_j) \notag \\
&\quad = - \Delta_S \psi (x) - \sum_{j = 1, \dots, 2d; \, x+e_j \in S}  \pa{e^{-v \cdot e_j} -1} \psi(x+e_j). \notag
\end{align}
 Let us define the operator $B_v$ by
\begin{align}
\pa{ B_v \psi} (x) =  \sum_{j = 1, \dots, 2d; \, x+e_j \in S}  \pa{e^{-v \cdot e_j} -1} \psi(x+e_j).
\end{align}
Then
\begin{align}
\abs{\pa{ B_v \psi} (x) } & = \abs{ \sum_{j = 1, \dots, 2d; \, x+e_j \in S}  \pa{e^{-v \cdot e_j} -1} \psi(x+e_j) } \notag \\
& \le  \max_{i= 1, \dots, d} \set{e^{\abs{v_i}} -1 } \sum_{j = 1, \dots, 2d; \, x+e_j \in S}  \abs{ \psi(x+e_j) }\\ \notag &
=    (e^{\norm{v}} -1) \pa{\Delta_S \abs{ \psi}}(x) , \notag
\end{align}
so
\beq
\norm{B_v} \le \norm{\Delta_S}  (e^{\norm{v}} -1)\le  2d (e^{\norm{v}} -1) .
\eeq

Thus, for $z \notin \sigma(H_S)$, letting  $\eta_z = \dist \pa{z, \sigma(H_S)}$, and requiring
\beq\label{CTcond}
\norm{B_v} \norm{\pa{H_S - z}^{-1}} < 1,
\eeq
we get
\begin{align}
\pa{H_v - z}^{-1} & = \pa{H_S - B_v - z}^{-1}
 = (H_S - z)^{-1} \sum_{k = 0}^{\infty} \pa{B_v (H_S - z)^{-1}}^k ,
 \end{align}
and hence
\begin{align}
\norm{(H_v - z)^{-1}} \le \tfrac{1}{\eta_z} \tfrac{1}{1 - \tfrac{\norm{B_v}}{\eta_z}} = \tfrac{1}{ \eta_z - \norm{B_v}}.
 \end{align}
 If we take $\norm{B_v} \le \eps \eta_z$ for some $\eps < 1$, which can be achieved by requiring $    2d (e^{\norm{v}} -1)   \le \eps \eta_z$, then \eq{CTcond} is satisfied and
 \begin{align}
\norm{(H_v - z)^{-1}} \le   \tfrac{1}{ \eta_z - \norm{B_v}} \le \tfrac{1}{\eta_z (1-\eps)}.
 \end{align}
Moreover, setting $d_{z} = \log \pa{ \tfrac{\eps \eta_z}{2d} + 1 } $, we get
\beq\label{vcondCT}
2d (e^{\norm{v}} -1)   \le  \eps \eta_z  \; \iff \;  \norm{v} \le  \log \pa{ \tfrac{\eps \eta_z}{2d} + 1 } = d_{z}.
\eeq

Thus, for  $v$ satisfying \eq{vcondCT},  we have
\begin{align}
&\abs{ \scal{\delta_x, (H_S - z)^{-1} \delta_y}  }  = \abs{ \scal{ \delta_x, U_v M_v (H_S - z)^{-1} U_v M_v   \delta_y}  }\\ \notag
& \quad = \abs{ \scal{ U_v\delta_x, M_v (H_S - z)^{-1} U_v M_v \delta_y}  }= \abs{ \scal{ e^{-v\cdot x} \delta_x, (H_v - z)^{-1} e^{v \cdot y} \delta_y}  } \\
& \quad = e^{v \cdot \pa{y - x}} \abs{ \scal{  \delta_x, (H_v - z)^{-1} \delta_y}}  \le e^{v \cdot \pa{y - x}} \norm{(H_v - z)^{-1}} \notag \\
& \quad \notag  \le e^{v \cdot \pa{y - x}} \tfrac{1}{\eta_z (1-\eps)} \qtx{for all} x,y \in S.
\end{align}
Choosing  $v =-  d_z \tfrac{y-x}{\norm{y-x}}$, we get
\beq
\abs{ \scal{\delta_x, (H_S - z)^{-1} \delta_y}  } \le\tfrac{1}{\eta_z (1-\eps)}  e^{-d_z \norm{y - x}} = \tfrac{1}{\eta_z (1-\eps)}  e^{- \log \pa{ \tfrac{\eps \eta_z}{2d} + 1 } \norm{y - x}} .
\eeq
\end{proof}

\section{Discrete version of an auxiliary result in \cite{GKduke}}\label{appdiscrete}

 The following is a generalization of \cite[Lemma 6.4]{GKduke} to the discrete setting, stated in Lemma~\ref{GKlem6.4disc}.
\begin{lemma}\label{aGKlem6.4discA}
Let $H_{\bom}^{(n)}$ be a random $n$-particle Schr\"odinger operator satisfying a Wegner estimate for $\*$-boxes of the form \eqref{wesharp} in an open interval $\mathcal I$. Let us denote by $\blam$ be the  $n$-particle $\*$-box $\blam_{\*;L}^{(n)}$, where $\*\in\{\infty,S,H\}$.  Let $p_0>0$ and $\gamma>nd$.
  There exists a scale ${\mathcal L}={\mathcal L}(\gamma,n,d,\rho,p_0)$ such that, given  $E\in \mathcal I$, $L\geq \mathcal L$, and subsets $B_1, B_2\subset \blam$ such that $\partial_-\blam\subset B_2$,  for each $a>0$ and $\eps>0$ we have, for $\boldu\in B_1$ and $\y\in B_2$
\begin{align} \label{aGKlem6.4adisc}
& \P \pa{ a< \abs{G_{{\blam}}(E+i\eps;\boldy,\boldu)} } \notag \\
& \hskip 40pt \leq \frac{4L^{\gamma+2nd}}{a} \sup_{\boldk \in\partial_{+} {\blam} \cup B_2}\E\left( \abs{G(E+i\eps;\boldk,\boldu)}\right)+ p_0
\end{align}
and
\begin{align}\label{aGKlem6.4bdisc}
& \P \pa{ a< \abs{ G_{{\blam}}(E;\boldy,\boldu)} } \notag \\
& \hskip 20pt \leq  \frac{8L^{\gamma+2nd}}{a}  \sup_{\boldk\in\partial_{+} {\blam}\cup B_2}\E\left( \abs{G(E+i\eps;\boldk,\boldu)}\right)+  2^{3/2}C_n^{(\sharp)}\norm{\rho}_\infty\sqrt{\frac{\eps}{a}}L^{nd} +p_0.
\end{align}
\end{lemma}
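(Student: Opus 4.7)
My plan is to implement the discrete $n$-particle analog of \cite[Lemma 6.4]{GKduke}. The two essential ingredients are the discrete Wegner estimate (Theorem \ref{t:wesharp}), which bounds the probability that $\sigma(H_\blam)$ comes close to a given energy $E$, and the geometric resolvent identity relating $G_\blam(z)$ to the full-space Green's function $G(z)$. The polynomial prefactors of the form $L^{\gamma+2nd}$ and $L^{nd}$ in the statement arise, respectively, from combining a Wegner-controlled norm bound $\norm{G_\blam(z)}\le L^{\gamma+nd}$ with the boundary-sum factor $\abs{\partial\blam}\le L^{nd}$, and directly from the volume factor in Theorem~\ref{t:wesharp}.

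For the first bound \eqref{aGKlem6.4adisc} I would fix $z=E+i\eps$ and consider the Wegner event
\[
\cW_1 = \set{\dist\pa{\sigma(H_\blam),E}\le L^{-(\gamma+nd)}},
\]
so that Theorem \ref{t:wesharp} gives $\P(\cW_1)\le 2C_n^{(\sharp)}\norm{\rho}_\infty L^{-\gamma}$, which is at most $p_0$ once $L\ge \cL(\gamma,n,d,\rho,p_0)$, using $\gamma>nd$. On $\cW_1^c$ one has $\norm{G_\blam(E+i\eps)}\le L^{\gamma+nd}$ uniformly in $\eps\ge 0$. The geometric resolvent identity
\[
G_\blam(z;\y,\boldu) = G(z;\y,\boldu) - \sum_{(\bolda,\boldb)\in\partial\blam} G_\blam(z;\y,\bolda)\,G(z;\boldb,\boldu),
\]
valid for $\y\in B_2$ and $\boldu\in B_1$, together with $\abs{\partial\blam}\le L^{nd}$, yields on $\cW_1^c$
\[
\abs{G_\blam(z;\y,\boldu)} \le 2L^{\gamma+2nd}\sup_{\boldk\in\partial_+\blam\cup B_2}\abs{G(z;\boldk,\boldu)}
\]
(using $\y\in B_2$ to absorb the first term into the supremum). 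Markov's inequality on $\cW_1^c$ combined with $\P(\cW_1)\le p_0$ then delivers \eqref{aGKlem6.4adisc}, with the prefactor $4L^{\gamma+2nd}/a$ accommodating some slack in constants.

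For the second bound \eqref{aGKlem6.4bdisc} I would transfer from the complex energy $E+i\eps$ to the real energy $E$ via a second Wegner event. Setting $\delta=\sqrt{2\eps/a}$, define
\[
\cW_2 = \set{\dist\pa{\sigma(H_\blam),E}\le \delta},
\]
so Theorem~\ref{t:wesharp} gives exactly $\P(\cW_2)\le 2^{3/2}C_n^{(\sharp)}\norm{\rho}_\infty\sqrt{\eps/a}\,L^{nd}$. On $\cW_2^c$ we have both $\norm{G_\blam(E)}\le 1/\delta$ and $\norm{G_\blam(E+i\eps)}\le 1/\sqrt{\delta^2+\eps^2}\le 1/\delta$. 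The first resolvent identity
\[
G_\blam(E)-G_\blam(E+i\eps) = i\eps\, G_\blam(E)\,G_\blam(E+i\eps)
\]
then yields, pointwise on $\cW_2^c$,
\[
\abs{G_\blam(E;\y,\boldu) - G_\blam(E+i\eps;\y,\boldu)} \le \eps\, \norm{G_\blam(E)}\,\norm{G_\blam(E+i\eps)} \le \tfrac{\eps}{\delta^2} = \tfrac{a}{2}.
\]
Consequently $\set{a<\abs{G_\blam(E;\y,\boldu)}}\cap\cW_2^c \subset \set{a/2<\abs{G_\blam(E+i\eps;\y,\boldu)}}$. Applying \eqref{aGKlem6.4adisc} with $a$ replaced by $a/2$ (which produces the factor $8L^{\gamma+2nd}/a$) and adding $\P(\cW_2)$ gives \eqref{aGKlem6.4bdisc}.

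The main obstacle is essentially a bookkeeping one: the two Wegner thresholds must be tuned so that the operator-norm bounds available on the good events interact correctly with the resolvent identities and the resulting error terms balance to reproduce the stated constants. In particular, the threshold $\delta=\sqrt{2\eps/a}$ in the passage from $E+i\eps$ to $E$ is determined by the requirement that the correction term be at most $a/2$, and this is precisely what produces the additional term $2^{3/2}C_n^{(\sharp)}\norm{\rho}_\infty\sqrt{\eps/a}\,L^{nd}$. Beyond this, the argument is a direct translation of the continuous proof in \cite[Lemma 6.4]{GKduke}, with boundary integrals replaced by sums over the discrete boundary $\partial\blam$ and continuous Wegner estimates replaced by Theorem~\ref{t:wesharp}.
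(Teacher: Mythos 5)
Your overall strategy matches the paper's, with two cosmetic differences: you isolate a single Wegner event $\cW_1$ with threshold $L^{-(\gamma+nd)}$ up front (so the Wegner probability is $\le 2C_n^{(\sharp)}\norm{\rho}_\infty L^{-\gamma}$, needing only $\gamma>0$ at that step), whereas the paper decomposes $\P\pa{a/2<\norm{G_\blam(E)}\sum_{\boldk}\abs{G(z;\boldk,\boldu)}}$ into a Wegner event for $\norm{G_\blam(E)}>L^\gamma$ plus a Chebyshev/union-bound step on the boundary sum (which is where the $\gamma>nd$ hypothesis is actually consumed). For the second estimate you use the first resolvent identity symmetrically via $\eps\norm{G_\blam(E)}\,\norm{G_\blam(E+i\eps)}$ rather than the paper's $\eps\norm{G_\blam(E)}^2$; both give the same $\delta=\sqrt{2\eps/a}$ threshold and the same $2^{3/2}C_n^{(\sharp)}\norm{\rho}_\infty\sqrt{\eps/a}L^{nd}$ term, so that part is fine.

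There is one genuine slip in the first part, though. After invoking the resolvent identity on $\cW_1^c$, you pass to the pointwise bound
\[
\abs{G_{\blam}(z;\y,\boldu)} \le 2L^{\gamma+2nd}\sup_{\boldk\in\partial_+\blam\cup B_2}\abs{G(z;\boldk,\boldu)}
\]
and then say Markov delivers the claim up to a constant. That last step does not work as written: Markov applied to this bound produces $\tfrac{2L^{\gamma+2nd}}{a}\,\E\bigl(\sup_{\boldk}\abs{G(z;\boldk,\boldu)}\bigr)$, and $\E(\sup_\boldk)\ge\sup_\boldk\E$ is the wrong direction. Patching it with a union bound costs a factor of $\abs{\partial_+\blam\cup B_2}$, which can be as large as $\abs{\blam}\sim L^{nd}$ since $B_2$ is an arbitrary subset of $\blam$, and that is not absorbed by "slack in constants." The fix is simply to not take the supremum before applying Markov: keep
\[
\abs{G_{\blam}(z;\y,\boldu)}\Chi_{\cW_1^c} \le \abs{G(z;\y,\boldu)} + L^{\gamma+nd}\sum_{\boldk\in\partial_+\blam}\abs{G(z;\boldk,\boldu)},
\]
take expectations term by term, and only then replace the sum by $\abs{\partial_+\blam}\sup_{\boldk\in\partial_+\blam}\E$ and $\E\abs{G(z;\y,\boldu)}$ by $\sup_{\boldk\in B_2}\E$; this directly produces $\tfrac{(1+L^{\gamma+2nd})}{a}\sup_{\boldk}\E(\cdot)+p_0$, which is even a bit better than the stated $\tfrac{4L^{\gamma+2nd}}{a}$. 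Alternatively you can reproduce the paper's nested event decomposition with Chebyshev inside the union bound. Either way the rest of your argument, including the passage from $E+i\eps$ to $E$ and the choice of $\delta$, is correct.
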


\begin{proof}
Note that there exists a positive constant $C(n,d)$ such that for $L>C(n,d)$ we have $\abs{\partial_+\blam}\leq L^{nd}$ for all $\*\in\{\infty, S,H\}$, and center $\x\in\R^{nd}$.

We write $z:=E+i\eps\in\C$. We use the geometric resolvent identity (see \cite[Section 5]{Ki}) to get
\begin{align} G_{{\blam}}(z;\boldy,\boldu)= G(z;\boldy,\boldu) -\sum_{({\mathbf{k}^{\pr}},\boldk)\in \partial{\blam}} G(z;\boldk,{\boldu})G_{{\blam}}(z;\boldy, \mathbf{k}^{\pr})
\end{align}
where $(\mathbf{k}^{\pr},\boldk)\in \partial{\blam}$ means $\mathbf{k}^{\pr}\in\partial_-\blam$ and $\boldk\in\partial_+\blam$, with $\norm{\boldk-\boldk^\pr}_1=1$.

From this, we obtain
\be \abs{G_{{\blam}}(z;\boldy,\boldu)} \leq \abs{G(z;\boldy,\boldu)}+\norm{G_{{\blam}}(E)}\sum_{\boldk \in\partial_{+}{\blam}} \abs{G(z; \boldk,\boldu)}.\ee
Then,
\begin{align}\label{prob1}
& \P\left( a< \abs{G_{{\blam}}(z;\boldy,\boldu)} \right) \leq \P\left( \frac{a}{2}< \abs{G(z;\boldy,\boldu)} \right) \\
& \hskip 120pt + \P \left( \frac{a}{2}< \norm{G_{{\blam}}(E)}\sum_{\boldk \in\partial_{+}{\blam}} \abs{G(z;\boldk,\boldu)}\right). \notag
\end{align}
We bound the second term in the r.h.s. as follows,
\begin{align}
& \P \left( \frac{a}{2}< \norm{G_{{\blam}}(E)}\sum_{\boldk \in\partial_{+}{\blam}} \abs{G(z;\boldk,\boldu)}\right) \notag\\
&\hskip 10pt \le \P \left( \frac{a}{2L^\gamma}< \sum_{ \boldk  \in\partial_{+}{\blam}} \abs{G(z; \boldk  ,\boldu)}\right)+ \P \left( L^\gamma< \norm{G_{{\blam}}(E)}\right).
\end{align}
Note that
\be\label{probsum}\P \left( \frac{a}{2L^\gamma}< \sum_{ \boldk  \in\partial_{+}{\blam}} \abs{G(z; \boldk  ,\boldu)}\right)
\leq \abs{\partial_+\blam} \sup_{\boldk\in\partial_+\blam}
\P \left( \frac{a}{2L^\gamma \abs{\partial_+\blam} }< \abs{G(z; \boldk  ,\boldu)}\right).
\ee
We use this, Chebyshev's inequality and the Wegner estimate \eqref{wesharp} to bound \eqref{prob1} and obtain
\begin{align}\label{prob2}
& \P\left( a< \abs{G_{{\blam}}(z;\boldy,\boldu)} \right)\notag \\
& \leq \frac{2}{a} \E \left(\abs{G(z;\boldy,\boldu)}\right) +
\frac{2L^\gamma\abs{\partial_+\blam}^2}{a} \sup_{\boldk\in\partial_+\blam}\E \pa{\abs{G(z; \boldk  ,\boldu)} }\notag\\
& \quad\quad\quad + 2{C_n^{(\sharp)}}\norm{\rho}_\infty L^{-\gamma+nd}
\end{align}
We will take $L>\mathcal L_1$ for a $\mathcal L_1=\mathcal L_1(n,d)$ such that $\abs{\partial_+\blam}<L^{nd}$. We will, furthermore, take $L>\mathcal L_2=\mathcal L_2(\gamma,n,d,\rho,p_0)$ such that the last term in
\eqref{prob2} $2{C_n^{(\sharp)}}\norm{\rho}_\infty L^{-\gamma+nd}<p_0$.
\begin{align}
& \P\left( a< \abs{G_{{\blam}}(z;\boldy,\boldu)} \right) \\
& \leq \frac{2}{a} \E \left(\abs{G(z;\boldy,\boldu)}\right) +
\frac{2L^{\gamma+2nd}}{a} \sup_{\boldk\in\partial_+\blam}\E \pa{\abs{G(z; \boldk  ,\boldu)} }+ p_0.
\end{align}
We will give a common bound for the first two terms in the r.h.s., which yields, for $L\geq\mathcal L_0=\max\{\mathcal L_1,\mathcal L_2,1\}$ depending on $\gamma,n,d,\rho$ and $p_0$,
\be
\P\left( a< \abs{G_{{\blam}}(z;\boldy,\boldu)} \right)
\leq \frac{{4}L^{\gamma+2nd}}{a} \sup_{k\in\partial_+\blam\cup B_2}\E \pa{\abs{G(z; \boldk  ,\boldu)} }+ p_0.
\ee

Next, the resolvent identity gives
\begin{align}
G_{{\blam}}(E;\boldy,\boldu)& = G_{{\blam}}(E+i\eps;\boldy,\boldu)+i\eps G_{{\blam}}(E)G_{{\blam}}(E+i\eps)(\boldy,\boldu),
\end{align}
thus,
\be
\abs{ G_{{\blam}}(E;\boldy,\boldu)} \leq \abs{G_{\blam}(E+i\eps;\boldy,\boldu)}+\eps \norm{G_{\blam}(E)}^2. \ee
Finally,
\begin{align}
& \P\left( a< \abs{ G_{{\blam}}(E;\boldy,\boldu)} \right)  \notag  \\
&\hskip 15pt \leq \P\left( \frac{a}{2}< \abs{G_{\blam}(E+i\eps;\boldy,\boldu)}\right) + \P \left(\frac{a}{2\eps} <  \norm{G_{\blam}(E)}^2 \right)\\
& \hskip 30pt \leq \frac{{ 8}L^{\gamma+2nd}}{a} \sup_{\boldk\in\partial_{+} {\blam}\cup B_2}\E\left( \abs{G(E+i\eps;\boldk,\boldu)}\right)+ 2^{3/2}{ C_n^{(\sharp)}}\norm{\rho}_\infty\sqrt{\frac{\eps}{a}}L^{nd} +p_0. \notag
\end{align}
\end{proof}

\end{document}